\newtheorem{theorem}{Theorem}[section]
\newtheorem{cor}[theorem]{Corollary}
\newtheorem{lemma}[theorem]{Lemma}
\newtheorem{prop}[theorem]{Proposition}
\theoremstyle{definition} 
\newtheorem{definition}[theorem]{Definition}
\newtheorem{quest}[theorem]{Question}
\theoremstyle{remark}
\numberwithin{equation}{section}
\DeclareMathOperator{\tr}{tr}
\DeclareMathOperator{\Tr}{Tr}
\DeclareMathOperator{\vect}{vec}
\DeclareMathOperator{\mat}{mat}
\DeclareMathOperator{\aux}{aux}
\DeclareMathOperator{\fin}{fin}
\DeclareMathOperator{\op}{op}
\newcommand{\norm}[1]{\left\lVert #1 \right\rVert}
\newcommand{\abs}[1]{\left\lvert #1\right\rvert}
\title{Robust Self-Testing for Synchronous Correlations and Games}
\author{Prem Nigam Kar}
\address{DTU Compute, Technical University of Denmark, 2800 Kongens Lyngby, Denmark.}
\email{pkar@dtu.dk}
\thanks{All authors are supported by Carlsberg Semper Ardens Accelerate CF21-0682 Quantum Graph Theory.}
\date{}
\begin{document}

\begin{abstract}
We develop an abstract operator-algebraic characterization of robust self-testing for synchronous correlations and games. Specifically, we show that a synchronous correlation is a robust self-test if and only if there is a unique state on an appropriate $C^*$-algebra that ``implements" the correlation. Extending this result, we prove that a synchronous game is a robust self-test if and only if its associated $C^*$-algebra admits a unique amenable tracial state. This framework allows us to establish that all synchronous correlations and games that serve as commuting operator self-tests for finite-dimensional strategies are also robust self-tests. As an application, we recover sufficient conditions for linear constraint system games to exhibit robust self-testing. We also demonstrate the existence of a synchronous nonlocal game that is a robust self-test but not a commuting operator self-test, showing that these notions are not equivalent. 
\end{abstract}

\maketitle

\section{Introduction}

Self-testing, first introduced by Mayers and Yao~\cite{mayers2004self}, involves certification of quantum devices from "classical" data such as measurement statistics (see~\cite{vsupic2020self} for a review). Over the years, it has developed into an active field of research with a range of applications including device-independent cryptography~\cites{mayers1998quantum, mayers2004self}, delegated quantum computation~\cite{coladangelo2024verifier}, entanglement detection~\cites{bowles2018device, bowles2018self}, and quantum complexity theory~\cites{fitzsimons2019quantum, natarajan2017quantum, natarajan2019neexp}. It also played a major role in the recent breakthrough result $\textsf{MIP}^*= \textsf{RE}$~\cite{ji2021mip}, which lead to the resolution of longstanding open problems of Tsirelson, Connes, and Kirchberg.  

A game or a correlation is said to be a self-test if there is a unique strategy that is optimal for the game or if there is a unique strategy producing the correlation, up to local isometries. However, in order for self-tests to be useful for practical applications, they need to be robust. In other words, we require that any strategy that is near optimal for the game, or a strategy that produces a correlation close the prescribed one, is close to the ideal or honest strategy up to local isometries. 

While it is often much more difficult to show that a correlation or a game is a robust self-test than it is to show that it is a self-test, most attempts to do so have been successful. Until recently, all known examples of self-tests were also known to be robust. This prompted Man\v{c}inska and Schmidt \cite{mancinska_counterexamples_2023} to investigate if all self-tests are robust, a question which they answered in the negative by constructing a game that is a self-test, but is not a robust self-test, thus showing that robust self-testing is a strictly rarer phenomenon. However, this counterexample is carefully constructed with the intention of being a self-test that is not robust. Hence, one can ask the following natural question: 

\begin{quest}\label{quest-1}
    Is there a class of nonlocal games or correlations, for which all self-tests are also robust? Is there an appropriate natural definition of self-testing under which all self-tests are also robust?
\end{quest} 

One widely successful approach to showing that a correlation or a game is a robust self-test has been to associate optimal strategies with representations of an appropriate algebraic object. In particular, a $C^*$-algebraic characterization of self-testing was obtained in \cite{paddock_operator-algebraic_2023}, where it was shown that a correlation (under certain reasonable assumptions) is a self-test if and only if there is a unique finite-dimensional state on the $C^*$-algebra $\mathcal{A}_{XA}^{POVM} \otimes_{\min} \mathcal{A}_{XA}^{POVM}$ associated with the correlation. This approach is well suited to tackle \cref{quest-1} as it can reduce this question to a problem about states on $C^*$-algebras.

Hence, with the intention of making progress towards \cref{quest-1}, it is natural to ask if robust self-testing can be characterized in terms of "abstract state self-testing". Formally, we ask:  

\begin{quest}\label{quest-2}
    Is it possible to characterize robust self-testing in terms of abstract state self-testing over some appropriate subset of states on $\mathcal{A}_{XA}^{POVM} \otimes_{\min}\mathcal{A}_{XA}^{POVM}$?     
\end{quest}

Synchronous nonlocal games and Linear Constraint System games are two prominent and well studied classes of nonlocal games. Moreover, their perfect strategies can be characterized in terms of existence of tracial states on the $C^*$-algebra of a synchronous game, or in terms of existence of appropriate representations of the solution group in the case of LCS games, making them prime candidates for \cref{quest-1}. It is (relatively) easy to show that if the algebraic objects associated to these games admit a unique state or representation, then these games are self-tests. As a more concrete approach to \cref{quest-1}, we ask the following:

\begin{quest}\label{quest-3}
    If the $*$-algebra of a synchronous game admits a unique tracial state, or if the solution group of an LCS game admits a unique representation, then are these games robust self-tests?
\end{quest}

A salient feature of the $C^*$-algebraic approach to self-testing introduced in \cite{paddock_operator-algebraic_2023} is that it easily generalises to a definition of self-testing for the commuting operator model. All known examples of self-tests are also known to be commuting operator self-tests. Hence, in light of the separation of tensor-product quantum correlations and quantum commuting correlations obtained by \cite{ji2021mip}, one natural question is to ask the following: 

\begin{quest}\label{quest-4}
    Are all robust self-tests also commuting operator self-tests?
\end{quest}

We set out to answer these questions, but first we state our results more formally. 

\subsection*{Results}

Our main result, answering \cref{quest-2} for synchronous correlations, is an extension of abstract state self-testing to the robust case formalized in the following theorem: 
\begin{theorem}[Main Result]\label{thm:main-result}
    Let $\widetilde{p} = \{\widetilde{p}(a,b\mid x,y)\}_{x,y \in X, a, b \in A}$ be a synchronous quantum correlation that is an extreme point of $C_q(X,X,A,A)$. Then, the following are equivalent: 
    \begin{enumerate}[label = \roman*.]
        \item  There is a unique state $\widetilde{\rho}$ on $\mathcal{A}_{XA}^{POVM} \otimes_{\min} \mathcal{A}_{XA}^{POVM}$ that implements $\widetilde{p}$. 
        \item The correlation $\widetilde{p}$ robustly self-tests a strategy $(\mathbb{C}^d, \mathbb{C}^d, \{\widetilde{E}_{x,a}\}, \{\widetilde{F}_{y,b}\}, \ket{\widetilde{\psi}}).$ 
    \end{enumerate}
\end{theorem}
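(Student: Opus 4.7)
The plan is to exploit the well-known correspondence between states on $\mathcal{A}_{XA}^{POVM}\otimes_{\min}\mathcal{A}_{XA}^{POVM}$ and synchronous strategies producing $\widetilde{p}$: every strategy $(\mathcal{H}_A,\mathcal{H}_B,\{E_{x,a}\},\{F_{y,b}\},\ket{\psi})$ determines a state $\rho(E\otimes F):=\braket{\psi|E\otimes F|\psi}$, and the GNS construction returns a (possibly infinite-dimensional) strategy from any state. The extreme-point hypothesis on $\widetilde{p}$ in $C_q(X,X,A,A)$ is used to guarantee that implementing states cannot be decomposed as nontrivial convex combinations of other implementing states, so weak-$*$ limits obtained from compactness arguments stay within the implementing set rather than drifting to a mixture of extreme points.

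For the direction (ii) $\Rightarrow$ (i), I would assume robust self-testing of the ideal strategy $(\mathbb{C}^d,\mathbb{C}^d,\widetilde{E},\widetilde{F},\ket{\widetilde{\psi}})$ and let $\rho$ be any state implementing $\widetilde{p}$. First I would approximate $\rho$ by states $\rho_n$ arising from finite-dimensional strategies $\mathcal{S}_n$ producing correlations $p_n\to\widetilde{p}$, using density of finite-dimensional representations in the state space of the minimal tensor product (available here because the algebras involved admit enough finite-dimensional quotients). By robustness, there exist local isometries bringing each $\mathcal{S}_n$ within $\epsilon_n\to 0$ of the ideal strategy. Pushing forward the $\rho_n$ through these isometries and passing to the weak-$*$ limit forces $\rho$ to agree with the state induced by the ideal strategy, giving uniqueness.

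For the harder direction (i) $\Rightarrow$ (ii), I would argue by contradiction: if $\widetilde{p}$ is not a robust self-test, there exist $\epsilon>0$ and a sequence of finite-dimensional strategies $\mathcal{S}_n$ with $p_n\to\widetilde{p}$ but no local isometries witnessing $\epsilon$-closeness to $(\mathbb{C}^d,\mathbb{C}^d,\widetilde{E},\widetilde{F},\ket{\widetilde{\psi}})$. Each $\mathcal{S}_n$ defines a state $\rho_n$ on $\mathcal{A}_{XA}^{POVM}\otimes_{\min}\mathcal{A}_{XA}^{POVM}$. By Banach--Alaoglu, a weak-$*$ convergent subsequence $\rho_{n_k}\to\rho^*$ exists; continuity of evaluation at the generators shows $\rho^*$ implements $\widetilde{p}$, so by hypothesis $\rho^*=\widetilde{\rho}$. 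Since $\widetilde{\rho}$ has finite-dimensional GNS data (by the extreme-point assumption, essentially a pure state with minimal GNS dimension $d$), the weak-$*$ convergence $\rho_{n_k}\to\widetilde{\rho}$ should be upgradable to approximate isometric equivalence of GNS representations, yielding for large $k$ local isometries bringing $\mathcal{S}_{n_k}$ within $\epsilon$ of the ideal strategy and contradicting the failure of robustness.

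The main technical obstacle is converting weak-$*$ state convergence in the last step into explicit local isometries with quantitative closeness guarantees. This requires identifying the ideal strategy with the GNS data of $\widetilde{\rho}$ uniquely up to unitary equivalence---which is precisely where the extreme-point hypothesis does essential work---and then running an approximate-intertwiner argument that exploits finite-dimensionality of $\widetilde{\rho}$'s GNS representation to pass from weak-$*$ to operator-norm closeness on a bounded-dimensional cut-down. A secondary subtlety in the (ii) $\Rightarrow$ (i) direction is establishing weak-$*$ density of finite-dimensional implementing states inside the full set of implementing states, which I expect to leverage residual finite-dimensionality or nuclearity-type approximation properties of the algebras $\mathcal{A}_{XA}^{POVM}$ in this synchronous setting.
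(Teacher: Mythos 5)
Your direction (ii) $\Rightarrow$ (i) matches the paper's argument (Theorem 4.9): approximate an arbitrary implementing state $\rho$ by finite-dimensional states $\rho_n$ from strategies with correlations $p_n \to \widetilde{p}$, invoke robustness to get local $\epsilon_n$-dilations, and pass to the weak-$*$ limit. The density you want to ``leverage'' is exactly the paper's Proposition 4.1 (the state space of $\mathcal{A}_{XA}^{POVM} \otimes_{\min} \mathcal{A}_{XA}^{POVM}$ is the weak-$*$ closure of finite-dimensional states), and the quantitative statement that a local $\epsilon$-dilation forces the induced state to be close to $\widetilde{\rho}$ on monomials is Lemma 4.8, which you would still need to supply; it is not difficult but it is not automatic either, since a local $\epsilon$-dilation only gives closeness of the vectors $(E_{x,a}\otimes F_{y,b})\ket{\psi}$, not of all values $\rho(\alpha\otimes\beta)$.

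The genuine gap is in (i) $\Rightarrow$ (ii). Your contradiction-plus-Banach--Alaoglu skeleton is the right shape, but the step you flag as the ``main technical obstacle''---upgrading weak-$*$ convergence $\rho_{n_k}\to\widetilde{\rho}$ to approximate local isometries---is not a subtlety to be handled later; it is the entire content of the hard direction, and an ``approximate-intertwiner argument'' does not come for free from weak-$*$ convergence to a finite-dimensional state. Weak-$*$ closeness controls only finitely many matrix coefficients at a time; to extract a single isometry $V_A$ with $\|E_{x,a} - V_A^*(\widetilde{E}_{x,a}\otimes I)V_A\|_{\rho_A}$ small for all $(x,a)$ simultaneously requires structural input. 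The paper's route is: (a) show $\ker\widetilde{\pi}_A = \mathcal{J}_A := \{a : \widetilde{\rho}_A(a^*a)=0\}$, identify $\mathcal{A}_{XA}^{POVM}/\mathcal{J}_A \cong M_d(\mathbb{C})$, and use Choi--Effros lifting to get a UCP map $\widetilde{\Phi}_A : M_d(\mathbb{C}) \to \mathcal{A}_{XA}^{POVM}$ splitting the quotient (Lemmas 3.3 and 3.4); (b) compose with the strategy representation $\pi_A^{[k]}$ to get a UCP map $M_d(\mathbb{C}) \to \mathcal{B}(\mathcal{H}_A^{[k]})$, Stinespring-dilate it, and use uniqueness of the irreducible representation of $M_d(\mathbb{C})$ to force the dilation to be of the form $a\mapsto a\otimes I$---this produces the isometries $V_A^{[k]}$ and the approximate equality of measurement operators (Theorem 4.4); (c) separately prove closeness of the states via the spectral analysis of $\widetilde{M}=\sum_{x,a}\widetilde{E}_{x,a}\otimes\widetilde{F}_{x,a}$, whose top eigenspace is shown to be one-dimensional using the fixed-point characterization of quantum channels (Lemma 3.6), so that any near-optimal vector projects almost entirely onto $\ket{\widetilde{\psi}}\otimes\ket{\mathrm{aux}}$. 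Without (a) and (b), you have no mechanism producing the isometries, and without (c) you cannot handle the state part of the local $\epsilon$-dilation definition at all, since closeness of measurement operators in the $\rho$-seminorm does not by itself say anything about $V\ket{\psi}$.

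A smaller inaccuracy: the finite-dimensionality of $\widetilde{\rho}$'s GNS data does not come from the extreme-point hypothesis (which gives irreducibility of the GNS representation), but from the assumption $\widetilde{p}\in C_q$, which supplies a finite-dimensional implementing state; uniqueness then forces $\widetilde{\rho}$ to be that state. Both facts are used, but conflating them obscures where each hypothesis actually enters.
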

This result generalises the ones obtained in \cites{mancinska_constant-sized_2024, Volcic2024constantsizedself}. It should be noted that since we assume that the correlation $p$ is in $C_q(X,Y,A,B)$, the unique state $\widetilde{\rho}$ will always be finite-dimensional. This also implies that if a synchronous correlation $p$ that is an extreme point of $C_q(X,X,A,A)$ is a commuting operator self-test is also a robust self-test, thus answering \cref{quest-1} for the synchronous case. We remark here that the implication \textit{ii.} $\implies$ \textit{i.} always holds true, i.e.,  we have: 

\begin{prop}\label{prop:robust-self-test-implies-abstract-self-test}
    Let $\widetilde{p} \in C_q(X,Y,A,B)$ be a correlation that robustly self-tests a finite dimensional strategy $\widetilde{S}$. If $\widetilde{\rho}$ is the abstract state on $\mathcal{A}_{XA}^{POVM} \otimes_{\min} \mathcal{A}_{XA}^{POVM}$ given by $\widetilde{S}$, then $\widetilde{\rho}$ is the unique state on $\mathcal{A}_{XA}^{POVM} \otimes_{\min} \mathcal{A}_{XA}^{POVM}$ implementing $\widetilde{p}$.  
\end{prop}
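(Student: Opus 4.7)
The aim is to show that any state $\rho'$ on $\mathcal{A}_{XA}^{POVM} \otimes_{\min} \mathcal{A}_{XA}^{POVM}$ implementing $\widetilde{p}$ must coincide with $\widetilde{\rho}$. Since the linear span of elementary words $w_A \otimes w_B$ in the POVM generators $e_{x,a}$ is norm-dense in the $C^*$-algebra, it suffices to verify $\rho'(w_A \otimes w_B) = \widetilde{\rho}(w_A \otimes w_B)$ on such tensors. The overall plan is to approximate $\rho'$ in the weak-$*$ topology by abstract states arising from finite-dimensional tensor-product strategies, apply the robust self-testing hypothesis to those strategies, and pass to the limit.

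Concretely, I would first construct a sequence of finite-dimensional strategies $S_n = (H_{A,n}, H_{B,n}, \{E^n_{x,a}\}, \{F^n_{y,b}\}, |\psi_n\rangle)$ whose induced abstract states $\rho_{S_n}$ on $\mathcal{A}_{XA}^{POVM} \otimes_{\min} \mathcal{A}_{XA}^{POVM}$ converge weak-$*$ to $\rho'$. Such a sequence exists because the state space of the minimal tensor product is the weak-$*$ closure of states induced by finite-dimensional tensor-product representations of the POVM algebra. Evaluating on the generators $e_{x,a} \otimes e_{y,b}$, the correlations $p_{S_n}$ then converge to $\widetilde{p}$.

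Now the robust self-testing hypothesis yields, for every $\epsilon > 0$ and all sufficiently large $n$, local isometries $V_{A,n}, V_{B,n}$ and auxiliary states $|\mathrm{aux}_n\rangle$ satisfying $\|(V_{A,n} \otimes V_{B,n})|\psi_n\rangle - |\widetilde{\psi}\rangle \otimes |\mathrm{aux}_n\rangle\| < \epsilon$, together with the standard approximate intertwining of the measurement operators with $\widetilde{E}_{x,a} \otimes I$ and $\widetilde{F}_{y,b} \otimes I$. These approximate intertwinings propagate to each fixed word $w_A \otimes w_B$ of POVM generators with an error that vanishes as $\epsilon \to 0$ (the bound depending on the word length but not on $n$). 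Hence $\rho_{S_n}(w_A \otimes w_B) \to \widetilde{\rho}(w_A \otimes w_B)$ for every word, which combined with the weak-$*$ convergence $\rho_{S_n}(w_A \otimes w_B) \to \rho'(w_A \otimes w_B)$ forces $\rho' = \widetilde{\rho}$ on the dense subalgebra and hence on all of $\mathcal{A}_{XA}^{POVM} \otimes_{\min} \mathcal{A}_{XA}^{POVM}$.

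The main obstacle is the approximation step: producing a sequence of finite-dimensional tensor-product strategies whose abstract states converge weak-$*$ to $\rho'$, not merely strategies whose correlations converge to $\widetilde{p}$. The implementation hypothesis only constrains $\rho'$ on first-moment tensors $e_{x,a} \otimes e_{y,b}$, so the approximation must simultaneously capture all higher-order moments and yield genuine tensor-product strategies (as required for the robust self-testing condition to apply). This typically involves truncating a GNS representation of $\rho'$ to well-chosen finite-dimensional invariant subspaces or invoking finite-dimensional approximability of $\mathcal{A}_{XA}^{POVM}$, and it is where the tensor structure of the minimal tensor product enters in an essential way.
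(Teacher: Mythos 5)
Your proposal follows essentially the same route as the paper: approximate an arbitrary implementing state $\rho'$ in the weak-$*$ topology by finite-dimensional states $\rho_{S_n}$ (this is exactly the paper's Proposition~\ref{prop:state-correlation-correspondence-qa} on residual finite-dimensionality of the minimal tensor product), apply robust self-testing to the strategies $S_n$, and propagate the resulting approximate local dilation to all words via an induction on word length (the paper's Lemma~\ref{lem:local-eps-dil-implies-closeness-of-state}), then pass to the limit as in Theorem~\ref{thm:main-theorem-3}. The obstacle you flag at the end is real but is resolved precisely by the paper's density result, so your proof outline is sound and matches the paper's argument.
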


\cref{thm:main-result} enables us to obtain several interesting corollaries the first of which is obtaining an abstract characterization of self-testing for synchronous games in \cref{thm:synch-game-self-test-cond}. We also show the existence of a synchronous nonlocal game that is a robust self-test for finite dimensional strategies, but that is not a commuting operator self-test, thus answering \cref{quest-4} in the negative. We rely heavily on the operator algebraic characterization of self-testing to arrive at this result. Formally, we have the following: 

\begin{theorem}\label{thm:counter-example}
There exists a nonlocal game $\mathscr{G}$ that is a robust self-test, but is not a commuting operator self-test. 
\end{theorem}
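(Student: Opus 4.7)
The plan is to leverage the operator-algebraic dictionary furnished by Theorem~\ref{thm:main-result} and its synchronous-game counterpart announced in the Results section: a synchronous game $\mathscr{G}$ is a robust self-test if and only if its associated $C^*$-algebra $\mathcal{A}(\mathscr{G})$ admits a unique \emph{amenable} tracial state, whereas it is a commuting operator self-test if and only if $\mathcal{A}(\mathscr{G})$ admits a unique tracial state (with no amenability hypothesis). With this dictionary in hand, the task reduces to exhibiting a synchronous game $\mathscr{G}$ whose $C^*$-algebra has a unique amenable tracial state and yet admits at least one further, non-amenable, tracial state.

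I would proceed in three steps. First, fix a synchronous game $\mathscr{G}_0$ already known to be a robust self-test, so that $\mathcal{A}(\mathscr{G}_0)$ has a unique (finite-dimensional, hence amenable) tracial state $\tau_0$. Second, produce an auxiliary synchronous game $\mathscr{G}_1$ whose game algebra carries a non-amenable tracial state; natural candidates come from the separation $C_{qa}^s \subsetneq C_{qc}^s$ that follows from $\textsf{MIP}^* = \textsf{RE}$, or from LCS games whose solution groups admit faithful infinite-dimensional non-amenable representations. Third, combine $\mathscr{G}_0$ and $\mathscr{G}_1$ into a single synchronous game $\mathscr{G}$---for instance via a disjoint-union-of-question-sets construction or an \textsf{AND}-type/tensor-product construction---so that $\mathcal{A}(\mathscr{G})$ decomposes in a controlled way relative to $\mathcal{A}(\mathscr{G}_0)$ and $\mathcal{A}(\mathscr{G}_1)$.

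The verification has two ingredients. On the one hand, every amenable tracial state on $\mathcal{A}(\mathscr{G})$ should restrict on the $\mathscr{G}_0$-part to $\tau_0$ by the robust self-testing of $\mathscr{G}_0$, and should be uniquely determined on the $\mathscr{G}_1$-part by the chosen combination; this is where the robust self-testing of $\mathscr{G}_0$ does its work. On the other hand, the non-amenable trace of $\mathcal{A}(\mathscr{G}_1)$, paired with $\tau_0$, should lift to a non-amenable tracial state of $\mathcal{A}(\mathscr{G})$ distinct from the amenable one. Combining these, the synchronous-game form of Theorem~\ref{thm:main-result} yields that $\mathscr{G}$ is a robust self-test, while the multiplicity of tracial states on $\mathcal{A}(\mathscr{G})$ forbids it from being a commuting operator self-test.

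The main obstacle will be controlling amenability of tracial states on the combined algebra: even with a careful choice of combination operation, one must rule out new amenable traces appearing beyond the one induced by $\tau_0$, and the behaviour of amenable traces under tensor products, free products or suitable quotients requires a delicate argument. A secondary, more existential, difficulty is producing an explicit $\mathscr{G}_1$ whose non-amenable trace can be analysed concretely; I would expect either to invoke $\textsf{MIP}^* = \textsf{RE}$ essentially as a black box, or to hand-craft an LCS game from a non-amenable solution group with tightly controlled representation theory, translating non-amenability of a representation into non-amenability of the induced tracial state.
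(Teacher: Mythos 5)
Your high-level strategy matches the paper's: reduce via the operator-algebraic dictionary (\cref{thm:synch-game-self-test-cond}) to exhibiting a synchronous game whose $C^*$-algebra has a unique amenable trace but multiple traces, and source the ``non-amenable trace'' from a game with a perfect qc-strategy but no perfect qa-strategy, whose existence follows from $\textsf{MIP}^* = \textsf{RE}$. The dictionary, the two building blocks, and the identification of the central obstacle (controlling amenability of traces on the combined algebra) are all correct.

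However, the proposal leaves the crucial ingredient unspecified, and the two candidate constructions you float would not work. An \textsf{AND}-type or tensor-product combination requires the players to win \emph{both} games; since $\mathscr{G}_2$ has no perfect quantum (or even qa) strategy, the combined game would have no perfect finite-dimensional strategy at all, hence no amenable trace on its game algebra, and it cannot be a self-test of any kind. A plain disjoint-union of question sets decouples the two games, so the combined algebra is essentially a free product and inherits \emph{all} traces coming from $\mathscr{G}_1$ alone (with the $\mathscr{G}_2$ part playing trivially), making uniqueness of the amenable trace fail trivially, and again there is no synchronous-game structure forcing the two sides to interact. The paper's construction is the $\mathscr{G}_1 \vee \mathscr{G}_2$ (``OR'') game of \cite{mancinska_counterexamples_2023}: both players receive question pairs and win if they \emph{both} choose to play the same component and answer it correctly. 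This is precisely what makes the argument go: players may always retreat to $\mathscr{G}_1$, so perfect quantum strategies exist; \cref{lem:algebra-of-or-game} shows the $\mathscr{G}_1$-generators are unambiguously embedded; and \cref{lem:trace-on-or-game} shows an amenable trace giving positive mass to the $\mathscr{G}_2$-outputs would induce an amenable trace on $C^*(\mathscr{G}_2)$, which cannot exist. That forces every amenable trace onto the $\mathscr{G}_1$-subalgebra, pinning it down, while a qc-trace of $\mathscr{G}_2$ lifts to a second, non-amenable trace. Without the specific $\vee$-construction and the two supporting lemmas, the proposal does not yet constitute a proof.
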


\subsection*{Acknowledgements}
It has come to our attention that similar results have been obtained independently in the recent work~\cites{zhao2024robust, yuming-thesis}. We would like to thank to Laura Man\v{c}inska and David Roberson for several fruitful discussions, without which this project would not have been possible. We are also thankful to Yuming Zhao for pointing out that the $C^*$-algebra $\mathcal{A}_{XA}^{POVM} \otimes \mathcal{A}_{XA}^{POVM}$ is residually finite dimensional. 

\section{Preliminaries}

By a \emph{correlation} $p \in C(X, Y, A, B)$, we mean a joint conditional probability distribution of the form $(p(a,b \mid x,y))_{x \in X, y \in Y,a \in A, b \in B}$. A correlation $p \in C(X, Y, A, B)$ is said to be \emph{synchronous} if $X = Y$, $A = B$ and $p(a,b \mid x,x) = 0$ whenever $a \neq b$. A correlation $p \in C(X,Y,A,B)$ is said to be a \emph{quantum spatial (qs) correlation} if there are POVMs $\{E_{x,a}\}_{a\in A} \subseteq \mathcal{B(H}_A)$, $\{F_{y,b}\}_{b \in B} \subseteq \mathcal{B(H}_B)$ for each $x \in X$ and $y \in Y$ and a state $\ket{\psi} \in \mathcal{H}_A \otimes \mathcal{H}_B$, for some Hilbert spaces $\mathcal{H}_A$ and $\mathcal{H}_B$ such that 
$p(a,b \mid x,y) = \braket{\psi \mid E_{x,a} \otimes F_{y,b}\mid \psi} \text{ for all } x \in X, y \in Y, a \in A, b \in B$

We shall call $S= (\mathcal{H}_A, \mathcal{H}_B, \{E_{x,a}\}_{x \in X, a \in A}, \{F_{y,b}\}_{y \in Y, b \in B}, \ket{\psi})$ a \emph{tensor-product strategy} for $p$. Furthermore, if the Hilbert spaces $\mathcal{H}_A$ and $\mathcal{H}_B$ are finite-dimensional, we shall call $S$ a \emph{finite-dimensional strategy} for $p$, and a qs correlation with a finite-dimensional strategy is called a \emph{quantum correlation}. We shall denote the set of all qs and quantum correlations in $C(X,Y,A,B)$ by $C_{qs}(X,Y,A,B)$ and $C_{q}(X,Y,A,B)$ respectively. 

A \emph{qc strategy} for a correlation $p \in C(X,Y,A,B)$ is a $4$-tuple $(\mathcal{H}, \{E_{x,a}\}_{x\in X, a \in A}, \{F_{y,b}\}_{y\in Y, b \in B}, \ket{\psi})$, where $\mathcal{H}$ is a Hilbert space $\{E_{x,a}\}_{a\in A}\subseteq \mathcal{B(H})$, $\{F_{y,b}\}_{b \in B}\subseteq\mathcal{B(H})$ are mutually commuting POVMs and $\ket{\psi}$ is a state such that $p(a,b \mid x,y) = \braket{\psi \mid E_{x,a} F_{y,b}\mid \psi}$ for all $x \in X$, $y \in Y$, $a \in A$ and $b \in B$. A correlation is said to be a \emph{quantum commuting correlation} if it has a qc strategy. Note that any tensor product strategy $S= (\mathcal{H}_A, \mathcal{H}_B, \{E_{x,a}\}, \{F_{y,b}\}, \ket{\psi})$ can be represented as the qc strategy $S' = (\mathcal{H}_A \otimes \mathcal{H}_B, \{E_{x,a} \otimes I_{\mathcal{H}_B}\}, \{I_{\mathcal{H}_A} \otimes F_{y,b}\}, \ket{\psi} )$. Hence, one has $C_{qs}(X,Y,A,B) \subseteq C_{qc}(X,Y,A,B)$. 

A finite-dimensional strategy $\widetilde{S} = \{\mathcal{H}, \mathcal{H}, \{\widetilde{E}_{x,a}\}, \{\widetilde{F}_{y,b}\}, \ket{\psi}\}$ for a correlation $\widetilde{p} \in C_q(X,X, A, A)$ such that: 
\begin{enumerate}[label = \roman*.]\label{cond-PME}
        \item $\widetilde{E}_{x,a}$ is a projection for all $x \in X$, $a \in A$,
        \item $\widetilde{F}_{y,b} = \widetilde{E}_{y,b}^T$ for all $y \in X$, $b \in B$, and 
        \item $\ket{\psi} \in \mathcal{H} \otimes \mathcal{H}$ is the maximally entangled state.
\end{enumerate}  
    is known as a \emph{Projective Maximally Entangled (PME)} strategy. Let $\widetilde{p}$ be the correlation induced by the PME strategy $\widetilde{S}$. Then, one has 
\begin{equation}
\widetilde{p}(a,b\mid x,y) = \braket{\psi \mid \widetilde{E}_{x,a} \otimes \widetilde{F}_{y,b} \mid \psi} = \tr(\widetilde{E}_{x,a}\widetilde{E}_{y,b}) = \tr(\widetilde{F}_{x,a}\widetilde{F}_{y,b}).
\end{equation}

In particular, it is not too difficult to see that $\widetilde{p}$ is a synchronous correlation. We shall denote the set of all synchronous correlation in $C_t(X,X,A,A)$ by $C_t^s(X,A)$ for all $t \in \{q,qs,qa,qc\}$.
\subsection{Nonlocal Games}

A nonocal game is a cooperative game involving two players: Alice and Bob,  and a verifier. Formally, a \emph{nonlocal game} $\mathscr{G} = (X, Y, A, B, V)$ is a $5$-tuple, where $X, Y, A, B$ are finite sets and $V: X \times Y \times A \times B \to \{0,1\}$ is a function called the \emph{predicate}. A nonlocal game is played as follows: 
\begin{itemize}
    \item In each round, the verifier picks two questions (according to a uniform probability distribution) $x\in X$ and $y \in Y$,  and sends them to Alice and Bob respectively. 
    \item The players choose their responses $a \in A$ and $ b \in B$ and send them to the verifier. 
    \item The players win this round of the game if $V(x,y,a,b) = 1$, and they lose otherwise.
\end{itemize}

The players are not allowed to communicate after they receive their questions. However, they may agree on a strategy to win the game. It is customary in the literature to write $V(a,b \mid x,y)$ for $V(x,y,a,b)$, and we shall adopt this custom for the rest of the work. We now describe various types of strategies the players can use to try and win a nonlocal game.  

A \emph{deterministic} strategy is an ordered pair of functions $(f_A, f_B)$ where $f_A: X \to A$ and $f_B: Y \to B$ encode Alice's and Bob's strategies respectively. In other words, upon receiving inputs $x \in X$ and $y \in Y$, the players respond with $f_A(x)$ and $f_B(y)$ respectively. Such a strategy is said to be perfect if $V(f_A(x), f_B(y) \mid x, y) = 1$ for all $ x\in X$ and $y \in Y$. 

The players can also make use of \emph{probabilistic strategies}, i.e. correlations $p \in C(X, Y, A, B)$. Such a correlation $p \in C(X,Y,A,B)$ is said to be \emph{perfect} or \emph{winning} if $p(a,b \mid x,y) = 0$, whenever $V(a,b \mid x,y) = 0$. Probabilistic strategies making use of shared local randomness are known as \emph{local strategies}. Since local strategies can be shown to be convex combinations of deterministic strategies, they offer no competitive advantage over the set of deterministic strategies. 

Probabilistic strategies that employ quantum, qs, and qc correlations are known as quantum, qs, and qc strategies respectively. It is well known that there exist nonlocal games with a quantum advantage, i.e., there are nonlocal games which have no perfect local strategies, but have perfect quantum strategies. If we let $C_{qa}(X,Y,A,B)$ denote the closure of $C_q(X,Y,A,B)$ and $C_{loc}(X,Y,A,B)$, then one has the following hierarchy of classes of correlations: 
$$C_{loc}(X,Y,A,B) \subseteq C_{q}(X,Y,A,B) \subseteq C_{qs}(X,Y,A,B) \subseteq C_{qa}(X, Y, A, B) \subseteq C_{qc}(X, Y, A, B).$$
It is also known from a series of results~\cites{coladangelo2020inherently, slofstra2019set, slofstra2020tsirelson, ji2021mip} that all of these inclusions are strict. We also remark here that each of the above sets in convex, with $C_{loc}^s(X,Y,A,B), C_{qa}(X,Y,A,B)$ and $C_{qc}(X,Y,A,B)$ also being closed. The same holds true for the synchronous versions of these correlation classes, i.e.
$$C_{loc}^s(X,A) \subseteq C_{q}^s(X,A) \subseteq C_{qs}^s(X,A) \subseteq C_{qa}^s(X, A) \subseteq C_{qc}^s(X,A).$$, 
with all the inclusions being strict. Moreover, each of these sets of synchronous correlations is convex, with $C_{loc}^s(X,A), C_{qa}^s(X,A),$ and $C_{qc}^s(X,A)$ also being closed. 

We shall be mostly interested in synchronous nonlocal games and Linear Constraint System (LCS) games. We shall provide the necessary background on synchronous games in the next subsection. We refer the readers to \cites{coladangelo2019robustselftestinglinearconstraint, slofstra2019set, slofstra2020tsirelson} for the necessary prerequisites on LCS games.

A nonlocal game $\mathscr{G} = (X, Y, A, B, V)$ is said to be synchronous if $X = Y$, $A = B$, and $V(a, b \mid x,x) = 0$ whenever $ a\neq b$. we shall denote a synchronous nonlocal game $\mathscr{G}$ by the 3-tuple $(X,Y,V)$, where $V : X \times X \times Y \times Y \to \{0,1\}$ is understood to be the predicate. It is evident that a perfect correlation for a synchronous game is always a synchronous correlation. 

\subsection{Self-testing}

Let $\widetilde{S} = (\widetilde{\mathcal{H}}_A, \widetilde{\mathcal{H}}_B, \{\widetilde{E}_{x,a}\}, \{\widetilde{F}_{y,b}\}, \ket{\widetilde{\psi}})$ and $S = (\mathcal{H}_A, \mathcal{H}_B, \{E_{x,a}\}, \{F_{y,b}, \ket{\psi}\})$ be two finite-dimensional strategies. We say that $\widetilde{S}$ is a \emph{local dilation} of $S$ if there are isometries $V_A: \mathcal{H}_A \to \widetilde{\mathcal{H}}_A \otimes \mathcal{K}_A$ and $V_B: \mathcal{H}_B \to \widetilde{\mathcal{H}}_B \otimes \mathcal{K}_B$ for some Hilbert spaces $\mathcal{K}_A$ and $\mathcal{K}_B$, and a state $\ket{\psi_{\aux}} \in \mathcal{K}_A \otimes \mathcal{K}_B$ such that 
\begin{align}\label{eq:self-test-measurement-and-state}
    V_A \otimes V_B (E_{x,a} \otimes F_{y,b}) \ket{\psi} = ((\widetilde{E}_{x,a} \otimes I_{\mathcal{K}_A}) \otimes (F_{y,b} \otimes I_{\mathcal{K}_B}))\ket{\widetilde{\psi}}\ket{\psi_{\aux}}
\end{align}
for all $x \in X$, $y \in Y$, $a \in A$ and $b \in B$. Summing up over $a,b$ in \cref{eq:self-test-measurement-and-state}, we get: 
\begin{align}\label{eq:self-test-state}
    V_A \otimes V_B \ket{\psi} = \ket{\widetilde{\psi}}\ket{\psi_{\aux}}
\end{align}

Note that if $\widetilde{S}$ is a local dilation of $S$, then they induce the same correlation. We shall say that a correlation $p \in C_q(X, Y, A, B)$ \emph{self-tests} a strategy $\widetilde{S}$ if $\widetilde{S}$ is a local dilation of every strategy $S$ inducing the correlation $p$. We now introduce the robust versions of these defintions: 

A finite-dimensional quantum strategy $\widetilde{S} = (\widetilde{\mathcal{H}}_A, \widetilde{\mathcal{H}}_B, \{\widetilde{E}_{x,a}\}, \{\widetilde{F}_{y,b}\}, \ket{\widetilde{\psi}})$ is said to be a \emph{local $\epsilon$-dilation} of another finite-dimensional quantum strategy $S = (\mathcal{H}_A, \mathcal{H}_B, \{E_{x,a}\}, \{F_{y,b}, \ket{\psi}\})$ if there are isometries $V_A: \mathcal{H}_A \to \widetilde{\mathcal{H}}_A \otimes \mathcal{K}_A$ and $V_B: \mathcal{H}_B \to \widetilde{\mathcal{H}}_B \otimes \mathcal{K}_B$ for some Hilbert spaces $\mathcal{K}_A$ and $\mathcal{K}_B$, and a state $\ket{\psi_{\aux}} \in \mathcal{K}_A \otimes \mathcal{K}_B$ such that 
\begin{align}\label{eq:robust-self-test-measurement-and-state}
    V_A \otimes V_B (E_{x,a} \otimes F_{y,b}) \ket{\psi} \approx_{\epsilon} ((\widetilde{E}_{x,a} \otimes I_{\mathcal{K}_A}) \otimes (F_{y,b} \otimes I_{\mathcal{K}_B}))\ket{\widetilde{\psi}}\ket{\psi_{\aux}}
\end{align}
for all $x \in X$, $y \in Y$, $a \in A$ and $b \in B$ and 
\begin{align}\label{eq:robust-self-test-state}
    V_A \otimes V_B \ket{\psi} \approx_{\epsilon} \ket{\widetilde{\psi}}\ket{\psi_{\aux}}
\end{align}
 A correlation $\widetilde{p} \in C_q(X,Y,A,B)$ is said to be a \emph{robust self-test} for a strategy $\widetilde{S}$ if for every $\epsilon > 0$, there is a $\delta>0$ such that if $\norm{p - \widetilde{p}}_1$ for some correlation $p \in C_q(X, Y, A, B)$ with strategy $S$, then $\widetilde{S}$ is a local $\epsilon$-dilation of $S$. These definitions naturally extend to nonlocal games. 

\begin{definition}\label{def:nonlocal-game-self-test}
     A nonlocal game $\mathscr{G} = (X, Y, A, B, V)$ is said to be a \emph{self-test} for a strategy $\widetilde{S}$ if for every correlation $p \in C_q(X, Y, A, B)$ with a model $S$ that is a perfect strategy for $\mathscr{G}$, $\widetilde{S}$ is a local dilation of $S$. 
     
     Similarly, $\mathscr{G}$ is said to be a \emph{robust self-test} for $\widetilde{S}$ if $\mathscr{G}$ is a self-test for $\widetilde{S}$ and for every $\epsilon > 0$ there is a $\delta > 0$ such that for  every correlation $p \in C_q(X, Y, A, B)$ with a model $S$ such that 
     $$\sum_{\substack{x \in X, y \in Y, a \in A, b \in B  \\
     \text{such that } V(a,b \mid x, y = 0)}} p(a,b \mid x, y) \leq \delta, $$ 
     one has that $\widetilde{S}$ is a local-$\epsilon$ dilation of $\widetilde{S}$. 
\end{definition}
 
\subsection{Background on Operator Algebras}

We shall assume basic familiarity with the theory of operator algebras. We refer the reader to \cites{kadison1997elementary, brown88c} for a detailed exposition. Recall that a \emph{state} on a $C^*$-algebra is a positive linear functional with norm $1$. A state $\rho$ on a $C^*$-algebra $\mathcal{A}$ is said to be a \emph{tracial state} or just a \emph{trace} if $\rho(ab) = \rho(ba)$ for all $a, b \in \mathcal{A}$. Given a state $\rho$ on a $C^*$-algebra the \emph{GNS}-construction gives a Hilbert space $\mathcal{H}_{\rho}$, a $\ast$-representation $\pi_{\rho}: \mathcal{A} \to \mathcal{B(H}_{\rho})$ and a vector $\ket{\eta_{\rho}} \in \mathcal{H}$ such that $\rho(a) = \braket{ \eta_{\rho} \mid \pi_\rho(a)\mid \eta_{\rho}}$ for all $a \in \mathcal{A}$. Hence, the GNS construction represents an abstract state on a $C^*$-algebra as a concrete vector state. 

The vector $\ket{\eta_{\rho}}$ is also known to be \emph{cyclic} for the representation $\pi_\rho$, i.e. $\pi_{\rho}(\mathcal{A})\ket{\eta_{\rho}}$ is dense in $\mathcal{H}_{\rho}$. The GNS construction is unique in the sense that for any other cyclic representation $(\pi, \ket{\eta})$ for some $\ket{\eta} \in \mathcal{H}$ such that $\rho(a) =\braket{\eta\mid \pi(a) \mid \eta}$ for all $a \in \mathcal{A}$, there is a unitary $U : \mathcal{H} \to \mathcal{H}_{\rho}$ such that $U \ket{\eta} = \ket{\eta_{\rho}}$ and $U \pi(a) U^* = \pi_\rho(a)$ for all $ a\in \mathcal{A}$. 

The \emph{$C^*$-algebra of a synchronous game}~\cites{helton2017algebras, lupini_perfect_2020} is a structure that encodes all information about a synchronous game. Formally, given a synchronous nonlocal game $\mathscr{G} = (X, Y, V)$, the $C^*$-algebra of $\mathscr{G}$, denoted by $C^*(\mathscr{G})$, is the universal $C^*$-algebra generated by projections $\{e_{x,a}\}_{x\in X, a \in A}$ such that: 
\begin{enumerate}[label = \roman*.]
    \item $\sum_a e_{x,a} = 1$ for all $x \in X$ and
    \item $e_{x,a}e_{y,b} = 0$ whenever $V(a,b \mid x, y) = 0$
\end{enumerate}
We note that this $C^*$-algebra is not always non-trivial. 

It is known that existence of strategiies for $\mathscr{G}$ can be formulated in terms of existence of strategies on $C^*(\mathscr{G})$. In particular, we have the following result: 

\begin{prop}[\cite{helton2017algebras}]
    Let $\mathscr{G}$ be a synchronous game. Then, the following hold true: 
    \begin{enumerate}[label = \roman*.]
        \item $\mathscr{G}$ has a quantum strategy if and only if $C^*(\mathscr{G})$ admits a finite-dimensional tracial state.
        \item $\mathscr{G}$ has a qa strategy if and only if $C^*(\mathscr{G})$ admits an amenable tracial state.
        \item $\mathscr{G}$ has a quantum commuting strategy if and only if $C^*(\mathscr{G})$ admits a tracial state.
    \end{enumerate}
\end{prop}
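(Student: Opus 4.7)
The plan is to exploit the universal property of $C^*(\mathscr{G})$ in one direction and the GNS construction (combined with the well-known correspondence between perfect synchronous correlations and tracial data) in the other direction.

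For the forward direction of (iii), I would start from a perfect qc strategy $(\mathcal{H}, \{E_{x,a}\}, \{F_{y,b}\}, \ket{\psi})$. A standard observation is that for perfect synchronous qc strategies one may assume, after restricting to the cyclic subspace generated by $\ket{\psi}$, that the POVM elements $E_{x,a}$ are projections with $\sum_a E_{x,a} = I$ for each $x$ and that $E_{x,a}E_{y,b}=0$ whenever $V(a,b\mid x,y) = 0$. The vector state $\tau(T):=\braket{\psi\mid T\mid \psi}$ restricted to the $C^*$-algebra generated by the $E_{x,a}$ is then tracial; this rests on the fact that synchronicity forces $E_{x,a}\ket{\psi} = F_{x,a}\ket{\psi}$, which together with commutativity of the $E$'s and $F$'s yields the trace identity on monomials in the generators. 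By the universal property of $C^*(\mathscr{G})$, the assignment $e_{x,a}\mapsto E_{x,a}$ extends to a $\ast$-homomorphism $\pi: C^*(\mathscr{G})\to \mathcal{B}(\mathcal{H})$, and $\tau\circ \pi$ is the desired tracial state on $C^*(\mathscr{G})$.

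Conversely, given a tracial state $\tau$ on $C^*(\mathscr{G})$, I would apply the GNS construction to obtain $(\pi_\tau, \mathcal{H}_\tau, \ket{\eta_\tau})$. Because $\tau$ is tracial, standard modular theory supplies an anti-representation on $\mathcal{H}_\tau$ whose image commutes with $\pi_\tau$; setting $E_{x,a}:=\pi_\tau(e_{x,a})$ and taking $F_{y,b}$ from the commutant produces a qc strategy whose induced correlation $\braket{\eta_\tau \mid E_{x,a}F_{y,b}\mid \eta_\tau} = \tau(e_{x,a}e_{y,b})$ vanishes on losing tuples by the defining relations of $C^*(\mathscr{G})$. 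For (i), the argument is parallel: a finite-dimensional tracial state factors through a finite-dimensional quotient $\bigoplus_i M_{n_i}$ of $C^*(\mathscr{G})$, on which the trace decomposes as a convex combination of normalized matrix traces; running GNS in this finite-dimensional setting yields a perfect finite-dimensional strategy whose shared state is (up to a change of basis) a direct sum of maximally entangled states on the blocks.

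For (ii), I would use the fact that qa strategies are by definition limits (in the $\ell^1$ norm on correlations) of quantum strategies, while amenable tracial states are (by one of the equivalent characterizations) precisely those in the weak-$\ast$ closure of the finite-dimensional tracial states on $C^*(\mathscr{G})$. Combining part (i) with the continuity of the map sending $\tau$ to the induced correlation then yields the desired equivalence. The main obstacle I anticipate is the reduction, in the forward direction of (iii), from a general qc strategy to one in tracial form: carefully verifying that $\braket{\psi\mid \cdot \mid \psi}$ descends to a well-defined trace on the generated $C^*$-algebra is the technical heart of the argument and depends essentially on synchronicity together with the commutativity of the $E$'s and $F$'s.
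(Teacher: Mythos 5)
Your sketch of items (i) and (iii) is essentially the standard argument of Paulsen--Severini--Stahlke--Todorov--Winter and Helton--Meyer--Paulsen--Satriano: synchronicity forces $E_{x,a}\ket{\psi}=F_{x,a}\ket{\psi}$ on the cyclic subspace, the vector functional becomes a trace, the kernel-is-a-two-sided-ideal argument upgrades the approximate relations to exact ones (projections and orthogonality), and GNS plus left/right multiplication handles the converse. One small gap in your exposition of (iii): you invoke the universal property of $C^*(\mathscr{G})$ on the operators $E_{x,a}$ directly, but the relations $E_{x,a}E_{y,b}=0$ hold only after passing to the GNS representation of the trace (equivalently, only on the cyclic subspace); as stated, the homomorphism is applied before the trace argument that justifies it. This is easily repaired, but the order of operations matters.

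The genuine gap is in (ii). You assert that amenable tracial states are ``precisely those in the weak-$\ast$ closure of the finite-dimensional tracial states on $C^*(\mathscr{G})$.'' That is not a definition of amenability, nor is it a standard off-the-shelf characterization for an arbitrary $C^*$-algebra; it is essentially equivalent to the statement being proved. The standard definition (Brown--Ozawa, Theorem~6.2.7, which the present paper cites) is that $\tau$ is amenable iff the functional $a\otimes\op(b)\mapsto\tau(ab)$ is continuous for the minimal tensor norm on $C^*(\mathscr{G})\otimes C^*(\mathscr{G})^{\op}$, equivalently iff $\tau$ admits approximately multiplicative u.c.p. approximations into matrix algebras. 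Passing from that to ``weak-$\ast$ limit of honest finite-dimensional traces on $C^*(\mathscr{G})$'' requires an argument, and for general $C^*$-algebras the two sets need not coincide. Moreover, your bridge ``combining part (i) with continuity'' implicitly requires that every synchronous qa correlation is a limit of synchronous quantum correlations. The definition of $C_{qa}$ only gives you approximating quantum correlations $p_n$, which need not be synchronous, so part (i) does not apply to them and you do not get finite-dimensional traces on $C^*(\mathscr{G})$. Closing this gap — showing that $C^s_{qa}(X,A)$ is the closure of $C^s_q(X,A)$, or equivalently rounding approximately synchronous quantum strategies to exactly synchronous ones — is precisely the technical content of Kim--Paulsen--Schafhauser and is the hard part of (ii). The paper's own route to the analogous fact (\cref{thm:state-trace-corr} combined with \cref{prop:state-correlation-correspondence-qa}) instead exploits residual finite-dimensionality of $\mathcal{A}_{XA}^{POVM}\otimes_{\min}\mathcal{A}_{XA}^{POVM}$ together with the Brown--Ozawa criterion, rather than the density-of-finite-dimensional-traces characterization you propose.
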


\subsection{Abstract State Self-Testing}

Let $X$ and $A$ be two finite sets, and let $\mathcal{A}_{XA}^{POVM}$ be the universal unital $C^*$-algebra generated by positive contractions $\{e_{x,a}\}_{x\in X, a \in A}$ such that $\sum_{a\in A} e_{x,a} = 1$ for all $x \in X$. Then, any tensor product quantum strategy $S = (\mathcal{H}_A, \mathcal{H}_B, \{E_{x,a}\}, \{F_{y,b}\}, \ket{\psi})$ induces representations $\pi_A: \mathcal{A}_{XA}^{POVM} \to \mathcal{B(H}_A)$ and $\pi_B: \mathcal{A}_{XA}^{POVM} \to \mathcal{B(H}_B)$. 

Hence, each tensor product strategy $S$ for a correlation $p \in C_q(X,Y,A,B)$ can be associated with the state $\rho_S: \mathcal{A}_{XA}^{POVM} \otimes_{\min} \mathcal{A}_{XA}^{POVM} \to \mathbb{C}$, defined by $\rho_S(a) = \braket{\psi \mid \pi_A \otimes \pi_B (a)\mid \psi}$. Such a state is called \emph{finite-dimensional} if the associated strategy is finite-dimensional. We say that a state $\rho$ on $\mathcal{A}_{XA}^{POVM} \otimes_{\max} \mathcal{A}_{YB}^{POVM}$ \emph{implements} $p$ if $\widetilde{\rho}(e_{x,a}\otimes f_{y,b}) = \widetilde{p}(a,b\mid x,y)$ for all $x,y \in X$ and $a,b \in A$. The authors of \cite{paddock_operator-algebraic_2023} show that a synchronous correlation $\widetilde{p}$ that is an extreme point of $C_q(X,X,A,A)$ is a self-test for finite dimensional strategies if and only if it is an \emph{abstract state self-test} for finite dimensional states, i.e. if 

\begin{equation}\label{self-test-condition}
    \text{There is a unique finite-dimensional state $\rho$ on $ S(\mathcal{A}_{XA}^{POVM} \otimes_{\min} \mathcal{A}_{XA}^{POVM})$  that implements $\widetilde{p}$.}
\end{equation}

From this point, given a convex set $S$, we shall denote the set of its extreme points as $\partial S$. We shall show that a synchronous correlation $p \in \partial C_q^S(X,A)$, is a robust self-test if and only if it an abstract state self-test for all states on $\mathcal{A}_{XA}^{POVM} \otimes_{\min} \mathcal{A}_{XA}^{POVM}$, i.e. if 

\begin{equation}\label{robust-self-test-condition}
    \text{There is a unique state on $ S(\mathcal{A}_{XA}^{POVM} \otimes_{\min} \mathcal{A}_{XA}^{POVM})$ that implements $\widetilde{p}$.}
\end{equation}
 
One can also show that each qc strategy for a correlation $p \in C(X, Y, A, B)$ gives a state on $\mathcal{A}_{XA}^{POVM} \otimes_{\max} \mathcal{A}_{XA}^{POVM}$. As a generalisation of the abstract state self-testing definition from the finite-dimensional state case, we say that a correlation $p$ is an \emph{commuting operator self-test} (as defined in \cite{paddock_operator-algebraic_2023}) if there is a unique state $\rho$ on $\mathcal{A}_{XA}^{POVM} \otimes_{\max} \mathcal{A}_{XA}^{POVM}$ such that $\rho(e_{x,a} \otimes e_{y,b}) = p(a,b \mid x,y)$ for all $x \in X$, $y \in Y$, $a \in A$ and $b \in B$. Since each state on $\mathcal{A}_{XA}^{POVM} \otimes_{\min} \mathcal{A}_{XA}^{POVM}$ extends to a state on $\mathcal{A}_{XA}^{POVM} \otimes_{\max} \mathcal{A}_{XA}^{POVM}$, we see that being a commuting operator self is a stronger requirement that being an abstract state self-test for (finite-dimensional) states on $\mathcal{A}_{XA}^{POVM} \otimes_{\min} \mathcal{A}_{XA}^{POVM}$.

\section{Characterising Synchronous Self-tests} 

Next, we fix the "ideal" or "honest" strategy $\widetilde{S}$ that will be self-tested by a correlation $\widetilde{p} \in \partial C_q^s(X,A)$ satisfying the hypothesis of \cref{thm:main-result} and condition~\ref{robust-self-test-condition}:

\begin{lemma}[\cite{paddock_operator-algebraic_2023}]\label{lem:canonical-PME-strat}
    Let $\widetilde{p} \in \partial C_q^s(X,X,A,A) \cap C_q(X,A)$ be a correlation satisfying condition~\ref{self-test-condition}. Then, there is a projective maximally entangled (PME) strategy $(\mathbb{C}^d, \mathbb{C}^d, \{\widetilde{E}_{x,a}\}, \{\widetilde{F}_{y,b}\}, \ket{\widetilde{\psi}})$ for $\widetilde{p}$ such that the associated representation $\widetilde{\pi}_A \otimes \widetilde{\pi}_B$ on $\mathcal{A}_{XA}^{POVM} \otimes \mathcal{A}_{XA}^{POVM}$ is irreducible. 
\end{lemma}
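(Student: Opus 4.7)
The plan is to extract the desired PME strategy from the GNS construction of a canonical finite-dimensional tracial state on $\mathcal{A}_{XA}^{POVM}$, and then use extremality of $\widetilde{p}$ together with the uniqueness hypothesis to promote the resulting representation to an irreducible one.

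First I would invoke the standard reduction for synchronous quantum correlations (following Paulsen--Todorov / Kim--Paulsen--Schafhauser) to obtain a finite-dimensional tracial state $\tau$ on $\mathcal{A}_{XA}^{POVM}$ such that $\widetilde{p}(a,b\mid x,y) = \tau(e_{x,a} e_{y,b})$; synchronicity forces $\tau(e_{x,a}(1-e_{x,a})) = 0$ so that the GNS images of the generators are projections. Letting $\widetilde{\pi}_A$ be the GNS representation on some $\mathbb{C}^d$, setting $\widetilde{E}_{x,a} := \widetilde{\pi}_A(e_{x,a})$ and $\widetilde{F}_{y,b} := \widetilde{E}_{y,b}^{\top}$, and taking $\ket{\widetilde{\psi}}$ to be the maximally entangled vector in $\mathbb{C}^d \otimes \mathbb{C}^d$, a direct calculation confirms that this PME strategy implements $\widetilde{p}$ and induces precisely the unique finite-dimensional state $\rho$ on $\mathcal{A}_{XA}^{POVM} \otimes_{\min} \mathcal{A}_{XA}^{POVM}$ guaranteed by condition~\ref{self-test-condition}.

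The crux is irreducibility of $\widetilde{\pi}_A \otimes \widetilde{\pi}_B$. I would first use the canonical correspondence between PME-induced finite-dimensional states on $\mathcal{A}_{XA}^{POVM} \otimes_{\min} \mathcal{A}_{XA}^{POVM}$ and finite-dimensional tracial states on $\mathcal{A}_{XA}^{POVM}$ to transfer uniqueness of $\rho$ into uniqueness of $\tau$ among finite-dimensional traces implementing $\widetilde{p}$. Now suppose $\tau = t\tau_1 + (1-t)\tau_2$ is a nontrivial convex decomposition into finite-dimensional traces; the associated correlations $p_i(a,b\mid x,y) := \tau_i(e_{x,a} e_{y,b})$ would lie in $C_q^s(X,A)$ and satisfy $\widetilde{p} = t p_1 + (1-t)p_2$, so extremality of $\widetilde{p}$ forces $p_1 = p_2 = \widetilde{p}$, and then uniqueness forces $\tau_1 = \tau_2 = \tau$, contradicting nontriviality. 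Hence $\tau$ is extremal in the simplex of finite-dimensional traces, and consequently the finite-dimensional image $\widetilde{\pi}_A(\mathcal{A}_{XA}^{POVM})$ is a factor, i.e., a full matrix algebra $M_d(\mathbb{C})$; equivalently $\widetilde{\pi}_A$ is irreducible. The conjugate representation $\widetilde{\pi}_B = \widetilde{\pi}_A^{\top}$ is then likewise irreducible onto $M_d(\mathbb{C})$, so the image of $\widetilde{\pi}_A \otimes \widetilde{\pi}_B$ contains $M_d(\mathbb{C}) \otimes M_d(\mathbb{C}) = \mathcal{B}(\mathbb{C}^d \otimes \mathbb{C}^d)$, which forces irreducibility of the tensor representation.

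The step I expect to be most delicate is the uniqueness-transfer: I must carefully verify that every finite-dimensional state on $\mathcal{A}_{XA}^{POVM} \otimes_{\min} \mathcal{A}_{XA}^{POVM}$ implementing the synchronous correlation $\widetilde{p}$ indeed corresponds to a finite-dimensional tracial state on $\mathcal{A}_{XA}^{POVM}$ (so that uniqueness of the implementing state on the min-tensor algebra genuinely yields uniqueness of the trace), and that convex combinations on the trace side pull back to convex combinations on the state side, so extremality can be used as above. Once that bijection is pinned down, the remaining pieces are essentially structural facts about finite-dimensional $C^*$-algebras.
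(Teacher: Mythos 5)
Your overall strategy --- extract a finite-dimensional tracial state $\tau$ on $\mathcal{A}_{XA}^{POVM}$ from synchronicity, transfer uniqueness of the implementing state on $\mathcal{A}_{XA}^{POVM}\otimes_{\min}\mathcal{A}_{XA}^{POVM}$ to uniqueness of the trace, use extremality of $\widetilde{p}$ in $C_q(X,X,A,A)$ to show $\tau$ is extremal among finite-dimensional traces and hence factors through a single matrix algebra $M_d(\mathbb{C})$, then read off the PME strategy and conclude irreducibility of $\widetilde{\pi}_A\otimes\widetilde{\pi}_B$ from surjectivity of each tensor factor --- is the right one, and matches the argument in the cited work.

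There is, however, a genuine conceptual slip in the way you handle the GNS construction, and it affects both your dimension bookkeeping and your stated equivalence. You write ``Letting $\widetilde{\pi}_A$ be the GNS representation on some $\mathbb{C}^d$,'' and later assert that the image of $\widetilde{\pi}_A$ being the factor $M_d(\mathbb{C})$ is ``equivalently $\widetilde{\pi}_A$ is irreducible.'' Neither is correct for the GNS representation of a trace: if $\mathcal{A}_{XA}^{POVM}/\mathcal{J}_\tau\cong M_d(\mathbb{C})$, the GNS Hilbert space $\mathcal{H}_\tau$ is $M_d(\mathbb{C})$ with the Hilbert--Schmidt inner product, of dimension $d^2$, and $\pi_\tau$ (left multiplication) is a direct sum of $d$ copies of the defining representation, with nontrivial commutant given by right multiplications. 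So $\pi_\tau(\mathcal{A}_{XA}^{POVM})$ \emph{is} a factor, but $\pi_\tau$ is \emph{not} irreducible; the equivalence you claim fails for $\pi_\tau$. The representation you actually need for $\widetilde{\pi}_A$ is the quotient surjection $\mathcal{A}_{XA}^{POVM}\twoheadrightarrow\mathcal{A}_{XA}^{POVM}/\mathcal{J}_\tau\cong M_d(\mathbb{C})=\mathcal{B}(\mathbb{C}^d)$, which is irreducible precisely because it is onto a full matrix algebra acting on $\mathbb{C}^d$. Moreover, this requires the extremality step to come \emph{before} the construction of the PME strategy, since for a non-extremal finite-dimensional $\tau$ the quotient is a nontrivial direct sum $\bigoplus_i M_{d_i}(\mathbb{C})$ and there is no single $\mathbb{C}^d$ on which to build a PME strategy. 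Reordering --- first uniqueness transfer (noting that every finite-dimensional trace $\tau'$ implementing $\widetilde{p}$ yields a finite-dimensional state $\rho_{\tau'}(a\otimes b)=\tau'(a\,\op(b))$ via the bilateral GNS construction, and $\tau'\mapsto\rho_{\tau'}$ is injective, so condition~\eqref{self-test-condition} forces uniqueness of the trace), then your extremality argument (which is fine as written), then define $\widetilde{\pi}_A$ to be the quotient map onto $M_d(\mathbb{C})$ and build the PME strategy, then note that $(\widetilde{\pi}_A\otimes\widetilde{\pi}_B)(\mathcal{A}_{XA}^{POVM}\otimes\mathcal{A}_{XA}^{POVM})=M_d(\mathbb{C})\otimes M_d(\mathbb{C})=\mathcal{B}(\mathbb{C}^d\otimes\mathbb{C}^d)$ --- repairs the proof. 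The remaining pieces, including your use of $\widetilde{F}_{y,b}=\widetilde{E}_{y,b}^{\top}$ and the maximally entangled vector, are sound.
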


In particular, $(\widetilde{\pi}_A \otimes \widetilde{\pi}_B, \ket{\widetilde{\psi}})$ is a GNS representation of $\widetilde{\rho}$. We need a result about GNS representations of states that we prove.

\begin{lemma}\label{lem:kernel-GNS-rep}
    Let $\mathcal{A}$ be a unital $C^*$-algebra and $\rho$ be a tracial state on $\mathcal{A}$. Let $\mathcal{J}_{\rho} \coloneq \{a \in \mathcal{A} \mid \rho(a^*a) = 0\}$ and $\pi_{\rho}: \mathcal{A} \to \mathcal{B(H}_\rho)$ be the GNS representation of the state $\rho$. Then, $\mathcal{J}_{\rho}$ is the kernel of $\pi_{\rho}$. 
\end{lemma}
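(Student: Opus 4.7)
The plan is to verify both inclusions directly by working with the explicit GNS construction. Recall that $\mathcal{H}_\rho$ is the Hilbert-space completion of $\mathcal{A}/\mathcal{J}_\rho$ under the inner product $\langle a+\mathcal{J}_\rho, b+\mathcal{J}_\rho\rangle = \rho(a^*b)$, that the representation acts by $\pi_\rho(a)(b+\mathcal{J}_\rho) = ab+\mathcal{J}_\rho$, and that the cyclic vector is $\ket{\eta_\rho} = 1+\mathcal{J}_\rho$.

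The inclusion $\ker\pi_\rho \subseteq \mathcal{J}_\rho$ is the easy direction and does not require traciality: I would just evaluate at the cyclic vector, so that if $a \in \ker\pi_\rho$ then $0 = \pi_\rho(a)\ket{\eta_\rho} = a+\mathcal{J}_\rho$, which is literally the statement $a \in \mathcal{J}_\rho$.

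For the converse $\mathcal{J}_\rho \subseteq \ker\pi_\rho$, I would use the density of $\pi_\rho(\mathcal{A})\ket{\eta_\rho}$ in $\mathcal{H}_\rho$ to reduce showing $\pi_\rho(a)=0$ to verifying that $\rho(b^*a^*ab) = 0$ for every $b \in \mathcal{A}$. This is the step where traciality has to be invoked: cycling under the trace rewrites $\rho(b^*a^*ab) = \rho(a^*\cdot abb^*)$, and then the Cauchy--Schwarz inequality for states (with $x=a$, $y=abb^*$) gives
\[
    \bigl|\rho(a^*\cdot abb^*)\bigr|^2 \;\leq\; \rho(a^*a)\,\rho(bb^*a^*abb^*) \;=\; 0,
\]
because $\rho(a^*a)=0$ by hypothesis.

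The only conceptual point worth flagging is that for a general (non-tracial) state $\mathcal{J}_\rho$ is merely a left ideal of $\mathcal{A}$ and can be strictly larger than $\ker\pi_\rho$; traciality is used precisely to cycle $b$ and $b^*$ out of the sandwich $b^*a^*ab$ so that Cauchy--Schwarz can exploit the vanishing of $\rho(a^*a)$. Beyond choosing this specific rearrangement I do not anticipate any real obstacle.
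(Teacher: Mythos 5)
Your proof is correct and follows essentially the same route as the paper's: both identify $\ker\pi_\rho$ via the GNS construction on $\mathcal{A}/\mathcal{J}_\rho$ and both rest on the fact that traciality makes $\mathcal{J}_\rho$ a two-sided ideal. The only difference is one of explicitness: the paper cites the two-sided-ideal property as standard and deduces both inclusions from ``$a\in\ker\pi_\rho$ iff $ab\in\mathcal{J}_\rho$ for all $b$'' (taking $b=1$ for one direction, ideality for the other), whereas you unpack the Cauchy--Schwarz-plus-cycling computation that underlies that ideal property, which is slightly more self-contained but not a different idea.
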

\begin{proof}
    Since $\rho$ is a tracial state, $\mathcal{J}_{\rho}$ is a closed two-sided ideal of $\mathcal{A}$. Recall that $\mathcal{H}_{\rho}$ is constructed by completing $\mathcal{A}/\mathcal{J}_{\rho}$ under the norm induced by the following inner-product: $$\braket{a+ \mathcal{J}_\rho, b + \mathcal{J}_{\rho}} = \rho(a^*b).$$ The image of $a$ under the representation $\pi_{\rho}$ is given by (the extension to $\mathcal{H}_{\rho}$ of) the linear map $b+\mathcal{J}_\rho \mapsto ab + \mathcal{J}_{\rho}$. In particular, $a \in \ker{\pi_{\rho}}$ if and only if $ab \in \mathcal{J}_{\rho}$ for all $b \in \mathcal{A}$, which finishes the proof. 
\end{proof}

For the rest of this section, we fix $\widetilde{p}$ to be a synchronous quantum correlation that is an extreme point of $C_q(X, X, A, A)$ satisfying condition~\ref{self-test-condition}. We also fix the aforementioned PME strategy for $\widetilde{p}$ to be $\widetilde{S} = (\mathbb{C}^d, \mathbb{C}^d, \{\widetilde{E}_{x,a}\}, \{\widetilde{F}_{y,b}\}, \ket{\widetilde{\psi}})$. The representation of $\mathcal{A}_{XA}^{POVM}$ induced by this strategy will be denoted by $\widetilde{\pi}_A \otimes \widetilde{\pi}_B$. Since $\widetilde{\pi}_A \otimes \widetilde{\pi}_B$ is irreducible, so is each of $\widetilde{\pi}_A$ and $\widetilde{\pi}_B$. 

We now prove a series of lemmas that will help us show that abstract state self-testing and self-testing are equivalent for synchronous quantum correlations $\widetilde{p} \in C_q^s(X, A)$ that are extreme points of $C_q(X, X, A, A)$.

\begin{lemma}\label{lem:UCP-map-from-canon-strat}
     Let $\widetilde{\rho}_A: \mathcal{A}_{XA}^{POVM} \to \mathbb{C}$ and $\widetilde{\rho}_B: \mathcal{A}_{XA}^{POVM} \to \mathbb{C}$ be the states defined by $\widetilde{\rho}_A(a) = \widetilde{\rho}(a \otimes 1)$ and $\widetilde{\rho}_B(b) = \widetilde{\rho}(1 \otimes b)$ respectively. Define $\mathcal{J}_A = \{a \in \mathcal{A}_{XA}^{POVM} \mid \widetilde{\rho}_A(a^*a) = 0\}$ and $\mathcal{J}_B = \{b \in \mathcal{A}_{XA}^{POVM} \mid \widetilde{\rho}_B(b^*b) = 0\}$ and let $q_A: \mathcal{A}_{XA}^{POVM} \to \mathcal{A}_{XA}^{POVM}/ \mathcal{J}_A$ and $q_B: \mathcal{A}_{XA}^{POVM} \to \mathcal{A}_{XA}^{POVM}/ \mathcal{J}_B$ be the quotient maps. 
    
    Then there exist unital completely positive maps $\widetilde{\Phi}_A: M_d(\mathbb{C}) \to \mathcal{A}_{XA}^{POVM}$ and $\widetilde{\Phi}_B: \mathcal{A}_{XA}^{POVM}$ such that $q_A(\widetilde{\Phi}_A(\widetilde{E}_{x,a})) = q_A(e_{x,a})$ and $q_B(\widetilde{\Phi}_B(\widetilde{F}_{y,b})) = q_B(f_{y,b})$ for all $x,y \in X$ and $a,b \in B$. 
\end{lemma}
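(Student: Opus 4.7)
The plan is to construct $\widetilde{\Phi}_A$ (and analogously $\widetilde{\Phi}_B$) as a unital completely positive section of the representation $\widetilde{\pi}_A$, exploiting the fact that the quotient $\mathcal{A}_{XA}^{POVM}/\mathcal{J}_A$ is $*$-isomorphic to the matrix algebra $M_d(\mathbb{C})$.

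First I would identify $\mathcal{J}_A$ with $\ker \widetilde{\pi}_A$. Because $\widetilde{S}$ is a PME strategy, $\ket{\widetilde{\psi}}$ is maximally entangled, so
\[
\widetilde{\rho}_A(a) = \braket{\widetilde{\psi} \mid \widetilde{\pi}_A(a) \otimes I \mid \widetilde{\psi}} = \tau_d(\widetilde{\pi}_A(a)),
\]
where $\tau_d$ is the (faithful) normalized trace on $M_d(\mathbb{C})$. Faithfulness of $\tau_d$ immediately forces $\widetilde{\rho}_A(a^{*}a) = 0 \iff \widetilde{\pi}_A(a) = 0$, so $\mathcal{J}_A = \ker \widetilde{\pi}_A$. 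Irreducibility of $\widetilde{\pi}_A$ (inherited from that of $\widetilde{\pi}_A \otimes \widetilde{\pi}_B$ via \cref{lem:canonical-PME-strat}) together with the simplicity of $M_d(\mathbb{C})$ forces $\widetilde{\pi}_A$ to be surjective, so the induced map $\bar{\pi}_A : \mathcal{A}_{XA}^{POVM}/\mathcal{J}_A \xrightarrow{\cong} M_d(\mathbb{C})$ is a $*$-isomorphism with $\bar{\pi}_A(q_A(e_{x,a})) = \widetilde{E}_{x,a}$.

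Next, I would invoke the Choi--Effros lifting theorem, applied to the surjection $\widetilde{\pi}_A : \mathcal{A}_{XA}^{POVM} \to M_d(\mathbb{C})$ and the identity UCP map on $M_d(\mathbb{C})$. Since $M_d(\mathbb{C})$ is nuclear, this yields a UCP map $\widetilde{\Phi}_A : M_d(\mathbb{C}) \to \mathcal{A}_{XA}^{POVM}$ with $\widetilde{\pi}_A \circ \widetilde{\Phi}_A = \mathrm{id}_{M_d(\mathbb{C})}$. The desired property then follows at once: for each $x, a$,
\[
\widetilde{\pi}_A\bigl(\widetilde{\Phi}_A(\widetilde{E}_{x,a})\bigr) = \widetilde{E}_{x,a} = \widetilde{\pi}_A(e_{x,a}),
\]
so $\widetilde{\Phi}_A(\widetilde{E}_{x,a}) - e_{x,a} \in \ker \widetilde{\pi}_A = \mathcal{J}_A$, whence $q_A(\widetilde{\Phi}_A(\widetilde{E}_{x,a})) = q_A(e_{x,a})$. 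The same construction with $\widetilde{\pi}_B$ in place of $\widetilde{\pi}_A$ produces $\widetilde{\Phi}_B$; the PME identity $\widetilde{F}_{y,b} = \widetilde{E}_{y,b}^{T}$ guarantees that $\widetilde{\rho}_B(b) = \tau_d(\widetilde{\pi}_B(b))$ by the same maximally entangled computation, so $\mathcal{J}_B = \ker \widetilde{\pi}_B$ and the rest of the argument is verbatim.

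I expect the main technical obstacle to be the existence of the UCP section itself: a direct attempt via Choi matrices requires lifting a positive element of $M_d(\mathcal{A}_{XA}^{POVM}/\mathcal{J}_A)$ to a positive element of $M_d(\mathcal{A}_{XA}^{POVM})$ \emph{and} preserving unitality, and controlling the normalization by hand is delicate. This is precisely what the Choi--Effros lifting theorem handles, and the nuclearity of $M_d(\mathbb{C})$ is what makes it applicable here.
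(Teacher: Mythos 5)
Your proposal is correct and follows the same overall structure as the paper's proof: identify $\mathcal{J}_A$ with $\ker\widetilde{\pi}_A$, use irreducibility of $\widetilde{\pi}_A$ to get a $*$-isomorphism $\mathcal{A}_{XA}^{POVM}/\mathcal{J}_A \cong M_d(\mathbb{C})$ sending $q_A(e_{x,a}) \mapsto \widetilde{E}_{x,a}$, and then lift the inverse isomorphism through the quotient by the Choi--Effros lifting theorem (using nuclearity of $M_d(\mathbb{C})$). The one place where you depart from the paper is the justification of $\mathcal{J}_A = \ker\widetilde{\pi}_A$: the paper first argues that $(\widetilde{\pi}_A(\cdot)\otimes I_d,\,\ket{\widetilde{\psi}})$ is a cyclic (hence GNS) representation of $\widetilde{\rho}_A$ and then applies its \cref{lem:kernel-GNS-rep}, which computes the GNS kernel of a \emph{tracial} state; you instead observe directly that the maximally entangled state gives $\widetilde{\rho}_A = \tau_d\circ\widetilde{\pi}_A$ with $\tau_d$ the faithful normalized trace, so $\widetilde{\rho}_A(a^*a)=0 \iff \widetilde{\pi}_A(a)=0$. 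Your route is slightly more self-contained and also makes explicit why the traciality hypothesis behind the paper's kernel lemma is satisfied here, a point the paper leaves implicit; the paper's route, on the other hand, recycles \cref{lem:kernel-GNS-rep}, which is stated separately because it is also what underlies the robust argument in the proof of \cref{thm:gow-hat}. Both are valid.
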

\begin{proof}
    Define $\pi'_A(a) \coloneq \widetilde{\pi}_A(a) \otimes I_d$ and $\pi'_B(b) = I_d \otimes \widetilde{\pi}_B$. It is then obvious that they are representations of $\mathcal{A}_{XA}^{POVM}$. Since $(\widetilde{\pi}_A \otimes \widetilde{\pi}_B, \ket{\psi})$ is a GNS representation of $\widetilde{\rho}$, one can show that $(\pi'_A, \ket{\widetilde{\psi}})$ and $(\pi'_B, \ket{\widetilde{\psi}})$ are GNS representations of the states $\widetilde{\rho}_A$ and $\widetilde{\rho}_B$ respectively. Hence, by \cref{lem:kernel-GNS-rep}, we see that $\ker(\widetilde{\pi}_A) = \ker{\pi'_A} = \mathcal{J}_A$ and $\ker(\widetilde{\pi}_B) = \ker{\pi'_B} = \mathcal{J}_B$. 
    
    Since $\widetilde{\pi}_A$ and $\widetilde{\pi}_B$ are irreducible, their images are $M_d(\mathbb{C})$. It is now easy to see that there are $*$-isomorphisms $\Phi_A: M_d(\mathbb{C}) \to \mathcal{A}_{XA}^{POVM}/\mathcal{J}_A$ and $\Phi_B: M_d(\mathbb{C}) \to \mathcal{A}_{XA}^{POVM}/\mathcal{J}_B$ respectively. Indeed, these are the inverses of the $\ast$-isomorphisms $\Phi_A': \mathcal{A}_{XA}^{POVM}/\mathcal{J}_A \to M_d(\mathbb{C})$ and $\Phi_B':\mathcal{A}_{XA}^{POVM}/\mathcal{J}_B \to M_d(\mathbb{C})$, which exist because $\widetilde{\pi}_A$ and $\widetilde{\pi}_B$ are surjective. We also note that $\Phi_A(\widetilde{E}_{x,a}) = q(e_{x,a})$ and $\Phi_B(\widetilde{F}_{y,b}) = q(f_{y,b})$ for all $x,y \in X$ and $a,b \in A$. 
    
    By \cite{choi1976completely} there exist unital completely positive maps $\widetilde{\Phi}_A: M_d(\mathbb{C}) \to \mathcal{A}_{XA}^{POVM}$ and and $\widetilde{\Phi}_B: M_d(\mathbb{C}) \to \mathcal{A}_{XA}^{POVM}$ such that $\Phi_A = q_A \circ \widetilde{\Phi}_A $ and $\Phi_B = q_B \circ \widetilde{\Phi}_B$ respectively. Hence, it follows that $q_A(\widetilde{\Phi}_A(\widetilde{E}_{x,a})) = \Phi_A(\widetilde{E}_{x,a}) = q_A(e_{x,a})$ and $q_B(\widetilde{\Phi}_B(\widetilde{F}_{y,b})) = \Phi_B(\widetilde{F}_{y,b}) q_B(e_{y,b})$ for all $x,y \in X$ and $a,b \in B$. 
\end{proof}

\begin{lemma}\label{lem:dilation-measurement-operators}
    Let $(\mathcal{H}_A, \mathcal{H}_B, \{E_{x,a}\}, \{F_{y,b}\}, \ket{\psi})$ be a full-rank strategy for $\widetilde{p}$ and $\pi_A: \mathcal{A}_{XA}^{POVM} \to \mathcal{B(H)}_A$ and $\pi_B: \mathcal{A}_{XA}^{POVM} \to \mathcal{A}_{XA}^{POVM}$ be the associated representations. Then, there exists $s_A, s_B \in \mathbb{N}$ and isometries $V_A: \mathcal{H}_A \to \mathbb{C}^d \otimes \mathbb{C}^{s_A}$ and $V_B: \mathcal{H}_B \to \mathbb{C}^d \otimes \mathbb{C}^{s_B}$ such that $E_{x,a} = V_A^*(\widetilde{E}_{x,a} \otimes I_{s_A})V_A$ and $F_{y,b} = V_B^*(\widetilde{F}_{y,b} \otimes I_{s_B}) V_B$ for all $x,y \in A$ and $a,b \in A$.
\end{lemma}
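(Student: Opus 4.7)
The plan is to construct $V_A$ and $V_B$ as Stinespring dilations of the UCP maps $\pi_A \circ \widetilde{\Phi}_A$ and $\pi_B \circ \widetilde{\Phi}_B$, where $\widetilde{\Phi}_A, \widetilde{\Phi}_B$ are the UCP lifts provided by \cref{lem:UCP-map-from-canon-strat}. The central task is to verify that $\pi_A$ vanishes on $\mathcal{J}_A$ (and $\pi_B$ vanishes on $\mathcal{J}_B$), so that $\pi_A \circ \widetilde{\Phi}_A$ recovers the concrete measurement operators $E_{x,a}$ from the ideal ones $\widetilde{E}_{x,a}$.

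The first step is to establish the kernel containment $\mathcal{J}_A \subseteq \ker \pi_A$. For this I would invoke condition~\ref{self-test-condition}: since $\widetilde{\rho}$ is the unique finite-dimensional state implementing $\widetilde{p}$, the state $\rho_S$ induced by $S$ coincides with $\widetilde{\rho}$. Writing $\rho_A = \tr_B(\ket{\psi}\bra{\psi})$ for the reduced density matrix of $\ket{\psi}$, we obtain $\widetilde{\rho}_A(a) = \rho_S(a \otimes 1) = \tr(\rho_A \pi_A(a))$. The full-rank hypothesis on $\ket{\psi}$ makes $\rho_A$ invertible, so $\widetilde{\rho}_A(a^*a) = \tr(\rho_A \pi_A(a)^* \pi_A(a)) = 0$ forces $\pi_A(a) = 0$. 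Hence $\mathcal{J}_A \subseteq \ker \pi_A$, and the analogous argument gives $\mathcal{J}_B \subseteq \ker \pi_B$.

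Combining this with the property $\widetilde{\Phi}_A(\widetilde{E}_{x,a}) - e_{x,a} \in \mathcal{J}_A$ from \cref{lem:UCP-map-from-canon-strat} yields $\pi_A(\widetilde{\Phi}_A(\widetilde{E}_{x,a})) = \pi_A(e_{x,a}) = E_{x,a}$, and analogously for Bob. I would then apply Stinespring's dilation theorem to the UCP maps $\pi_A \circ \widetilde{\Phi}_A: M_d(\mathbb{C}) \to \mathcal{B}(\mathcal{H}_A)$ and $\pi_B \circ \widetilde{\Phi}_B: M_d(\mathbb{C}) \to \mathcal{B}(\mathcal{H}_B)$. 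Since every $\ast$-representation of the matrix algebra $M_d(\mathbb{C})$ is unitarily equivalent to a multiple of the identity representation $X \mapsto X \otimes I_s$, Stinespring yields isometries $V_A: \mathcal{H}_A \to \mathbb{C}^d \otimes \mathbb{C}^{s_A}$ and $V_B: \mathcal{H}_B \to \mathbb{C}^d \otimes \mathbb{C}^{s_B}$ satisfying $\pi_A(\widetilde{\Phi}_A(X)) = V_A^*(X \otimes I_{s_A})V_A$ for all $X \in M_d(\mathbb{C})$, and similarly for $\pi_B \circ \widetilde{\Phi}_B$. Specializing to $X = \widetilde{E}_{x,a}$ and $X = \widetilde{F}_{y,b}$ yields the claimed identities.

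The main obstacle is the kernel containment argument: both hypotheses are essential, with the self-testing condition guaranteeing $\rho_S = \widetilde{\rho}$ and the full-rank condition guaranteeing that the marginal $\widetilde{\rho}_A$ is faithful relative to $\pi_A$. Without either, $\pi_A \circ \widetilde{\Phi}_A$ could fail to agree with $\pi_A$ on the generators $e_{x,a}$, and the dilation strategy would not yield the desired formulas.
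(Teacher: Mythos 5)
Your proof is correct and follows essentially the same route as the paper: both use the uniqueness of $\widetilde{\rho}$ to identify $\rho_S$ with $\widetilde{\rho}$, use the full-rank hypothesis to conclude that $\mathcal{J}_A \subseteq \ker\pi_A$ (the paper does this inline via $\bra{\psi}(\cdot)\ket{\psi}=0$ with full Schmidt rank, you via invertibility of the reduced density matrix — equivalent), combine with $\widetilde{\Phi}_A(\widetilde{E}_{x,a}) - e_{x,a}\in\mathcal{J}_A$ from \cref{lem:UCP-map-from-canon-strat} to get $\pi_A\circ\widetilde{\Phi}_A(\widetilde{E}_{x,a}) = E_{x,a}$, and then apply Stinespring together with the uniqueness of the irreducible representation of $M_d(\mathbb{C})$. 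Your presentation packages the kernel containment as a standalone step before specializing, which is a slightly cleaner organization of the identical argument.
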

\begin{proof}
    Consider the UCP maps $\pi_A' \coloneq \pi_A \circ \widetilde{\Phi}_A$ and $\pi_B' \coloneq \pi_B \circ \widetilde{\Phi}_B$ respectively. We now show that $\pi_A'(\widetilde{E}_{x,a}) = E_{x,a}$ and $\pi_B'(\widetilde{F}_{y,b}) = F_{y,b}$ for all $x,y \in X$ and $a,b \in A$. Since $q_A(\widetilde{\Phi}_A(\widetilde{E}_{x,a})) = q(e_{x,a})$ for all $x \in X$ and $a \in A$, we have that $\widetilde{\rho}_A((\widetilde{\Phi}_A(\widetilde{E}_{x,a})-e_{x,a})^*(\widetilde{\Phi}_A(\widetilde{E}_{x,a})-e_{x,a})) = 0$. However, since $\widetilde{\rho}$ is the unique state implementing $\widetilde{p}$, we have 
    \begin{align*}
        & \widetilde{\rho}_A((\widetilde{\Phi}_A(\widetilde{E}_{x,a})-e_{x,a})^*(\widetilde{\Phi}_A(\widetilde{E}_{x,a})-e_{x,a})) \\
        = & \widetilde{\rho}((\widetilde{\Phi}_A(\widetilde{E}_{x,a})-e_{x,a})^*(\widetilde{\Phi}_A(\widetilde{E}_{x,a})-e_{x,a}) \otimes 1) \\
        = &\braket{ \psi \mid (\pi_A'(\widetilde{E}_{x,a}) \otimes I_d -E_{x,a} \otimes I_d)^*(\pi_A'(\widetilde{E}_{x,a}) \otimes I_d - E_{x,a} \otimes I_d)\mid \psi },
    \end{align*}
    so that $\braket{ \psi \mid (\pi_A'(\widetilde{E}_{x,a}) \otimes I_d -E_{x,a} \otimes I_d)^*(\pi_A'(\widetilde{E}_{x,a}) \otimes I_d - E_{x,a} \otimes I_d)\mid \psi } = 0$. This implies that $\pi_A'(\widetilde{E}_{x,a}) = E_{x,a}$ since $\ket{\psi}$ is of full Schmidt-rank. Similarly, we can show that $\pi_B'(\widetilde{F}_{y,b}) = F_{y,b}$ for all $y \in X$ and $ b\in A$. 
    
    Then, it follows from Stinespring's theorem that there are Hilbert spaces $\mathcal{K}_A$ and $\mathcal{K}_B$, $*$-homomorphisms $\pi^{\prime\prime}_A: M_d(\mathbb{C}) \to \mathcal{B}(\mathcal{K}_A)$ and $\pi^{\prime\prime}_B: M_d(\mathbb{C}) \to \mathcal{B}(\mathcal{K}_B)$, and isometries $V_A:\mathcal{H}_A \to \mathcal{K}_A$, $V_A: \mathcal{H}_B \to \mathcal{K}_A$ such that $\pi_A' = V_A^* \pi_A^{\prime\prime} V_A$ and $\pi_B' = V_A^* \pi_A^{\prime\prime} V_B$. Lastly, since $M_d(\mathbb{C})$ has a unique irreducible representation, without loss of generality we may assume that $\mathcal{K}_A = \mathbb{C}^d \otimes \mathbb{C}^{s_A}$, $\mathcal{K}_B = \mathbb{C}^d \otimes \mathbb{C}^{s_B}$; and that $\pi_A(a) = a \otimes I_{s_A}$ and $\pi_B(b) = b \otimes I_{s_B}$. Hence, $E_{x,a} = V_A^*(\widetilde{E}_{x,a} \otimes I_{s_A})V_A$ and $F_{y,b} = V_B^*(\widetilde{F}_{y,b} \otimes I_{s_B}) V_B$ for all $x,y \in A$ and $a,b \in A$.  
\end{proof}

\begin{lemma}\label{lem:unique-eigenvector}
     Let $m = \abs{X}$ and $n = \abs{A}$ and $\widetilde{M} = \sum_{x \in X, a  \in A} \widetilde{E}_{x,a} \otimes \widetilde{F}_{x,a}$. Then, $m$ is the largest eigenvalue of $\widetilde{M}$ and the eigenspace corresponding to $m$ is the one-dimensional space spanned by the maximally entangled state $\ket{\widetilde{\psi}}$. 
\end{lemma}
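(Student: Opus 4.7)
The plan is to first verify that $m$ is indeed an eigenvalue of $\widetilde{M}$ with eigenvector $\ket{\widetilde{\psi}}$, then bound $\widetilde{M}$ above by $m$, and finally characterise all vectors attaining this maximum by reducing to a commutant argument and invoking irreducibility of $\widetilde{\pi}_A$.

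For the first step I would use the standard ``mirror'' property of the maximally entangled state: for any operator $A$ on $\mathbb{C}^d$ one has $(A \otimes I)\ket{\widetilde{\psi}} = (I \otimes A^T)\ket{\widetilde{\psi}}$. Since $\widetilde{F}_{x,a} = \widetilde{E}_{x,a}^T$ and $\widetilde{E}_{x,a}$ is a projection, this gives
\begin{equation*}
(\widetilde{E}_{x,a} \otimes \widetilde{F}_{x,a})\ket{\widetilde{\psi}} = (\widetilde{E}_{x,a}^2 \otimes I)\ket{\widetilde{\psi}} = (\widetilde{E}_{x,a} \otimes I)\ket{\widetilde{\psi}}.
\end{equation*}
Summing over $a$ and using $\sum_a \widetilde{E}_{x,a} = I$ yields $\widetilde{M}\ket{\widetilde{\psi}} = m\ket{\widetilde{\psi}}$.

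Next I would write $\widetilde{M} = \sum_{x \in X} P_x$ where $P_x := \sum_a \widetilde{E}_{x,a} \otimes \widetilde{F}_{x,a}$. Because $\{\widetilde{E}_{x,a}\}_a$ and $\{\widetilde{F}_{x,a}\}_a$ are each PVMs, the usual orthogonality calculation $P_x^2 = \sum_{a,a'} \widetilde{E}_{x,a}\widetilde{E}_{x,a'} \otimes \widetilde{F}_{x,a}\widetilde{F}_{x,a'} = P_x$ shows each $P_x$ is a projection, hence $\|\widetilde{M}\| \le m$. Combined with the previous step, $m$ is therefore the largest eigenvalue.

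For the uniqueness part, suppose $\widetilde{M}\ket{\phi} = m\ket{\phi}$ for some unit vector $\ket{\phi}$. Then $\sum_x \braket{\phi|P_x|\phi} = m$, and since each $\braket{\phi|P_x|\phi} \le 1$, we must have $\braket{\phi|P_x|\phi} = 1$, which forces $P_x\ket{\phi} = \ket{\phi}$ for all $x$. Now identify $\mathbb{C}^d \otimes \mathbb{C}^d$ with $M_d(\mathbb{C})$ so that $\ket{\phi} \leftrightarrow \Phi$ and $\ket{\widetilde{\psi}} \leftrightarrow \tfrac{1}{\sqrt{d}} I$. Under this identification, $(A \otimes B^T)\ket{\phi} \leftrightarrow A \Phi B$, so the condition $P_x\ket{\phi} = \ket{\phi}$ becomes
\begin{equation*}
\sum_{a \in A} \widetilde{E}_{x,a}\, \Phi\, \widetilde{E}_{x,a} = \Phi \quad \text{for all } x \in X.
\end{equation*}
The standard ``pinching'' observation is that the fixed points of this map are exactly those $\Phi$ commuting with every $\widetilde{E}_{x,a}$: multiplying both sides by $\widetilde{E}_{x,b}$ on the left (or right) and using the orthogonality of the PVM gives $\widetilde{E}_{x,b}\Phi = \widetilde{E}_{x,b}\Phi \widetilde{E}_{x,b} = \Phi \widetilde{E}_{x,b}$. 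Hence $\Phi$ lies in the commutant of $\widetilde{\pi}_A(\mathcal{A}_{XA}^{POVM})$.

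The main obstacle, modulo these routine calculations, is precisely the last step, but it is handled cleanly by \cref{lem:canonical-PME-strat}: the representation $\widetilde{\pi}_A$ is irreducible on $\mathbb{C}^d$, so its commutant consists only of scalar multiples of the identity. Therefore $\Phi = cI$ for some $c \in \mathbb{C}$, which under the identification means $\ket{\phi}$ is a scalar multiple of $\ket{\widetilde{\psi}}$. This proves that the $m$-eigenspace of $\widetilde{M}$ is the one-dimensional line spanned by $\ket{\widetilde{\psi}}$.
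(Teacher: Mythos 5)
Your proof is correct, and it takes a cleaner, more self-contained route than the paper at two points. For the upper bound $\|\widetilde{M}\| \le m$, the paper uses the operator inequality $\widetilde{E}_{x,a} \otimes \widetilde{F}_{x,a} \le \widetilde{E}_{x,a} \otimes I$ and sums to get $\widetilde{M} \le m I$; you instead decompose $\widetilde{M} = \sum_{x} P_x$ as a sum of $|X|$ genuine projections $P_x = \sum_a \widetilde{E}_{x,a}\otimes\widetilde{F}_{x,a}$. Both are fine, but your decomposition pays off in the uniqueness step: since $\sum_x \braket{\phi\mid P_x \mid \phi} = m$ with each term at most $1$, you conclude $P_x\ket{\phi}=\ket{\phi}$ for every $x$ \emph{separately}, which after vectorization gives the per-$x$ pinching identity $\sum_a \widetilde{E}_{x,a}\Phi\widetilde{E}_{x,a}=\Phi$. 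You then obtain $[\Phi,\widetilde{E}_{x,a}]=0$ by the elementary two-line pinching calculation (multiply left and right by $\widetilde{E}_{x,b}$), completely avoiding the paper's appeal to the fixed-point theorem for quantum channels (\cite{watrous2018theory}*{Theorem 4.25}). From there both proofs finish identically via irreducibility of $\widetilde{\pi}_A$. So your argument is equivalent in conclusion but removes the external black-box citation, which is a genuine improvement in transparency; the cost is the small extra observation that each $P_x$ is a projection, which you correctly verify.
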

\begin{proof}
    Let $D = \frac{1}{\sqrt{d}}I_{d}$, so that $\ket{\widetilde{\psi}} = \vect(D)$. Consider $\widetilde{M} \ket{\widetilde{\psi}}$:
    \begin{align*}
        \widetilde{M} \ket{\widetilde{\psi}} & = \sum_{x \in X, a \in A} \widetilde{E}_{x,a} \otimes (\widetilde{E}_{x,a})^T \vect{(D)} \\
        & = \sum_{x \in X, a \in A} \vect(\widetilde{E}_{x,a}D\widetilde{E}_{x,a}) \\
        & =  \frac{1}{\sqrt{d}} \sum_{x \in X, a \in A} \vect{(\widetilde{E}_{x,a})}\\
        & = \frac{1}{\sqrt{d}}\vect{(m I_d)} = m \ket{\widetilde{\psi}}
    \end{align*}
    Hence, $\ket{\widetilde{\psi}}$ is an eigenvector of $M$ with eigenvalue $m$. 
    
    Since $\widetilde{M} \geq 0$, the largest eigenvalue of $\widetilde{M}$ is $\norm{\widetilde{M}}$. Moreover, we have 
    \begin{align*}
        \widetilde{M} & = \sum_{x  \in X, a \in A} (\widetilde{E}_{x,a} \otimes \widetilde{F}_{x,a}) \\
        & \leq \sum_{x  \in X, a \in A} \widetilde{E}_{x,a} \otimes I_d \\
        & = m (I_d \otimes I_d),
    \end{align*}

    which shows that $\norm{\widetilde{M}} \leq m$. Hence, $\norm{\widetilde{M}} = m$ and $m$ is the largest eigenvalue of $\widetilde{M}$. 

    In order to show that the eigenspace corresponding to $m$ is one-dimensional, let $\ket{\varphi} \in \mathbb{C}^d \otimes \mathbb{C}^d$ be a unit vector such that $\widetilde{M} \ket{\varphi} = m \ket{\varphi}$. Let $B = \mat{\ket{\varphi}}$. Then, $\widetilde{M} \ket{\varphi} = m \ket{\varphi}$ is equivalent to saying $\sum_{x \in X, a \in A} \widetilde{E}_{x,a}B \widetilde{E}_{x,a} = m B$. Consider the quantum channel 
    $$X \mapsto \frac{1}{m} \sum_{x \in X, a \in A} \widetilde{E}_{x,a} X \widetilde{E}_{x,a}$$ for $X \in M_d(\mathbb{C})$. 
    Since $B$ is a fixed point of this quantum channel, it follows from \cite{watrous2018theory}*{Theorem 4.25} that $B$ commutes with each $\widetilde{E}_{x,a}$. However, since the representation $\widetilde{\pi}$ is irreducible, the projections $\widetilde{E}_{x,a}$ generate all of $M_d(\mathbb{C})$. Hence, $B$ commutes with all elements in $M_d(\mathbb{C})$, which implies that $B = \lambda I_d$ for some $\lambda \in \mathbb{C}$, which shows that $\ket{\varphi} = \ket{\widetilde{\psi}}$ as required.  
\end{proof}

We shall prove a robust generalisation of the following result in \cref{lem:local-eps-dil-implies-closeness-of-state}. For now we give a version that follows from the results of \cite{paddock_operator-algebraic_2023}. 

\begin{lemma}\label{lem:local-dilation-implies-equal-state}
     Let $S = (\mathcal{H}_A, \mathcal{H}_B, \{E_{x,a}\}, \{F_{y,b}\}, \ket{\psi})$ be a strategy for $\widetilde{p}$ such that $\widetilde{S}$ is a local dilation of $S$. Then, the state on $\mathcal{A}_{XA}^{POVM} \otimes \mathcal{A}_{XA}^{POVM}$ induced by $S$ is $\widetilde{\rho}$. 
\end{lemma}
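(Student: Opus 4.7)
The plan is to reduce this lemma to a one-line invocation of the standing hypothesis~\ref{self-test-condition}, which asserts that $\widetilde{\rho}$ is the \emph{unique} finite-dimensional state on $\mathcal{A}_{XA}^{POVM} \otimes_{\min} \mathcal{A}_{XA}^{POVM}$ implementing $\widetilde{p}$. So the whole task is to verify that the state induced by $S$ is itself a finite-dimensional state implementing $\widetilde{p}$, after which uniqueness closes the argument.

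First I would record the induced state explicitly. By the universal property of $\mathcal{A}_{XA}^{POVM}$, the strategy $S$ determines unital $\ast$-representations $\pi_A \colon \mathcal{A}_{XA}^{POVM} \to \mathcal{B}(\mathcal{H}_A)$ and $\pi_B \colon \mathcal{A}_{XA}^{POVM} \to \mathcal{B}(\mathcal{H}_B)$ with $\pi_A(e_{x,a}) = E_{x,a}$ and $\pi_B(e_{y,b}) = F_{y,b}$, and hence the state
\begin{equation*}
    \rho_S(\alpha \otimes \beta) \;\coloneq\; \braket{\psi \mid \pi_A(\alpha) \otimes \pi_B(\beta) \mid \psi}
\end{equation*}
on $\mathcal{A}_{XA}^{POVM} \otimes_{\min} \mathcal{A}_{XA}^{POVM}$. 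The hypothesis that $\widetilde{S}$ is a local dilation of $S$ forces, via the definitions in \cref{eq:self-test-measurement-and-state}--\cref{eq:self-test-state}, the Hilbert spaces $\mathcal{H}_A$ and $\mathcal{H}_B$ to be finite-dimensional; thus $\rho_S$ is a finite-dimensional state in the sense of the preliminaries.

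Next, since $S$ is a strategy for $\widetilde{p}$, evaluating on generators gives
\begin{equation*}
    \rho_S(e_{x,a} \otimes e_{y,b}) \;=\; \braket{\psi \mid E_{x,a} \otimes F_{y,b} \mid \psi} \;=\; \widetilde{p}(a,b \mid x,y),
\end{equation*}
so $\rho_S$ implements $\widetilde{p}$. Condition~\ref{self-test-condition} then yields $\rho_S = \widetilde{\rho}$. There is no real obstacle here --- the argument is essentially a tautology given the standing hypothesis, and the specific intertwining identities of local dilation are not used beyond their implicit consequence of finite-dimensionality. The genuine work will be concentrated in the robust analogue \cref{lem:local-eps-dil-implies-closeness-of-state}, where equality becomes quantitative closeness and the intertwining relations~(\ref{eq:robust-self-test-measurement-and-state})--(\ref{eq:robust-self-test-state}) will have to be exploited explicitly.
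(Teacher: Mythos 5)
The paper proves this lemma by citing \cite{paddock_operator-algebraic_2023}*{Proposition 4.8}, which is a purely structural statement: if $\widetilde{S}$ is a local dilation of $S$, then the two strategies induce \emph{the same} abstract state on $\mathcal{A}_{XA}^{POVM} \otimes \mathcal{A}_{XA}^{POVM}$. The point is that the exact intertwining identities in~\cref{eq:self-test-measurement-and-state}--\cref{eq:self-test-state} propagate from generators to arbitrary monomials (exactly the calculation carried out, with $\epsilon$'s attached, in the proof of \cref{lem:local-eps-dil-implies-closeness-of-state}), so that $\braket{\psi \mid \pi_A(\alpha) \otimes \pi_B(\beta) \mid \psi} = \braket{\widetilde{\psi} \mid \widetilde{\pi}_A(\alpha) \otimes \widetilde{\pi}_B(\beta) \mid \widetilde{\psi}}$ for all words $\alpha, \beta$. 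No uniqueness assumption on $\widetilde{p}$ enters.

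Your proof instead invokes condition~\ref{self-test-condition} and discards the local-dilation hypothesis (you say explicitly that the intertwining identities ``are not used''). That is a genuine gap, not merely a shortcut: the lemma is applied in the converse direction of \cref{thm:main-theorem-1}, where the goal is precisely to \emph{establish} that $\widetilde{p}$ satisfies condition~\ref{self-test-condition}. In that application one assumes only that $\widetilde{p}$ self-tests $\widetilde{S}$, sets $\widetilde{\rho}$ to be the state induced by $\widetilde{S}$, and then uses this lemma to conclude that every strategy $S$ for $\widetilde{p}$ induces $\widetilde{\rho}$ --- thereby proving the uniqueness. If the lemma's proof itself rests on that uniqueness, the argument is circular. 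So while your derivation is formally consistent with the standing hypothesis at the top of Section 3, it cannot serve the role the lemma must play in the paper. The content you need to supply is the structural fact that local dilation forces equality of induced states; you get that either by citing Proposition 4.8 of \cite{paddock_operator-algebraic_2023} or by specialising the inductive monomial-by-monomial estimate of \cref{lem:local-eps-dil-implies-closeness-of-state} to $\epsilon = 0$.
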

\begin{proof}
    Follows from \cite{paddock_operator-algebraic_2023}*{Proposition 4.8}.
\end{proof}

We shall now show the non-robust part of \cref{thm:main-result}. We shall be showing this under a weaker assumption on the correlation $\widetilde{p}$. In particular, we only assume that there is a unique state $\rho$ in $S_{\fin}(\mathcal{A}_{XA}^{POVM} \otimes \mathcal{A}_{XA}^{POVM})$ such that $p(a,b \mid x,y) = \rho(e_{x,a} \otimes e_{y,b})$ for all $x,y \in X$ and $a,b \in A$, instead of assuming that there is only one such state in all of $S(\mathcal{A}_{XA}^{POVM} \otimes \mathcal{A}_{XA}^{POVM})$. 

\begin{theorem}\label{thm:main-theorem-1}
    Let $\widetilde{p} = \{\widetilde{p}(a,b\mid x,y)\}_{x, y \in X, a,b \in A}$ be a synchronous quantum correlation. Then, the following are equivalent: 
    \begin{enumerate}[label = \roman*.]
        \item $\widetilde{p}$ satisfies condition~\ref{self-test-condition}
        \item The correlation $\widetilde{p}$ self-tests a PME strategy $(\mathbb{C}^d, \mathbb{C}^d, \{\widetilde{E}_{x,a}\}, \{\widetilde{F}_{x,a}\}, \ket{\widetilde{\psi}} )$.
    \end{enumerate}
\end{theorem}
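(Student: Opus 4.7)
The implication \textit{ii} $\implies$ \textit{i} follows from \cref{lem:local-dilation-implies-equal-state} combined with the standard correspondence between finite-dimensional states on $\mathcal{A}_{XA}^{POVM} \otimes_{\min} \mathcal{A}_{XA}^{POVM}$ and finite-dimensional tensor-product strategies: every finite-dimensional state implementing $\widetilde{p}$ arises from some finite-dimensional strategy, which by hypothesis admits $\widetilde{S}$ as a local dilation, forcing the induced state to equal $\widetilde{\rho}$. The substantive content is thus the converse.

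For \textit{i} $\implies$ \textit{ii}, fix the canonical PME strategy $\widetilde{S} = (\mathbb{C}^d, \mathbb{C}^d, \{\widetilde{E}_{x,a}\}, \{\widetilde{F}_{y,b}\}, \ket{\widetilde{\psi}})$ from \cref{lem:canonical-PME-strat} and let $S = (\mathcal{H}_A, \mathcal{H}_B, \{E_{x,a}\}, \{F_{y,b}\}, \ket{\psi})$ be any finite-dimensional strategy for $\widetilde{p}$. After restricting to the supports of the reduced density operators of $\ket{\psi}$ I may assume $\ket{\psi}$ has full Schmidt rank; then \cref{lem:dilation-measurement-operators} supplies isometries $V_A : \mathcal{H}_A \to \mathbb{C}^d \otimes \mathbb{C}^{s_A}$ and $V_B : \mathcal{H}_B \to \mathbb{C}^d \otimes \mathbb{C}^{s_B}$ with $E_{x,a} = V_A^*(\widetilde{E}_{x,a} \otimes I_{s_A})V_A$ and $F_{y,b} = V_B^*(\widetilde{F}_{y,b} \otimes I_{s_B})V_B$.

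To obtain the state part \cref{eq:self-test-state} of the local dilation I exploit the synchronicity of $\widetilde{p}$ through the operator $\widetilde{M} = \sum_{x,a} \widetilde{E}_{x,a} \otimes \widetilde{F}_{x,a}$ of \cref{lem:unique-eigenvector}:
\[
\braket{V_A \otimes V_B \psi \mid \widetilde{M} \otimes I \mid V_A \otimes V_B \psi} = \sum_{x \in X, a \in A} \braket{\psi \mid E_{x,a} \otimes F_{x,a} \mid \psi} = \sum_{x,a} \widetilde{p}(a,a \mid x,x) = \abs{X}.
\]
By \cref{lem:unique-eigenvector}, $\abs{X}$ is the top eigenvalue of $\widetilde{M}$ with one-dimensional eigenspace $\mathbb{C}\ket{\widetilde{\psi}}$, so the top eigenspace of $\widetilde{M} \otimes I$ is $\mathbb{C}\ket{\widetilde{\psi}} \otimes (\mathbb{C}^{s_A} \otimes \mathbb{C}^{s_B})$. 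Consequently $(V_A \otimes V_B)\ket{\psi} = \ket{\widetilde{\psi}}\ket{\psi_{\aux}}$ for some auxiliary vector $\ket{\psi_{\aux}} \in \mathbb{C}^{s_A} \otimes \mathbb{C}^{s_B}$.

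The main obstacle is upgrading this to the measurement equation \cref{eq:self-test-measurement-and-state}. Expanding yields
\[
(V_A \otimes V_B)(E_{x,a} \otimes F_{y,b})\ket{\psi} = (V_A V_A^* \otimes V_B V_B^*)\bigl((\widetilde{E}_{x,a} \otimes I_{s_A}) \otimes (\widetilde{F}_{y,b} \otimes I_{s_B})\bigr)\ket{\widetilde{\psi}}\ket{\psi_{\aux}},
\]
so it suffices to show that the range projections $V_A V_A^*$ and $V_B V_B^*$ commute with $\widetilde{E}_{x,a} \otimes I_{s_A}$ and $\widetilde{F}_{y,b} \otimes I_{s_B}$ respectively. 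I plan to extract this from the maximal entanglement of $\ket{\widetilde{\psi}}$: tracing out the $B$-side of the identity $(V_A \otimes V_B)\ket{\psi} = \ket{\widetilde{\psi}}\ket{\psi_{\aux}}$ yields $V_A \rho_A V_A^* = \tfrac{1}{d} I_d \otimes \rho_{\aux,A}$, where $\rho_A = \tr_{\mathcal{H}_B}(\ket{\psi}\bra{\psi})$ has full rank thanks to the Schmidt-rank reduction. Hence the range of $V_A$ equals $\mathbb{C}^d \otimes \mathrm{supp}(\rho_{\aux,A})$, so $V_A V_A^* = I_d \otimes P_A$ for some projection $P_A$, which trivially commutes with $\widetilde{E}_{x,a} \otimes I_{s_A}$; an analogous argument applies to $V_B$. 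Substituting back produces the measurement identity and completes the local dilation.
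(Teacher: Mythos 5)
Your proposal is correct and follows the same route as the paper: reduce to full-rank strategies, invoke \cref{lem:dilation-measurement-operators} to compress the measurements through isometries, and use \cref{lem:unique-eigenvector} to pin down $(V_A\otimes V_B)\ket{\psi}$. In fact, you supply an extra step the paper leaves implicit — the range-projection identity $V_AV_A^*=I_d\otimes P_A$, derived from full Schmidt rank and the maximal entanglement of $\ket{\widetilde{\psi}}$, which is exactly what is needed to upgrade the state equation \cref{eq:self-test-state} to the full local-dilation equation \cref{eq:self-test-measurement-and-state}; the paper instead silently asserts the dilation after the state equation. The only thing to tighten is your ``after restricting to supports I may assume full Schmidt rank'' step: the passage from a full-rank self-test to an assumption-free self-test is not entirely elementary and the paper handles it by citing \cite{baptista_mathematical_2023}*{Theorem 4.1}, which you should invoke as well.
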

\begin{proof}
    First, we shall assume that $\widetilde{p}$ satisfies condition~\ref{self-test-condition} and show that $\widetilde{p}$ is a full-rank self-test for the PME strategy $(\mathbb{C}^d, \mathbb{C}^d, \{\widetilde{E}_{x,a}\}, \{\widetilde{F}_{x,a}\}, \ket{\widetilde{\psi}} )$. It then follows that $\widetilde{p}$ is an assumption-free self-test by \cite{baptista_mathematical_2023}*{Theorem 4.1}. 

    Let $S = (\mathbb{C}^{r_A}, \mathbb{C}^{r_B}, \{E_{x,a}\}, \{F_{y,b}\}, \ket{\psi})$ be a full-rank strategy for $\widetilde{p}$. By our assumption and \cref{lem:dilation-measurement-operators}, we have that there are isometries $V_A: \mathbb{C}^{r_A} \to \mathbb{C}^d \otimes \mathbb{C}^{s_A}$ and $V_B: \mathbb{C}^{r_B} \to \mathbb{C}^d \otimes \mathbb{C}^{s_B}$ such that $ E_{x,a} = V_A^*(\widetilde{E}_{x,a} \otimes I_{s_A}) V_A$ and $F_{y,b} = V_B^*(\widetilde{F}_{y,b} \otimes I_{s_B}) V_B$ for all $x, y \in X$ and $a,b \in A$. Let $V \coloneq V_A \otimes V_B$ and note that 
    \begin{align*}
        m = \sum_{x \in X, a \in A} \braket{\psi \mid E_{x,a} \otimes F_{y,b} \mid \psi } & = \sum_{x \in X, a \in A} \braket{V \psi \mid (\widetilde{E}_{x,a} \otimes I_{s_A}) \otimes (\widetilde{F}_{y,b}\otimes I_{s_B}) \mid V \psi } \\
        & = \braket{V \psi \mid \widetilde{M} \otimes I_{s_A} \otimes I_{s_B} \mid V \psi},
    \end{align*}
    which shows that $V\ket{\psi}$ is an eigenvector of $\widetilde{M} \otimes I_{s_A} \otimes I_{s_B}$ with eigenvalue $m$, where $\widetilde{M}$ is as defined in \cref{lem:unique-eigenvector}. It follows from \cref{lem:unique-eigenvector} that $\ket{\widetilde{\psi}}$ spans the eigenspace of $\widetilde{M}$ corresponding to $m$. Hence, we have that $V \ket{\psi} = \ket{\widetilde{\psi}}  \otimes \ket{\psi_{aux}}$ for some unit vector $\ket{\psi_{aux}} \in \mathbb{C}^d \otimes \mathbb{C}^d$. This shows that $\widetilde{S}$ is a local dilation of $S$ which finishes one direction of the proof. 

    For the other direction, let us assume that there is a PME strategy $\widetilde{S}$ that is self-tested by $\widetilde{p}$ and let $\widetilde{\rho}$ be the state induced $\widetilde{S}$ on $\mathcal{A}_{XA}^{POVM} \otimes \mathcal{A}_{XA}^{POVM}$. Let $S$ be a strategy that induces $\widetilde{p}$. Then, $\widetilde{S}$ is a local dilation of $S$, which together with \cref{lem:local-dilation-implies-equal-state} implies that the state induced by $S$ is $\widetilde{\rho}$. This finishes the proof. 
\end{proof}

\section{Characterising Robust Synchronous Self-Tests}

In this section, we shall establish \cref{thm:main-result}. We begin this section by showing that $S(\mathcal{A}_{XA}^{POVM} \otimes_{\min} \mathcal{A}_{YB}^{POVM})$ is the closure of $S_{\fin}(\mathcal{A}_{XA}^{POVM} \otimes_{\min} \mathcal{A}_{YB}^{POVM})$. This implies that $\mathcal{A}_{XA}^{POVM} \otimes_{\min} \mathcal{A}_{YB}^{POVM}$ is residually finite dimensional, i.e. it has a separating family of finite-dimensional representations. 
\begin{prop}\label{prop:state-correlation-correspondence-qa}
    $S(\mathcal{A}_{XA}^{POVM} \otimes_{\min} \mathcal{A}_{YB}^{POVM})$ is the weak-$*$ closure of $S_{\fin}(\mathcal{A}_{XA}^{POVM} \otimes_{\min} \mathcal{A}_{YB}^{POVM})$. In particular, $p \in C_{qa}(X,Y,A,B)$ if and only if there is a state $\rho$ on $\mathcal{A}_{XA}^{POVM} \otimes_{\min} \mathcal{A}_{YB}^{POVM}$ such that $p(a,b \mid x,y) = \rho(e_{x,a} \otimes e_{y,b})$ for all $x \in X, y \in Y$, $a \in A ,b \in A$. 
\end{prop}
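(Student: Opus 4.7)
The plan is to reduce the statement to the residual finite dimensionality (RFD) of the minimal tensor product $\mathcal{A}_{XA}^{POVM} \otimes_{\min} \mathcal{A}_{YB}^{POVM}$, and then apply a Hahn--Banach separation argument to obtain weak-$*$ density of finite-dimensional states, from which the ``in particular'' statement about $C_{qa}$ follows by standard bookkeeping.

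First, I would record that each factor $\mathcal{A}_{XA}^{POVM}$ is RFD. This is the fact attributed to Zhao in the acknowledgements: the universal $C^*$-algebra generated by positive contractions $\{e_{x,a}\}$ summing to $1$ for each $x$ is isomorphic to a full free product of single-POVM algebras, each of which is a universal $C^*$-algebra of positive contractions with a sum constraint, and these are known to be RFD (e.g. via concrete matrix models for POVMs). Once each factor is RFD, the minimal tensor product is RFD as well: if $\{\pi_i : \mathcal{A}_{XA}^{POVM} \to B(\mathcal{H}_i)\}_i$ and $\{\sigma_j : \mathcal{A}_{YB}^{POVM} \to B(\mathcal{K}_j)\}_j$ are separating families of finite-dimensional representations, then $\{\pi_i \otimes \sigma_j\}_{i,j}$ is a separating family of finite-dimensional representations of $\mathcal{A}_{XA}^{POVM} \otimes_{\min} \mathcal{A}_{YB}^{POVM}$ (this uses that $\otimes_{\min}$ is the spatial tensor product, so such tensor-product representations are well-defined and continuous).

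Next, I would prove the following general fact: if $\mathcal{B}$ is a unital RFD $C^*$-algebra, then $S_{\fin}(\mathcal{B})$ is weak-$*$ dense in $S(\mathcal{B})$. Suppose, for contradiction, that some $\rho \in S(\mathcal{B})$ lies outside the weak-$*$ closure of $S_{\fin}(\mathcal{B})$. Both sets are convex, and the closure is weak-$*$ closed, so by Hahn--Banach separation there exist a self-adjoint $b \in \mathcal{B}$ and a real number $\alpha$ with $\rho(b) > \alpha$ and $\sigma(b) \leq \alpha$ for every $\sigma \in S_{\fin}(\mathcal{B})$. Applied to arbitrary vector states of the finite-dimensional representations $\pi$ in a separating family, this gives $\pi(b) \leq \alpha \cdot I$ for each such $\pi$. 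Taking the direct sum of the separating family yields a faithful (hence isometric) $*$-homomorphism under which the image of $\alpha \cdot 1 - b$ is positive, so $b \leq \alpha \cdot 1_{\mathcal{B}}$. Then $\rho(b) \leq \alpha$, contradicting $\rho(b) > \alpha$.

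For the ``in particular'' statement, one direction is immediate: a finite-dimensional state $\rho$ on $\mathcal{A}_{XA}^{POVM} \otimes_{\min} \mathcal{A}_{YB}^{POVM}$ is exactly the state induced by a finite-dimensional tensor-product strategy, so its restriction $\rho(e_{x,a} \otimes e_{y,b})$ gives a correlation in $C_q(X,Y,A,B)$, and weak-$*$ limits of such states give norm-limits of the induced correlations (since only finitely many tuples $(x,y,a,b)$ are involved), which by definition lie in $C_{qa}(X,Y,A,B)$. Conversely, if $p \in C_{qa}$, pick $p_n \in C_q$ converging to $p$ with associated finite-dimensional states $\rho_n$; by weak-$*$ compactness of the state space (Banach--Alaoglu) a subnet converges to some state $\rho$, which necessarily implements $p$. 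Combined with the density just proved, this yields the claimed correspondence. The only substantive step is the RFD input; the remainder is a routine Hahn--Banach argument and a compactness/continuity observation on the correlation side.
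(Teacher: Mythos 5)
Your proposal is correct, but it takes a genuinely different route from the paper. You derive the density of $S_{\fin}$ from the residual finite-dimensionality (RFD) of $\mathcal{A}_{XA}^{POVM}\otimes_{\min}\mathcal{A}_{YB}^{POVM}$, establishing RFD first (via a free-product decomposition of each factor into single-POVM universal algebras and an Exel--Loring-type theorem for full free products and minimal tensor products of RFD algebras), and then running a Hahn--Banach separation argument. The paper's proof goes the other way and avoids invoking RFD as an input: it chooses a separable faithful representation $\Pi_A$ of $\mathcal{A}_{XA}^{POVM}$, embeds $\mathcal{A}_{XA}^{POVM}\otimes_{\min}\mathcal{A}_{YB}^{POVM}$ as a subalgebra of $\mathcal{B}(\mathcal{H}_A)\otimes\mathcal{B}(\mathcal{H}_B)$, uses that states on $\mathcal{B}(\mathcal{H})$ are the weak-$*$ closed convex hull of vector states, and then approximates each vector state by compressing to finite-dimensional subspaces $\pi_n^A\otimes\pi_n^B$. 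Crucially, the paper presents RFD of the tensor product as a \emph{consequence} of this proposition (see the paragraph preceding it), not as a prerequisite, so your logical ordering is reversed relative to the paper's. The trade-off: your route is cleaner and more abstract once RFD is in hand, but it outsources the hardest step to the RFD-of-free-products literature and to the claim that the universal $C^*$-algebra of a single POVM is RFD, which you state only in passing; the paper's truncation argument is more elementary and self-contained, requiring nothing beyond separability. Your Hahn--Banach step is sound (you correctly restrict attention to vector states of tensor-product representations $\pi_i\otimes\sigma_j$, so they land in $S_{\fin}$), and the bookkeeping for the ``in particular'' clause matches the paper's.
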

\begin{proof}
    We begin by noting that since $\mathcal{A}_{XA}^{POVM}$ is separable, the unit ball of $\mathcal{A}_{XA}^{POVM}$ is also separable. If $\Omega$ is some weak-$*$ dense countable set of states in $\mathcal{A}_{XA}^{POVM}$ and $\{\Pi_\omega\}_{\omega \in \Omega}$ is the set of GNS-representations of states in $\Omega$, then the direct sum of these representations gives an isometric representation of $\mathcal{A}_{XA}^{POVM}$. Since $\mathcal{A}_{XA}^{POVM}$ is separable, each of these representations is also separable, so that $\mathcal{A}_{XA}^{POVM}$ has a separable isometric representation, say $\Pi_A: \mathcal{A}_{XA}^{POVM}\to \mathcal{B(H}_A)$.
    
    Hence, there is an isometric representation $\Pi_A \otimes \Pi_B : \mathcal{A}_{XA}^{POVM} \otimes \mathcal{A}_{YB}^{POVM} \to \mathcal{B(H}_A) \otimes \mathcal{B(H}_A)$, so that we may identify $\mathcal{A}_{XA}^{POVM} \otimes \mathcal{A}_{YB}^{POVM}$ as a $C^*$-subalgebra of $\mathcal{B(H}_A) \otimes \mathcal{B(H}_B)$. It follows from standard results (see \cite{kadison1997elementary}*{Corollary 4.3.10} for example) that the set of states on $\mathcal{B(H}_A) \otimes \mathcal{B(H}_B) \cong \mathcal{B}(\mathcal{H}_A \otimes \mathcal{H}_B)$ is equal to the closed convex hull of the set of vector states on $ \mathcal{B}(\mathcal{H}_A \otimes \mathcal{H}_B)$, each of which is a tensor product state on $\mathcal{A}_{XA}^{POVM} \otimes_{\min} \mathcal{A}_{XA}^{POVM}$. 

     Now, let $\ket{\psi} \in \mathcal{H}_A \otimes \mathcal{H}_B$, $\rho$ be the state $a \to \braket{\psi \mid \Pi_A \otimes \Pi_B(a) \mid \psi}$ and choose a basis $\{e_1, e_2, \dots\}$ for $\mathcal{H}_A$ and $\{f_1, f_2, \dots\}$ for $\mathcal{H}_B$. Let $\pi_n^A$ be the projection onto the span of $\{e_1, \dots, e_n\}$, $\pi_n^B$ be the projection on to the span of $\{f_1, \dots, f_n\}$ and let $\ket{\psi_n} = \pi_n^A \otimes \pi_n^A (\ket{\psi})$. If we define $\rho_n(a) = \braket{ \psi_N\mid (\pi_n^A \Pi_A \pi_n^A \otimes \pi_n^B \Pi_B \pi_n^B) (a) \mid \psi_N}$, then it is easy to show that $\rho_N \to \rho$. Hence, the set of finite-dimensional states is dense in the set of tensor-product states on $\mathcal{A}_{XA}^{POVM} \otimes \mathcal{A}_{YB}^{POVM}$, which shows that $S(\mathcal{A}_{XA}^{POVM} \otimes \mathcal{A}_{YB}^{POVM})$ is the weak-$*$ closure of $S_{\fin}(\mathcal{A}_{XA}^{POVM} \otimes \mathcal{A}_{YB}^{POVM})$. 

    It is evident from the preceding discussion that if $\rho \in S(\mathcal{A}_{XA}^{POVM} \otimes \mathcal{A}_{YB}^{POVM})$ then the correlation $(\rho(e_{x,a} \otimes f_{y,b}))_{x,y \in X, a,b \in A}$ is in $C_{qa}(X,Y,A,B)$. We shall now show the converse. Let $p \in C_{qa}(X,X,A,A)$ then, there are finite-dimensional quantum correlations $p_k$ such that $p_k \to p$. Let $S_k$ be a finite-dimensional model for each $p_k$ and let $\rho_k$ be the state associated with $S_k$. It is then obvious that if we set $\rho$ to be a weak-$*$ limit point of $\{\rho_k\}_{k \in \mathbb{N}}$ then $\rho(e_{x,a} \otimes f_{y,b}) = \lim_{k \to \infty} \rho_k(e_{x,a} \otimes f_{y,b}) = p(a,b \mid x,y).$ 
\end{proof}

\subsection{Analogue of Gowers-Hatami}

In this subsection, we show that given a synchronous correlation $\widetilde{p}$ that is an extreme point  of $C_q(X, X, A, A)$ satisfying the hypothesis~\cref{robust-self-test-condition}, any finite-dimensional quantum strategy implementing a correlation close to $\widetilde{p}$ has measurement operators that are close to local dilations of the honest strategy. We generalise the techniques introduced in \cite{mancinska_constant-sized_2024} and make use of several lemmas introduced in the previous section. 

\begin{theorem}\label{thm:gow-hat}
    Let $\widetilde{p}$ be a synchronous quantum correlation that is an extreme point of $C_q(X, X, A, A)$ satisfying condition~\ref{self-test-condition} and let $\widetilde{S}$ be the PME strategy for $\widetilde{p}$ from \cref{lem:canonical-PME-strat}. Then, for all $\epsilon > 0$, there exists a $\delta > 0$ such that for all $r_A, r_B \in \mathbb{N}$, density matrices $\rho \in M_{r_Ar_B}(\mathbb{C})$ and POVMs $\{E_{x,a}\} \subset M_{r_A}(\mathbb{C}), \ \{F_{y,b}\} \subset M_{r_B}(\mathbb{C})$ satisfying 
    $$\sum_{x,y \in X, a,b \in A} \abs{\Tr_{\rho}(E_{x,a} \otimes F_{y,b}) - \widetilde{p}(a,b\mid x,y)} \leq \delta,$$
    there are isometries $V_A: \mathbb{C}^{r_A} \to \mathbb{C}^{d}\otimes \mathbb{C}^{s_A}$ and $V_B: \mathbb{C}^{r_B} \to \mathbb{C}^d \otimes \mathbb{C}^{s_B}$, for some $s_A, s_B \in \mathbb{N}$, such that 
    \[\norm{E_{x,a} - V_A^*(\widetilde{E}_{x,a} \otimes I_{s_A})V_A }_{\rho_A} \leq \epsilon \text{ and } \norm{F_{y,b} - V_B^*(\widetilde{F}_{y,b} \otimes I_{s_B})V_B}_{\rho_B} \leq \epsilon\]
    for all $x,y \in X$ and $a,b \in A$. 
\end{theorem}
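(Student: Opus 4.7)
The plan is to run a compactness--contradiction argument that lifts the exact algebraic identities of \cref{lem:UCP-map-from-canon-strat,lem:dilation-measurement-operators} to an approximate setting. Suppose toward a contradiction that the conclusion fails. Then there exist $\epsilon > 0$ and a sequence of finite-dimensional strategies $S_n = (\mathbb{C}^{r_A^{(n)}}, \mathbb{C}^{r_B^{(n)}}, \{E_{x,a}^{(n)}\}, \{F_{y,b}^{(n)}\}, \rho_n)$ whose correlations $p_n$ satisfy $\norm{p_n - \widetilde{p}}_1 \to 0$, but for which no pair of isometries $V_A^{(n)}, V_B^{(n)}$ achieves the required $\epsilon$-approximation in the $\rho_{A,n}, \rho_{B,n}$-norms.

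Each $S_n$ produces representations $\pi_A^{(n)}, \pi_B^{(n)}$ of $\mathcal{A}_{XA}^{POVM}$ and hence a finite-dimensional state $\tau_n$ on $\mathcal{A}_{XA}^{POVM} \otimes_{\min} \mathcal{A}_{XA}^{POVM}$. By weak-$\ast$ compactness of the state space, I pass to a subsequence on which $\tau_n \to \tau$. Since $\tau(e_{x,a} \otimes e_{y,b}) = \lim_n p_n(a,b\mid x,y) = \widetilde{p}(a,b\mid x,y)$, uniqueness of the implementing state (condition~\ref{robust-self-test-condition}, which is the form the hypothesis must actually take here, since by \cref{prop:state-correlation-correspondence-qa} a weak-$\ast$ limit of finite-dimensional states need not remain finite-dimensional) forces $\tau = \widetilde{\rho}$.

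Next, I invoke the UCP lifts $\widetilde{\Phi}_A, \widetilde{\Phi}_B : M_d(\mathbb{C}) \to \mathcal{A}_{XA}^{POVM}$ of \cref{lem:UCP-map-from-canon-strat} and form $\pi_{A,n}' := \pi_A^{(n)} \circ \widetilde{\Phi}_A$. Setting $g_{x,a} := (\widetilde{\Phi}_A(\widetilde{E}_{x,a}) - e_{x,a})^*(\widetilde{\Phi}_A(\widetilde{E}_{x,a}) - e_{x,a}) \otimes 1_{\mathcal{A}_{XA}^{POVM}}$, the identity $q_A(\widetilde{\Phi}_A(\widetilde{E}_{x,a})) = q_A(e_{x,a})$ immediately gives $\widetilde{\rho}(g_{x,a}) = 0$, hence $\tau_n(g_{x,a}) \to 0$. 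Unpacking $\tau_n$ inside the concrete strategy $S_n$ converts this into
\[ \norm{\pi_{A,n}'(\widetilde{E}_{x,a}) - E_{x,a}^{(n)}}_{\rho_{A,n}}^2 \longrightarrow 0, \]
uniformly over the finite index set $X \times A$, with a symmetric argument handling the $B$-side operators.

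Finally, I apply Stinespring dilation to each UCP map $\pi_{A,n}' : M_d(\mathbb{C}) \to M_{r_A^{(n)}}(\mathbb{C})$. Because $M_d(\mathbb{C})$ has a unique irreducible representation up to equivalence, the dilation takes the standard form $x \mapsto x \otimes I_{s_A^{(n)}}$ on $\mathbb{C}^d \otimes \mathbb{C}^{s_A^{(n)}}$, producing an isometry $V_A^{(n)} : \mathbb{C}^{r_A^{(n)}} \to \mathbb{C}^d \otimes \mathbb{C}^{s_A^{(n)}}$ with $\pi_{A,n}'(\widetilde{E}_{x,a}) = (V_A^{(n)})^*(\widetilde{E}_{x,a} \otimes I_{s_A^{(n)}})V_A^{(n)}$, and similarly $V_B^{(n)}$. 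Combined with the preceding step, these operators lie within $\epsilon$ of $E_{x,a}^{(n)}$ in the $\rho_{A,n}$-norm for $n$ large enough, contradicting the choice of the sequence. The main obstacle I anticipate is the third step: carefully matching the abstract weak-$\ast$ bound $\tau_n(g_{x,a}) \to 0$ with the concrete $\norm{\cdot}_{\rho_{A,n}}$ discrepancy. This requires identifying the marginal state $\tau_n^A$ with the state induced by the partial trace $\rho_{A,n}$ and verifying that its GNS seminorm pulls back through $\pi_A^{(n)}$ to the operator seminorm appearing in the statement; a secondary subtlety is that the compactness argument produces $\delta$ without an explicit quantitative rate, which is consistent with the form in which the theorem is stated.
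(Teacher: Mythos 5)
Your proposal is correct and follows essentially the same compactness--contradiction route as the paper: pass to a weak-$*$ limit state, force it to equal $\widetilde{\rho}$ by uniqueness, push the kernel ideal $\mathcal{J}_A$ through the UCP lifts of \cref{lem:UCP-map-from-canon-strat} to get $\norm{\cdot}_{\rho_A}$-convergence, then apply Stinespring plus uniqueness of the $M_d(\mathbb{C})$-irrep. Your parenthetical observation is also right and worth recording: the proof needs the weak-$*$ limit state (which by \cref{prop:state-correlation-correspondence-qa} may fail to be finite-dimensional) to coincide with $\widetilde{\rho}$, so the operative hypothesis is really condition~\ref{robust-self-test-condition}, not condition~\ref{self-test-condition} as currently written in the statement; the paper's own proof invokes exactly this uniqueness, and the downstream use in \cref{thm:main-theorem-2} supplies condition~\ref{robust-self-test-condition}.
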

\begin{proof}
    We proceed by contradiction. Let us fix a counter example for the result, i.e., an $\epsilon > 0$ and for each $k \geq 1$, $\delta_k \geq 0$ such that $\lim_{k \to \infty} \delta_k = 0$, density matrices $\rho^{[k]} \in M_{r_A^{[k]}r_B^{[k]}}(\mathbb{C})$ and POVMs $\{E_{x,a}^{[k]}\} \subseteq M_{r_A^{[k]}}(\mathbb{C})$, $\{F_{y,b}^{[k]}\} \subseteq M_{r_B^{[k]}}(\mathbb{C})$ such that 
   
   $$\sum_{x, y \in X, a, b \in A} \abs{\Tr_{\rho^{[k]}}(E_{x,a} \otimes F_{y,b}) - \widetilde{p}(a,b\mid x,y)} \leq \delta_k,$$ 
   and for all isometries $V_A : \mathbb{C}^{r_A^{[k]}} \to \mathbb{C}^d \otimes \mathbb{C}^{s_A}$ and $V_B : \mathbb{C}^{r_B^{[k]}} \to \mathbb{C}^d \otimes \mathbb{C}^{s_B}$, there exists an $x \in X$ and $a \in A$ or $y \in Y$ and $b \in B$ such that $$\norm{E_{x,a}^{[k]} - V_a^* (\widetilde{E}_{x,a} \otimes I_{s_A}) V_a }_{\rho^{[k]}_A} > \epsilon \text{ or } \norm{F_{y,b}^{[k]} - V_B^* (\widetilde{F}_{y,b} \otimes I_{s_B}) V_B}_{\rho^{[k]}_B} > \epsilon$$  

   Let $\pi_A^{[k]} \otimes \pi_B^{[k]} \colon \mathcal{A}_{XA}^{POVM} \otimes \mathcal{A}_{XA}^{POVM} \to M_{r_A^{[k]}}(\mathbb{C}) M_{r_B^{[k]}}(\mathbb{C})$ be the associated representation with the POVMs $\{E_{x,a}^{[k]}\}, \ \{F_{y,b}^{[k]}\}$ and let $\widetilde{\rho}_k \coloneq \Tr_{\rho^{[k]}} \circ (\pi_A^{[k]} \otimes \pi_B^{[k]})$ be states induced on $\mathcal{A}_{XA}^{POVM} \otimes \mathcal{A}_{XA}^{POVM}$. We see that any weak-$*$ limit point $\rho$ of these state (which exists by virtue of the Banach-Alagolu theorem) satisfies $\rho(e_{x,a} \otimes e_{y,b}) = \widetilde{p}(a,b \mid x, y)$ for all $x,y \in X$ and $a,b \in A$. It then follows by our assumption that $\rho = \widetilde{\rho}$. Hence, by passing to a sub-sequence if necessary, we may assume that $\widetilde{\rho}_k \to \widetilde{\rho}$. 

   Let $\widetilde{\rho}_A$ and $\mathcal{J}_A$ be as defined in \cref{lem:UCP-map-from-canon-strat}. Then, one has that $a \in \mathcal{J}_A$ if and only if $\lim_{k \to \infty}\norm{\pi_A^{[k]}(a)}_{\rho^{[k]}_A} = 0$. Let $\widetilde{\Phi}_A: M_d(\mathbb{C}) \to \mathcal{A}_{XA}^{POVM}$ be the UCP map from \cref{lem:UCP-map-from-canon-strat} and define $\widetilde{\pi}_A^{[k]} = \pi_A^{[k]} \circ \widetilde{\Phi}_A$. In particular, one has that $\widetilde{\Phi}_A(\widetilde{E}_{x,a}) - e_{x,a}\in \mathcal{J}_A$, so that 
   
   \begin{align*}
       \lim_{k \to \infty} \norm{\pi_A^{[k]}(\widetilde{\Phi}_A(\widetilde{E}_{x,a}) - e_{x,a})}_{\rho^{[k]}_A} = \lim_{k \to \infty} \norm{\widetilde{\pi}_A^{[k]}(\widetilde{E}_{x,a}) - E_{x,a}^{[k]}}_{\rho_A^{[k]}} = 0
   \end{align*} for all $x \in X$ and $a \in A$. 

   Since each $\widetilde{\pi}_A^{[k]}$ is a UCP map, using Stinespring's theorem and uniqueness of irreducible representation of $M_d(\mathbb{C})$ in the same fashion as in the proof of \cref{lem:dilation-measurement-operators}, we have that there are isometries $V_A^{[k]}: \mathbb{C}^{r_A^{[k]}} \to \mathbb{C}^d \otimes \mathbb{C}^{s_A^{[k]}}$ such that $\widetilde{\pi}_A^{[k]}(a) = (V_A^{[k]})^*(a \otimes I_{s_A^{[k]}})V_A^{[k]}$. 
   
   In particular, for some large $k$ we have $\norm{(V_A^{[k]})^*(\widetilde{E}_{x,a} \otimes I_{s_A^{[k]}})V_A^{[k]} - E_{x,a}^{[k]}}_{\rho_k} \leq \epsilon$ for all $x \in X$ and $a \in A$. Repeating the same argument for $\{\widetilde{F}_{y,b}\}$ we get a contradiction which finishes the proof. 
\end{proof}

\subsection{Necessary and Sufficient Conditions for Robust Self-testing}

In order to finish this proof of the forward direction in \cref{thm:main-result}, we need to borrow a few lemmas from \cite{mancinska_constant-sized_2024}:  

\begin{lemma}[\cite{mancinska_constant-sized_2024}*{Lemma 6.7}]\label{lem:borr-lem-1}
    Let $A \in M_d(\mathbb{C})$ be a positive matrix having at least two eigenvalues and let $\lambda_1 > \lambda_2 > \dots \lambda_l \geq 0$ be the distinct eigenvalues of $A$. Let $\xi \in \mathbb{C}^d$ be a unit vector such that $\lambda_1 - \epsilon \leq \braket{A \xi, \xi}$ for some $\epsilon \geq 0$. If $Q_1$ is the projection onto the eigenspace corresponding to $\lambda_1$, then
    \[\norm{Q_1\xi}^2 \geq 1 - \frac{\epsilon}{\lambda_1 - \lambda_2}\]
\end{lemma}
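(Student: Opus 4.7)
The plan is to use the spectral decomposition of $A$ to reduce the hypothesis $\langle A\xi, \xi\rangle \geq \lambda_1 - \epsilon$ to a weighted-average inequality in the squared overlaps $\|Q_i \xi\|^2$, then exploit the gap $\lambda_1 - \lambda_2$ to lower bound $\|Q_1 \xi\|^2$.

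First I would write $A = \sum_{i=1}^{l} \lambda_i Q_i$, where $Q_1, \ldots, Q_l$ are the mutually orthogonal spectral projections summing to the identity $I_d$. Because $\xi$ is a unit vector,
\begin{equation*}
\sum_{i=1}^{l} \|Q_i \xi\|^2 = \|\xi\|^2 = 1,
\end{equation*}
and expanding the quadratic form in this basis gives
\begin{equation*}
\langle A\xi, \xi\rangle = \sum_{i=1}^{l} \lambda_i \|Q_i \xi\|^2.
\end{equation*}

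Next, using that $\lambda_i \leq \lambda_2$ for all $i \geq 2$, I would upper bound
\begin{equation*}
\sum_{i=1}^{l} \lambda_i \|Q_i \xi\|^2 \leq \lambda_1 \|Q_1 \xi\|^2 + \lambda_2 \sum_{i=2}^{l} \|Q_i \xi\|^2 = \lambda_1 \|Q_1 \xi\|^2 + \lambda_2 (1 - \|Q_1 \xi\|^2).
\end{equation*}
Combining with the hypothesis $\lambda_1 - \epsilon \leq \langle A\xi, \xi\rangle$ yields
\begin{equation*}
\lambda_1 - \epsilon \leq \lambda_2 + (\lambda_1 - \lambda_2)\|Q_1 \xi\|^2.
\end{equation*}

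Since the lemma assumes $A$ has at least two distinct eigenvalues, the spectral gap $\lambda_1 - \lambda_2$ is strictly positive, so I can divide through to obtain $\|Q_1 \xi\|^2 \geq 1 - \tfrac{\epsilon}{\lambda_1 - \lambda_2}$, as desired. There is essentially no obstacle here: the assumption that $A$ has at least two eigenvalues is exactly what is needed to make the denominator nonzero, and all other steps are immediate consequences of the spectral decomposition. The only mild care required is to note that the bound is vacuous when $\epsilon \geq \lambda_1 - \lambda_2$ (the right-hand side becomes nonpositive), which is consistent with the statement.
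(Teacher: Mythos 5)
Your proof is correct and complete. Note, however, that the paper does not actually prove this lemma — it is imported verbatim as Lemma 6.7 of the cited reference \cite{mancinska_constant-sized_2024}, so there is no in-paper argument to compare against. Your spectral-decomposition argument is the standard one: expand $\langle A\xi,\xi\rangle$ in the eigenbasis, replace all subleading eigenvalues by $\lambda_2$ to get $\langle A\xi,\xi\rangle \le \lambda_2 + (\lambda_1-\lambda_2)\norm{Q_1\xi}^2$, and rearrange using the strictly positive gap $\lambda_1 - \lambda_2$. Every step is justified, including the observation that the bound is vacuous (and still true) when $\epsilon \ge \lambda_1 - \lambda_2$.
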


\begin{lemma}[\cite{mancinska_constant-sized_2024}*{Lemma 6.8}]\label{lem:borr-lem-2}
     Let $d = d_A d_B$, $\psi \in \mathbb{C}^d$ be a unit vector, $X_1, X_2 \in M_{d_A}(\mathbb{C})$, and $Y_1, Y_2 \in M_{d_B}(\mathbb{C})$. Let $\rho_A, \rho_B$ be the reduced density matrices of the density matrix $\rho = \ket{\psi}\bra{\psi}$. Then, 
     \begin{align}
         \abs{\braket{\psi \mid (X_1 \otimes Y_1 - X_2 \otimes Y_2) \mid \psi}} \leq \norm{X_1 - X_2}_{\rho_A}\norm{Y_2^*}_{\rho_B} + \norm{Y_1 - Y_2}_{\rho_B}\norm{X_1^*}_{\rho_A}\text{, and}
    \end{align}
    \begin{align}
         \norm{(X_1 \otimes Y_1 - X_2 \otimes Y_2) \ket{\psi}} \leq \norm{X_1 - X_2}_{\rho_A}\norm{Y_2} + \norm{Y_1 - Y_2}_{\rho_B}\norm{X_1}
     \end{align}
\end{lemma}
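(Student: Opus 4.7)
Both inequalities reduce to the same algebraic decomposition combined with a careful application of Cauchy--Schwarz, together with the elementary identity that, for a unit vector $\psi$ with reduced densities $\rho_A$ and $\rho_B$, one has $\norm{(A \otimes I)\psi}^2 = \tr(\rho_A A^* A) = \norm{A}_{\rho_A}^2$ and, symmetrically, $\norm{(I \otimes B)\psi} = \norm{B}_{\rho_B}$. This identity converts bounds in the state-dependent $\rho$-norm into ordinary Hilbert-space norms of vectors of the form $(A \otimes I)\psi$ or $(I \otimes B)\psi$, which is exactly what Cauchy--Schwarz produces.

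The starting point is the ``add and subtract'' identity
\begin{equation*}
X_1 \otimes Y_1 - X_2 \otimes Y_2 = (X_1 - X_2) \otimes Y_2 + X_1 \otimes (Y_1 - Y_2),
\end{equation*}
which isolates a single difference ($X_1 - X_2$ or $Y_1 - Y_2$) in each summand. The particular choice of cross-term $X_1 \otimes Y_2$ (as opposed to $X_2 \otimes Y_1$) is what makes the adjoints $Y_2^*$ and $X_1^*$ appear on the right-hand side of the first inequality; the alternative decomposition would yield a cosmetically different bound involving $Y_1^*$ and $X_2^*$ instead.

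For the first inequality, I would apply Cauchy--Schwarz to each summand after factoring so that the operator whose $\rho$-norm is to appear stands on the chosen side of $\psi$. For the first summand, write $(X_1 - X_2) \otimes Y_2 = (I \otimes Y_2)((X_1 - X_2) \otimes I)$ and move the left factor across the inner product to obtain
\begin{equation*}
\braket{\psi \mid (X_1 - X_2) \otimes Y_2 \mid \psi} = \braket{(I \otimes Y_2^*)\psi \mid ((X_1 - X_2) \otimes I)\psi};
\end{equation*}
invoking the identity above gives the bound $\norm{Y_2^*}_{\rho_B}\norm{X_1 - X_2}_{\rho_A}$. Treating $X_1 \otimes (Y_1 - Y_2) = (X_1 \otimes I)(I \otimes (Y_1 - Y_2))$ symmetrically yields $\norm{X_1^*}_{\rho_A}\norm{Y_1 - Y_2}_{\rho_B}$, and the triangle inequality combines them.

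The second inequality follows from the triangle inequality applied to the same decomposition, together with the elementary bound $\norm{(I \otimes B)\phi} \leq \norm{B}\cdot\norm{\phi}$ (and its symmetric version). Explicitly,
\begin{equation*}
\norm{((X_1 - X_2) \otimes Y_2)\psi} = \norm{(I \otimes Y_2)((X_1 - X_2) \otimes I)\psi} \leq \norm{Y_2}\cdot\norm{X_1 - X_2}_{\rho_A},
\end{equation*}
with an analogous estimate for the other summand. There is no substantive obstacle here --- the lemma is essentially pure bookkeeping --- and the only point requiring care is tracking which side of $\psi$ each factor acts on, since that dictates whether the $\rho$-norm is of $Y_2$ or $Y_2^*$ (and similarly for $X_1$). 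A wrong choice of factoring produces the same numerical constants but with the adjoints swapped, so one must set up the Cauchy--Schwarz step in precisely the form indicated above.
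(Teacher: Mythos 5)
Your proof is correct, and it fills a gap deliberately left in the paper: this lemma is cited verbatim from Man\v{c}inska et al.\ (Lemma 6.8 of the referenced work) with no proof given here, so there is no ``paper's proof'' to compare against. The argument you give is the standard one: the telescoping decomposition $X_1 \otimes Y_1 - X_2 \otimes Y_2 = (X_1-X_2)\otimes Y_2 + X_1 \otimes (Y_1-Y_2)$, the identity $\norm{(A\otimes I)\psi}^2 = \tr(\rho_A A^*A) = \norm{A}_{\rho_A}^2$ (and its $B$-side analogue), and then Cauchy--Schwarz for the first inequality and the operator-norm estimate $\norm{(I\otimes Y_2)\phi} \le \norm{Y_2}\,\norm{\phi}$ for the second. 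Your remarks on why the adjoints $Y_2^*$ and $X_1^*$ (rather than $Y_1^*, X_2^*$) appear, and on the choice of factoring order, are exactly the bookkeeping points that matter; everything checks out.
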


\begin{lemma}[\cite{mancinska_constant-sized_2024}*{Lemma 6.9}]\label{lem:borr-lem-3}
    Let $\xi, \eta \in \mathbb{C}^d$ and let $P \in M_d(\mathbb{C})$ be a projection. Suppose $\abs{\norm{\xi}^2 - \norm{\eta}^2} \leq \epsilon_1$ and $\norm{\xi - P \eta} \leq \epsilon_2$ for some $\epsilon_1$, $\epsilon_2 > 0$. Then, 
    $$\norm{\xi - \eta} \leq \epsilon_2 + \sqrt{\epsilon_1 + (\norm{\xi} + \norm{\eta})\epsilon_2}$$
\end{lemma}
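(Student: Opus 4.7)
The plan is to combine a triangle inequality step with a Pythagorean decomposition to get a bound in terms of $\|(I-P)\eta\|$, and then bound $\|(I-P)\eta\|$ using the two hypotheses.

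First I would write
\[
\|\xi - \eta\| \leq \|\xi - P\eta\| + \|P\eta - \eta\| \leq \epsilon_2 + \|(I-P)\eta\|,
\]
so everything reduces to bounding $\|(I-P)\eta\|$. Since $P$ is a projection, $P\eta \perp (I-P)\eta$, so by the Pythagorean theorem $\|(I-P)\eta\|^2 = \|\eta\|^2 - \|P\eta\|^2$. Splitting this as
\[
\|(I-P)\eta\|^2 = \bigl(\|\eta\|^2 - \|\xi\|^2\bigr) + \bigl(\|\xi\|^2 - \|P\eta\|^2\bigr),
\]
the first term is bounded by $\epsilon_1$ by hypothesis.

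For the second term I would use the factorization $\|\xi\|^2 - \|P\eta\|^2 = (\|\xi\| - \|P\eta\|)(\|\xi\| + \|P\eta\|)$ together with the reverse triangle inequality $\bigl|\|\xi\| - \|P\eta\|\bigr| \leq \|\xi - P\eta\| \leq \epsilon_2$ and the contractivity $\|P\eta\| \leq \|\eta\|$, which yields
\[
\|\xi\|^2 - \|P\eta\|^2 \leq \epsilon_2 (\|\xi\| + \|\eta\|).
\]
Combining gives $\|(I-P)\eta\|^2 \leq \epsilon_1 + (\|\xi\| + \|\eta\|)\epsilon_2$, and plugging back into the triangle inequality bound above produces exactly the claimed estimate.

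There is no real obstacle: the proof is essentially bookkeeping with the Pythagorean theorem and the reverse triangle inequality. The only subtle point is making sure one factors $\|\xi\|^2 - \|P\eta\|^2$ as a difference of norms (rather than, say, applying Cauchy--Schwarz to $2\operatorname{Re}\langle P\eta, \xi - P\eta\rangle + \|\xi - P\eta\|^2$), since the former automatically produces the symmetric $\|\xi\|+\|\eta\|$ factor appearing in the statement, whereas the latter would give a less clean $2\|\eta\|\epsilon_2 + \epsilon_2^2$ bound.
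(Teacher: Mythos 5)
Your proof is correct. The paper itself does not prove this lemma---it simply cites it from Man\v{c}inska et al.\ (Lemma 6.9 there)---so there is no in-paper proof to compare against, but your argument is sound: the triangle inequality reduces everything to bounding $\|(I-P)\eta\|$, the Pythagorean identity $\|(I-P)\eta\|^2 = \|\eta\|^2 - \|P\eta\|^2$ is exactly right, and the split into $(\|\eta\|^2-\|\xi\|^2) + (\|\xi\|^2 - \|P\eta\|^2)$ together with the factorization, reverse triangle inequality, and $\|P\eta\|\le\|\eta\|$ gives precisely the stated bound, including the symmetric $\|\xi\|+\|\eta\|$ factor. Your closing remark about why the difference-of-norms factorization is preferable to expanding $\|\eta\|^2 = \|P\eta + (\xi - P\eta) + (\eta - \xi)\|^2$ or similar is a useful observation.
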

 We now prove the forward direction of the proof of \cref{thm:main-result}.

\begin{theorem}\label{thm:main-theorem-2}
    Let $\widetilde{p} \in \partial C_q(X,X,A,A) \cap C_q^s(X,A)$ be a synchronous quantum correlation satisfying condition~\ref{robust-self-test-condition}. Then, there is a PME strategy $\widetilde{S} = \{\mathbb{C}^d, \mathbb{C}^d, \{\widetilde{E}_{x,a}\}, \{\widetilde{F}_{y,b}\}, \ket{\widetilde{\psi}}\}$ such that $\widetilde{p}$ robustly self-tests $\widetilde{S}$.  
\end{theorem}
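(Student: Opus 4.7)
Fix $\epsilon > 0$, and suppose $S = (\mathbb{C}^{r_A}, \mathbb{C}^{r_B}, \{E_{x,a}\}, \{F_{y,b}\}, \ket{\psi})$ is a finite-dimensional strategy whose correlation $p$ satisfies $\norm{p - \widetilde{p}}_1 \leq \delta$, where $\delta$ will be fixed at the end. The plan is to invoke \cref{thm:gow-hat} with a tolerance $\epsilon'$ (to be tuned in terms of $\epsilon$, the dimension $d$, and the spectral data of $\widetilde{M} \coloneq \sum_{x \in X, a \in A} \widetilde{E}_{x,a} \otimes \widetilde{F}_{x,a}$) to obtain isometries $V_A \colon \mathbb{C}^{r_A} \to \mathbb{C}^d \otimes \mathbb{C}^{s_A}$ and $V_B \colon \mathbb{C}^{r_B} \to \mathbb{C}^d \otimes \mathbb{C}^{s_B}$ whose induced dilation is within $\epsilon'$, in the appropriate $\rho_A$- resp.\ $\rho_B$-norm, of each given measurement operator. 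Setting $V = V_A \otimes V_B$, the remaining task is to produce an auxiliary state $\ket{\psi_{\aux}}$ so that both \cref{eq:robust-self-test-state,eq:robust-self-test-measurement-and-state} hold.

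To produce $\ket{\psi_{\aux}}$, I would compute $\braket{V\psi \mid \widetilde{M} \otimes I_{s_A s_B} \mid V\psi}$. By the first estimate in \cref{lem:borr-lem-2}, each summand $\braket{\psi \mid V_A^*(\widetilde{E}_{x,a}\otimes I)V_A \otimes V_B^*(\widetilde{F}_{x,a}\otimes I)V_B \mid \psi}$ lies within $2\epsilon'$ of $\braket{\psi \mid E_{x,a}\otimes F_{x,a} \mid \psi} = p(a,a\mid x,x)$. Synchronicity of $\widetilde{p}$ gives $\sum_{x,a} \widetilde{p}(a,a\mid x,x) = m \coloneq \abs{X}$, so the $1$-norm hypothesis yields $\braket{V\psi \mid \widetilde{M}\otimes I \mid V\psi} \geq m - 2\abs{X}\abs{A}\epsilon' - \delta$. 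By \cref{lem:unique-eigenvector} the eigenvalue $m$ of $\widetilde{M}$ is simple with eigenspace $\mathbb{C}\ket{\widetilde{\psi}}$ and finite spectral gap $\gamma > 0$. Applying \cref{lem:borr-lem-1} to $\widetilde{M}\otimes I_{s_A s_B}$, whose top eigenspace (after reordering tensor factors) is $\ket{\widetilde{\psi}} \otimes \mathbb{C}^{s_A s_B}$, gives $V\ket{\psi} = \ket{\widetilde{\psi}} \otimes \ket{\alpha} + \ket{r}$ with $\norm{\ket{r}}$ of order $\sqrt{(\epsilon' + \delta)/\gamma}$. Taking $\ket{\psi_{\aux}} \coloneq \ket{\alpha}/\norm{\ket{\alpha}}$ then establishes \cref{eq:robust-self-test-state}.

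For \cref{eq:robust-self-test-measurement-and-state}, using $V_A^* V_A = I$ I rewrite $V(E_{x,a}\otimes F_{y,b})\ket{\psi} = (V_A E_{x,a} V_A^* \otimes V_B F_{y,b} V_B^*)\,V\ket{\psi}$ and split the desired bound via a triangle inequality into two pieces: (i) replace $V\ket{\psi}$ by $\ket{\widetilde{\psi}}\ket{\psi_{\aux}}$, paid for by the previous paragraph since the operator is a contraction; and (ii) replace $V_A E_{x,a} V_A^* \otimes V_B F_{y,b} V_B^*$ by $(\widetilde{E}_{x,a}\otimes I)\otimes(\widetilde{F}_{y,b}\otimes I)$, which by the second estimate of \cref{lem:borr-lem-2} applied to $\ket{\widetilde{\psi}}\ket{\psi_{\aux}}$ reduces to $\rho$-norm bounds of the form $\norm{V_A E_{x,a} V_A^* - \widetilde{E}_{x,a}\otimes I_{s_A}}_{\rho'_A}$, where $\rho'_A$ is the Alice-marginal of $\ket{\widetilde{\psi}}\ket{\psi_{\aux}}\bra{\widetilde{\psi}}\bra{\psi_{\aux}}$. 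Using $V_A^* V_A = I$ together with the identity $(V_A E_{x,a} V_A^* - \widetilde{E}_{x,a}\otimes I) V_A = V_A E_{x,a} - (\widetilde{E}_{x,a}\otimes I) V_A$ converts this back to the Gowers-Hatami bound modulo a remainder of the form $\norm{(I - V_A V_A^*)(\widetilde{E}_{x,a}\otimes I) V_A}_{\rho_A}$, which I would control by invoking \cref{lem:borr-lem-3} with the closeness of $V\ket{\psi}$ to $\ket{\widetilde{\psi}}\ket{\psi_{\aux}}$ already established. Choosing $\epsilon'$ and $\delta$ sufficiently small in terms of $\epsilon$, $d$, $\gamma$, $\abs{X}$, and $\abs{A}$ then closes the argument. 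The principal obstacle lies in this last step: \cref{thm:gow-hat} delivers closeness in the state-dependent $\rho$-norms (weaker than operator norms), whereas \cref{eq:robust-self-test-measurement-and-state} is a vector bound on the \emph{extended} Hilbert space, and one must simultaneously manage the failure of $V_A V_A^*$ to equal the identity and transport errors from the state $\ket{\psi}$ to the reference state $\ket{\widetilde{\psi}}\ket{\psi_{\aux}}$.
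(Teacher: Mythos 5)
Your first two paragraphs match the paper's proof closely: both invoke \cref{thm:gow-hat} to obtain $\rho$-norm-close dilations of the measurement operators, both sum the diagonal $(x,a)=(y,b)$ terms to show that $V\ket{\psi}$ has large overlap with the top eigenspace of $\widetilde{M}\otimes I_{s_As_B}$, and both extract $\ket{\psi_{\aux}}$ from the projection $Q$ of $V\ket{\psi}$ onto that eigenspace, giving \cref{eq:robust-self-test-state} via \cref{lem:borr-lem-1}.

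Where you diverge is the final step, \cref{eq:robust-self-test-measurement-and-state}, and your route there has a genuine gap. You factor $V(E_{x,a}\otimes F_{y,b})\ket{\psi} = (V_AE_{x,a}V_A^*\otimes V_BF_{y,b}V_B^*)V\ket{\psi}$ and try to compare $V_AE_{x,a}V_A^*$ to $\widetilde{E}_{x,a}\otimes I$ in the $\rho'_A$-norm, where $\rho'_A$ is the Alice marginal of the \emph{ideal} state $\ket{\widetilde{\psi}}\ket{\psi_{\aux}}$. But \cref{thm:gow-hat} delivers the bound in the $\rho_A$-norm coming from $\ket{\psi}$, not $\rho'_A$, and transporting a state-dependent seminorm estimate from one density to a merely nearby one is not free. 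Worse, your plan to ``control the remainder $\norm{(I-V_AV_A^*)(\widetilde{E}_{x,a}\otimes I)V_A}_{\rho_A}$ by invoking \cref{lem:borr-lem-3}'' does not type-check: \cref{lem:borr-lem-3} is a statement about two fixed \emph{vectors} and a projection, not a tool for bounding the $\rho$-norm of an operator across the range gap of an isometry. As stated, that step does not go through.

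The paper avoids this entirely by never attempting a bound on $V_AE_{x,a}V_A^*-\widetilde{E}_{x,a}\otimes I$ in any norm. Instead it works on the small Hilbert space first: by the second estimate of \cref{lem:borr-lem-2} the Gowers--Hatami bounds give $\norm{V^*(\widetilde{E}_{x,a}\otimes I\otimes\widetilde{F}_{y,b}\otimes I)V\ket{\psi}-(E_{x,a}\otimes F_{y,b})\ket{\psi}}\leq 2\epsilon'$; applying the isometry $V$ preserves this; replacing $V\ket{\psi}$ by $\ket{\widetilde{\psi}}\ket{\psi_{\aux}}$ inside (both $VV^*$ and $\widetilde{E}_{x,a}\otimes\widetilde{F}_{y,b}$ are contractions) costs another $\beta$; and then \cref{lem:borr-lem-3} is applied once, to the vectors $\xi = V(E_{x,a}\otimes F_{y,b})\ket{\psi}$ and $\eta = (\widetilde{E}_{x,a}\otimes\widetilde{F}_{y,b})\ket{\widetilde{\psi}}\ket{\psi_{\aux}}$ with $P = VV^*$. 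The hypothesis $\abs{\norm{\xi}^2-\norm{\eta}^2}\leq\epsilon_1$ of that lemma is satisfied by the elementary but crucial observation that $\norm{\xi}^2 = p(a,b\mid x,y)$ and $\norm{\eta}^2 = \widetilde{p}(a,b\mid x,y)$, so $\norm{p-\widetilde{p}}_1\leq\delta\leq\epsilon'$ gives it directly. This is the key observation your proposal is missing, and it is exactly what makes the $VV^*$ issue disappear.
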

\begin{proof}
    We shall show that $\widetilde{p}$ is a robust self-test for all projective models. It shall follow from \cite{baptista_mathematical_2023}*{theorem 4.1} that $\widetilde{p}$ is an assumption-free self-test. Fix an $\epsilon > 0$ and let $\widetilde{p}$ be as in the statement of the theorem and $\widetilde{S}$ be the strategy for $\widetilde{p}$ from \cref{lem:canonical-PME-strat}. Since condition~\ref{robust-self-test-condition} is stronger than condition~\ref{self-test-condition}, it follows from \cref{thm:main-theorem-1} that $\widetilde{p}$ is a self-test for $\widetilde{S}$. We now work towards showing that it is a robust self-test. Let $\widetilde{M}$ be as defined in \cref{lem:unique-eigenvector} and let $\lambda_2$ be the second largest eigenvalue of $\widetilde{M}$. 
    
    Choose $\epsilon' > 0$ such that $\epsilon' < \frac{m-\lambda_2}{2mn+1}$ and $\epsilon \geq 2\epsilon' + \beta + \sqrt{5\epsilon'+2\beta}$, where $\beta = \sqrt{\frac{2(2mn+1)\epsilon'}{m-\lambda_2}}$. Now, apply \cref{thm:gow-hat} to obtain a $\delta' > 0$ satisfying the hypothesis of \cref{thm:gow-hat} for $\epsilon'$. Hence, if $p$ is a quantum correlation with model $S = (\mathbb{C}^{r_A}, \mathbb{C}^{r_B}, \{E_{x,a}\}, \{F_{y,b}\}, \ket{\psi})$ such that $\norm{p - \widetilde{p}}_1 \leq \delta'$, there are isometries $V_A: \mathbb{C}^{r_A} \to \mathbb{C}^d \otimes \mathbb{C}^{s_A}$ and $V_B: \mathbb{C}^{r_B} \to \mathbb{C}^d \otimes \mathbb{C}^{s_B}$ such that 
    \[\norm{E_{x,a} - V_A^*(\widetilde{E}_{x,a} \otimes I_{s_A})V_A }_{\rho_A} \leq \epsilon' \text{ and } \norm{F_{y,b} - V_B^*(\widetilde{F}_{y,b} \otimes I_{s_B})V_B}_{\rho_B} \leq \epsilon' \]
    for all $x,y \in X$ and $a,b \in A$. Set $\delta = \min\{\delta', \epsilon'\}$. We shall show that $\widetilde{S}$ is a local $\epsilon$-dilation of $\widetilde{S}$ whenever $\norm{p - \widetilde{p}}_1 \leq \delta$. 
    
    Since $\delta < \delta'$, by \cref{lem:borr-lem-1} we have the following: 
    \begin{align}\label{eq:closeness-of-measurements}
    \begin{aligned}
        \abs{\braket{\psi \mid (E_{x,a} \otimes F_{y,b}) - (V_A \otimes V_B)^* (\widetilde{E}_{x,a} \otimes I_{s_A} \otimes \widetilde{F}_{y,b} \otimes I_{s_B}) (V_A \otimes V_B) \mid \psi}} \\
        \leq \norm{E_{x,a} - V_A^*(\widetilde{E}_{x,a} \otimes I_{s_A})V_A }_{\rho_A} + \norm{F_{y,b} - V_B^*(\widetilde{F}_{y,b} \otimes I_{s_B})V_B}_{\rho_B} \leq 2\epsilon' \leq \epsilon
    \end{aligned}  
    \end{align}

    We now work towards constructing a state $\ket{\aux} \in \mathbb{C}^{s_A} \otimes \mathbb{C}^{s_B}$ such that $(V_A \otimes V_B)\ket{\psi} \approx_{\epsilon} \ket{\widetilde{\psi}}\ket{\aux}$. Define $M \coloneq \sum_{x,a} E_{x,a} \otimes F_{x,a}$. Substituting $x,a = y,b$ in the above equation and summing over all $x,a$, we get:
    \[\abs{\braket{\psi \mid M \mid \psi} - \braket{\psi \mid (V_A \otimes V_B)^* (\widetilde{M} \otimes I_{s_A} \otimes I_{s_B}) (V_A \otimes V_B) \mid \psi}} \leq 2mn\epsilon \]
    Note that $\braket{\psi \mid M \mid \psi} = \sum_{x,a} p(a,a \mid x,x) \leq m$. Moreover, since $\norm{p - \widetilde{p}}_1 \leq \delta \leq \epsilon'$ and $\sum_{x,a} \widetilde{p}(a,a \mid x,x) = m$, we have $ \braket{\psi \mid \mid M \psi} \geq  m-\epsilon$. Hence, we can now write   
    \[\braket{\psi \mid(V_A \otimes V_B)^* (\widetilde{M} \otimes I_{s_A} \otimes I_{s_B}) (V_A \otimes V_B) \mid \psi} \geq m - (2mn+1)\epsilon'\]

    Let us denote $(V_A \otimes V_B)\ket{\psi}$ by $\ket{\psi'}$ for convenience. Recall that $m$ is the largest eigenvalue of $\widetilde{M}$ and hence also of $\widetilde{M} \otimes I_{s_A} \otimes I_{s_B}$. Let $Q$ be the projection onto the eigenspace of $\widetilde{M} \otimes I_{s_A} \otimes I_{s_B}$ corresponding to $m$, and let 
    $ \alpha\coloneq \norm{Q \ket{\psi'}}$. Then, $\alpha \geq \left(1 - \frac{(2mn+1)\epsilon'}{m - \lambda_2}\right)^{\frac{1}{2}}$, by \cref{lem:borr-lem-1} and the previous inequality. Here, it should also be noted that $\left(1 - \frac{(2mn+1)\epsilon'}{m - \lambda_2}\right)^{\frac{1}{2}} > 0$ by the choice of $\epsilon'$, so that $\alpha> 0$. 

    Hence, $\alpha^{-1} Q \ket{\psi'}$ is a unit vector in the eigenspace of $\widetilde{M} \otimes I_{s_A} \otimes I_{s_B}$ corresponding to the eignvalue $m$ and there exists a unique quantum state $\ket{\aux} \in \mathbb{C}^{s_A} \otimes \mathbb{C}^{s_B}$ such that $Q \ket{\psi'} = \alpha \ket{\widetilde{\psi}}\ket{\aux}$. 

    Now, we note that 
    \begin{align*}
    \norm{\ \ket{\psi'} - \ket{\widetilde{\psi}}\ket{\aux}\ }^2 & = \norm{\ \ket{\psi'} - \alpha^{-1} Q\ket{\psi'}\ }^2 \\
    & = \braket{\psi'\mid \psi'} - 2\alpha^{-1} \braket{\psi' \mid Q \mid \psi'} + \alpha^{-2} \braket{\psi'\mid Q^2\mid \psi'} \\
    & = 1 - 2 \alpha^{-1}\alpha^2 + \alpha^{-2}\alpha^2 \\
    & = 2(1 - \alpha) \leq \sqrt{2(1-\alpha^2)} \ (\because \alpha \leq 1)
    \end{align*}

    Hence, We have the following: 
    \begin{equation}\label{eq:closeness-of-state}
        \norm{\ \ket{\psi'} - \ket{\widetilde{\psi}}\ket{\aux} \ } \leq \sqrt{2(1-\alpha^2)} \leq \sqrt{\frac{2(2mn+1)\epsilon'}{m-\lambda_2}} = \beta \leq \epsilon
    \end{equation}

    We now work towards showing that 
    $$V_A \otimes V_B (E_{x,a} \otimes F_{y,b}) \ket{\psi} \approx_{\epsilon} (\widetilde{E}_{x,a} \otimes \widetilde{F}_{y,b}) \ket{\widetilde{\psi}} \otimes \ket{\aux}$$ for all $x, y \in X$ and $ a, b \in A$. Fix $V = V_A \otimes V_B$. Then, we haeve 
    \begin{align*}
         \begin{split} \norm{V^*(\widetilde{E}_{x,a} \otimes I_{s_A} \otimes \widetilde{F}_{y,b} \otimes I_{s_B}) V \ket{\psi} - (E_{x,a} \otimes F_{y,b}) \ket{\psi}} \leq & \norm{V_A^* (\widetilde{E}_{x,a} \otimes I_{s_A}) V_A- E_{x,a}}_{\rho_A}\norm{F_{y,b}} + \\ & \norm{V_B^* (\widetilde{F}_{y,b} \otimes I_{s_B})V_B - F_{y,b}}_{\rho_B} \norm{\widetilde{E}_{x,a} \otimes I_{s_A}}\\ & \leq 2 \epsilon'.
         \end{split}
    \end{align*}

    Since $V$ is an isometry, we get $$\norm{V V^*(\widetilde{E}_{x,a} \otimes I_{s_A} \otimes \widetilde{F}_{y,b} \otimes I_{s_B}) V \ket{\psi} - V (E_{x,a} \otimes F_{y,b}) \ket{\psi}} \leq 2 \epsilon'.$$ It follows from \cref{eq:closeness-of-state} that  
    $$\norm{V V^*(\widetilde{E}_{x,a} \otimes I_{s_A} \otimes \widetilde{F}_{y,b} \otimes I_{s_B}) \ket{\widetilde{\psi}}\ket{\aux} - V V^*(\widetilde{E}_{x,a} \otimes I_{s_A} \otimes \widetilde{F}_{y,b} \otimes I_{s_B})V \ket{\psi} } \leq  \norm{\ \ket{\psi'} - \ket{\widetilde{\psi}}\ket{\aux} \ } \leq \beta$$

    Hence, by applying the triangle equality we get
    \begin{align}\label{eq:useful-eq-1}
        \norm{V(E_{x,a} \otimes F_{y,b}) \ket{\psi} - V V^*(\widetilde{E}_{x,a} \otimes I_{s_A} \otimes \widetilde{F}_{y,b} \otimes I_{s_B}) \ket{\widetilde{\psi}}\ket{\aux}} \leq 2 \epsilon' + \beta.
    \end{align}

    Note that $\norm{(\widetilde{E}_{x,a} \otimes I_{s_A} \otimes \widetilde{F}_{y,b} \otimes I_{s_B})\ket{\widetilde{\psi}}\ket{\psi}}^2 = \widetilde{p}(a,b \mid x, y)$. Similarly, $p(a,b \mid x,y ) = \norm{(E_{x,a} \otimes F_{y,b})\ket{\psi}}^2$. Since $V$ is an isometry and $\norm{p - \widetilde{p}}_1 \leq \delta \leq \epsilon'$, we get 
    \begin{equation}\label{eq:useful-eq-2}
    \abs{\ \norm{(\widetilde{E}_{x,a} \otimes I_{s_A} \otimes \widetilde{F}_{y,b} \otimes I_{s_B})\ket{\widetilde{\psi}}\ket{\psi}}^2- \norm{V(E_{x,a} \otimes F_{y,b})\ket{\psi}}^2\ } \leq \epsilon'    
    \end{equation}

    Since $V$ is an isometry, $VV^*$ is a projection. Hence, it follows from \cref{eq:useful-eq-1}, \cref{eq:useful-eq-2} and \cref{lem:borr-lem-3} that 

    \begin{equation}\label{eq:closeness-for-meas-act-on-state}
        \norm{V (E_{x,a} \otimes F_{y,b}) \ket{\psi} - (\widetilde{E}_{x,a} \otimes \widetilde{F}_{y,b}) \ket{\widetilde{\psi}}\ket{\aux}} \leq 2 \epsilon' + \beta + \sqrt{\epsilon'+2(2\epsilon'+\beta)} \leq \epsilon
    \end{equation}

    It follows from \cref{eq:closeness-for-meas-act-on-state} and \cref{eq:closeness-of-state} that $\widetilde{S}$ is a local $\epsilon$-dilation of $S$ which finishes the proof. 
\end{proof}

In particular, one has the following sufficient condition for robust self-testing: 

\begin{cor}
    Let $\widetilde{p} \in C_q(X,X,A,A)$ be a synchronous quantum correlation that is an extreme point of $C_q(X,X,A,A)$. If there is a unique (finite-dimensional) state $\widetilde{\rho}$ on $\mathcal{A}_{XA}^{POVM} \otimes_{\max} \mathcal{A}_{XA}^{POVM}$ such that $\widetilde{\rho}(e_{x,a} \otimes f_{y,b}) = \widetilde{p}(a,b\mid x,y)$ for all $x,y \in X$ and $a,b \in A$, then $\widetilde{p}$ robustly self-tests a PME strategy $\widetilde{S}$. 
\end{cor}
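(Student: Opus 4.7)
The plan is to derive this as an immediate corollary of \cref{thm:main-theorem-2}, by observing that uniqueness of an implementing state on the max tensor product forces uniqueness on the min tensor product, so that hypothesis~\ref{robust-self-test-condition} is satisfied.

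The key observation is that because the min norm is dominated by the max norm, the identity on the algebraic tensor product extends to a surjective $*$-homomorphism
\[
\pi \colon \mathcal{A}_{XA}^{POVM} \otimes_{\max} \mathcal{A}_{XA}^{POVM} \longrightarrow \mathcal{A}_{XA}^{POVM} \otimes_{\min} \mathcal{A}_{XA}^{POVM},
\]
which sends each simple tensor $e_{x,a} \otimes f_{y,b}$ in the max algebra to the corresponding simple tensor in the min algebra. In particular, if $\sigma$ is any state on the min tensor product implementing $\widetilde{p}$, then $\sigma \circ \pi$ is a state on the max tensor product that still satisfies $(\sigma \circ \pi)(e_{x,a} \otimes f_{y,b}) = \widetilde{p}(a,b \mid x,y)$ for all $x,y \in X$ and $a,b \in A$.

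First I would use this to promote the max-uniqueness hypothesis to min-uniqueness. Suppose $\sigma_1, \sigma_2$ are two states on $\mathcal{A}_{XA}^{POVM} \otimes_{\min} \mathcal{A}_{XA}^{POVM}$ implementing $\widetilde{p}$. Then $\sigma_1 \circ \pi$ and $\sigma_2 \circ \pi$ are both states on the max tensor product implementing $\widetilde{p}$, so by the hypothesized uniqueness on max they must coincide; since $\pi$ is surjective, this forces $\sigma_1 = \sigma_2$. Hence $\widetilde{p}$ satisfies condition~\ref{robust-self-test-condition}.

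With condition~\ref{robust-self-test-condition} established and $\widetilde{p}$ already assumed to be an extreme point of $C_q(X,X,A,A)$ and synchronous, \cref{thm:main-theorem-2} applies directly and yields a PME strategy $\widetilde{S}$ which $\widetilde{p}$ robustly self-tests. There is no real obstacle here; the only subtlety is keeping the direction of the quotient map straight, namely that states pull back along $\pi$ from min to max (not the other way around), which is precisely the direction needed to convert max-uniqueness into min-uniqueness.
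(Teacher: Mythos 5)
Your proposal is correct and matches the paper's intended (but unwritten) argument: the paper presents the corollary as an immediate consequence of \cref{thm:main-theorem-2}, and the bridge from the max-tensor-product hypothesis to condition~\ref{robust-self-test-condition} is exactly the pull-back of states along the canonical surjection $\mathcal{A}_{XA}^{POVM} \otimes_{\max} \mathcal{A}_{XA}^{POVM} \twoheadrightarrow \mathcal{A}_{XA}^{POVM} \otimes_{\min} \mathcal{A}_{XA}^{POVM}$, which you state cleanly. The only detail worth recording (and which you use implicitly) is existence of an implementing state on the min tensor product: it comes for free from $\widetilde{p}$ being a quantum correlation, since any finite-dimensional strategy yields a finite-dimensional state on the min algebra.
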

This result can be thought of as a robust analogue of \cite{paddock_operator-algebraic_2023}*{Theorem 3.5} for synchronous correlations.

We now work towards showing the other implication of \cref{thm:main-result}. We begin by that states induced by local $\epsilon$-dilations of $\widetilde{S}$ are close to $\widetilde{\rho}$ on monomials of small length. 

\begin{lemma}\label{lem:local-eps-dil-implies-closeness-of-state}
    Let $S = (\mathcal{H}_A, \mathcal{H}_B, \{E_{x,a}\},\{F_{y,b}\}, \ket{\psi})$ finite dimensional strategy that is a local $\epsilon$-dilation of a PME strategy $\widetilde{S} = (\widetilde{\mathcal{H}}_A, \widetilde{\mathcal{H}}_B, \{\widetilde{E}_{x,a}\},\{\widetilde{F}_{y,b}\}, \ket{\widetilde{\psi}})$. Let $\alpha, \beta \in \mathcal{A}_{XA}^{POVM}$ be monomials of length $l(\alpha)$ and $l(\beta)$ in $\{e_{x,a}\}$. If $\rho$ and $\widetilde{\rho}$ are the states on $\mathcal{A}_{XA}^{POVM} \otimes \mathcal{A}_{XA}^{POVM}$ induced by $S$ and $\widetilde{S}$ respectively, then one has $$\abs{\widetilde{\rho}(\alpha \otimes \beta) - \rho(\alpha \otimes \beta)} \leq ((n+5)(l(\alpha) + l(\beta)) +2)\epsilon.$$      
\end{lemma}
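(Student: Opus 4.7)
The plan is to lift the single-operator approximations provided by \cref{eq:robust-self-test-measurement-and-state,eq:robust-self-test-state} first to the action of arbitrary monomials, and then translate this state-level closeness into expectation-level closeness. Throughout, write $V = V_A \otimes V_B$, $\ket{\widetilde{\Psi}} \coloneq \ket{\widetilde{\psi}}\ket{\psi_{\aux}}$, and $n = \abs{A}$. Summing \cref{eq:robust-self-test-measurement-and-state} over $b \in A$ with a fixed $y$ and applying the triangle inequality over the $n$ outcomes gives the one-sided approximation
\[V(E_{x,a} \otimes I_{\mathcal{H}_B})\ket{\psi} \approx_{n\epsilon} (\widetilde{E}_{x,a} \otimes I_{\mathcal{K}_A} \otimes I_{\widetilde{\mathcal{H}}_B \otimes \mathcal{K}_B}) \ket{\widetilde{\Psi}},\]
and analogously for $I \otimes F_{y,b}$. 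Defining $\hat{E}_{x,a} \coloneq V_A E_{x,a} V_A^*$ and $\hat{F}_{y,b} \coloneq V_B F_{y,b} V_B^*$, the relation $V_A^* V_A = I$ yields $V_A E^{\alpha} = \hat{E}^{\alpha} V_A$ for every monomial $\alpha$ (and similarly on the $B$-side).

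The core step is the claim, shown by induction on $l(\alpha)$, that
\[V(E^{\alpha} \otimes I)\ket{\psi} \approx_{((n+1)l(\alpha)+1)\epsilon} (\widetilde{E}^{\alpha} \otimes I \otimes I) \ket{\widetilde{\Psi}}.\]
The base case is \cref{eq:robust-self-test-state}. For the inductive step, writing $\alpha = e_{x,a} \alpha'$ gives $V(E^{\alpha} \otimes I)\ket{\psi} = (\hat{E}_{x,a} \otimes I)\,V(E^{\alpha'} \otimes I)\ket{\psi}$; invoking the induction hypothesis and $\norm{\hat{E}_{x,a} \otimes I} \leq 1$, this is close to $(\hat{E}_{x,a} \otimes I)(\widetilde{E}^{\alpha'} \otimes I \otimes I)\ket{\widetilde{\Psi}}$. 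The key trick is that since $\widetilde{S}$ is a PME strategy with $\widetilde{F}_{y,b} = \widetilde{E}_{y,b}^T$ and $\ket{\widetilde{\psi}}$ maximally entangled, one has the transpose identity
\[(\widetilde{E}^{\alpha'} \otimes I_{\widetilde{\mathcal{H}}_B})\ket{\widetilde{\psi}} = (I_{\widetilde{\mathcal{H}}_A} \otimes \widetilde{F}^{(\alpha')^{\op}})\ket{\widetilde{\psi}},\]
where $(\alpha')^{\op}$ denotes the reverse monomial. Hence $(\widetilde{E}^{\alpha'} \otimes I \otimes I)\ket{\widetilde{\Psi}}$ equals $(I \otimes I \otimes \widetilde{F}^{(\alpha')^{\op}} \otimes I)\ket{\widetilde{\Psi}}$, which commutes with $\hat{E}_{x,a} \otimes I$ because they act on disjoint tensor factors. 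Commuting past reduces the remaining error to $\norm{(\hat{E}_{x,a} \otimes I)\ket{\widetilde{\Psi}} - (\widetilde{E}_{x,a} \otimes I \otimes I)\ket{\widetilde{\Psi}}}$, which is at most $(n+1)\epsilon$: substitute $V\ket{\psi} \approx_\epsilon \ket{\widetilde{\Psi}}$ inside $\hat{E}_{x,a} \otimes I$ and apply the one-sided approximation. This closes the induction at an additive cost of $(n+1)\epsilon$ per step.

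By symmetry one also has $V(I \otimes F^{\beta})\ket{\psi} \approx_{((n+1)l(\beta)+1)\epsilon} (I \otimes \widetilde{F}^{\beta} \otimes I)\ket{\widetilde{\Psi}}$. Writing $V(E^{\alpha} \otimes F^{\beta})\ket{\psi} = (\hat{E}^{\alpha} \otimes I)\,V(I \otimes F^{\beta})\ket{\psi}$, substituting the second claim, commuting $\hat{E}^{\alpha} \otimes I$ past $I \otimes \widetilde{F}^{\beta} \otimes I$, and then applying the first claim (with one additional $\epsilon$ to swap $\ket{\widetilde{\Psi}}$ for $V\ket{\psi}$ inside $\hat{E}^{\alpha} \otimes I$), one obtains
\[V(E^{\alpha} \otimes F^{\beta})\ket{\psi} \approx_{((n+1)(l(\alpha)+l(\beta))+3)\epsilon} (\widetilde{E}^{\alpha} \otimes I \otimes \widetilde{F}^{\beta} \otimes I) \ket{\widetilde{\Psi}}.\]
To finish, write $\rho(\alpha \otimes \beta) = \braket{V\psi \mid V(E^{\alpha} \otimes F^{\beta})\psi}$ and $\widetilde{\rho}(\alpha \otimes \beta) = \braket{\widetilde{\Psi} \mid (\widetilde{E}^{\alpha} \otimes I \otimes \widetilde{F}^{\beta} \otimes I)\widetilde{\Psi}}$, and apply Cauchy--Schwarz together with one further use of $V\ket{\psi} \approx_\epsilon \ket{\widetilde{\Psi}}$ on the bra side. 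This gives the bound $((n+1)(l(\alpha)+l(\beta))+4)\epsilon$, which when $l(\alpha)+l(\beta) \geq 1$ is subsumed by $((n+5)(l(\alpha)+l(\beta))+2)\epsilon$ (the additive $4\epsilon$ absorbs into $4(l(\alpha)+l(\beta))\epsilon$), and when $l(\alpha)=l(\beta)=0$ is trivial since both states evaluate to $1$ on $1 \otimes 1$.

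The main obstacle is the inductive step of the key claim: extending a single-operator approximation to a product naively would require operator-norm closeness of $\hat{E}_{x,a}$ to $\widetilde{E}_{x,a} \otimes I$, which the hypothesis does not provide (only closeness when acting on a specific state). The PME transpose trick is what resolves this, since it turns the action of $\widetilde{E}^{\alpha'}$ on the $A$-side into the action of $\widetilde{F}^{(\alpha')^{\op}}$ on the $B$-side against the maximally entangled state, and the latter commutes with the newly-introduced $\hat{E}_{x,a}$ factor, allowing the induction to proceed with linear error growth in the monomial length.
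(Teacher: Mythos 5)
Your proof is correct and follows essentially the same approach as the paper's: an induction on monomial length whose crucial step is the PME transpose identity $(\widetilde{E}^{\alpha}\otimes I)\ket{\widetilde{\psi}} = (I\otimes \widetilde{F}^{\alpha^{\op}})\ket{\widetilde{\psi}}$, which lets you commute the newly-introduced operator past the already-handled tail because they act on disjoint tensor factors. You frame the inductive claim slightly differently (comparing $V(E^{\alpha}\otimes I)\ket{\psi}$ directly to $(\widetilde{E}^{\alpha}\otimes I\otimes I)\ket{\widetilde{\Psi}}$ rather than the paper's comparison of two expressions acting on $\ket{\widetilde{\Psi}}$), and this bookkeeping yields a slightly tighter per-step constant of $n+1$ rather than $n+5$; the final bound is correctly shown to be subsumed by the one stated in the lemma.
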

\begin{proof}
Since $\widetilde{S}$ is a local $\epsilon$-dilation of $S$, there exist Hilbert spaces $\mathcal{H}^{\aux}_A$, $\mathcal{H}^{\aux}_B$, isometries $V_A: \mathcal{H}_A \to \widetilde{\mathcal{H}}_A \otimes \mathcal{H}_A^{\aux}$, $V_B: \mathcal{H}_B \to \widetilde{\mathcal{H}}_B \otimes \mathcal{H}_B^{\aux}$ and a unit vector $\ket{\aux} \in \mathcal{H}^{\aux}_A\otimes\mathcal{H}^{\aux}_B$ such that the following hold true: 
\begin{align*}
     & V_A \otimes V_B \ket{\psi} \approx_{\epsilon} \ket{\widetilde{\psi}}\ket{\aux}, \text{ and }\\
 & (V_A \otimes V_B)(E_{x,a} \otimes F_{y,b})\ket{\psi} \approx_{\epsilon} (\widetilde{E}_{x,a} \otimes I_{\mathcal{H}_A^{\aux}}) \otimes (\widetilde{F}_{y,b} \otimes I_{\mathcal{H}_B^{\aux}})\ket{\widetilde{\psi}}\ket{\aux}.
\end{align*}
    We first claim and show the following: 
    \begin{align}
        \abs{\ (\widetilde{\pi}_A(\alpha) \otimes I_{\mathcal{H}_{B}})\ket{\widetilde{\psi}}\ket{\aux} - (V_A \pi_A(\alpha) V_A^* \otimes I_{\widetilde{\mathcal{H}}_B} \otimes I_{\mathcal{H}_B^{\aux}})\ket{\widetilde{\psi}}\ket{\aux}\ } \leq (n+5)l(\alpha)\epsilon
    \end{align}

    We proceed by induction on $l(\alpha)$. For the case $l(\alpha) = 1$, take $\alpha = e_{x,a}$. Then, for an arbitrary choice of $y \in X$ one has 
    \begin{align*}
       (\widetilde{E}_{x,a} \otimes I_{\mathcal{H}_{B}})\ket{\widetilde{\psi}}\ket{\aux} & = \sum_{b} (\widetilde{E}_{x,a} \otimes \widetilde{F}_{y,b})\ket{\widetilde{\psi}}\ket{\aux} \\
       & \approx_{n\epsilon} \sum_{b} (V_A \otimes V_B) (E_{x,a} \otimes F_{y,b}) \ket{\psi} \\
       & =  (V_A \otimes V_B) (E_{x,a} \otimes I_{\mathcal{H}_B}) (V_A \otimes V_B)^*(V_A \otimes V_B) \ket{\psi} \\
       & \approx_{\epsilon} (V_A \otimes V_B) (E_{x,a} \otimes I_{\mathcal{H}_B}) (V_A \otimes V_B)^* \ket{\widetilde{\psi}}\ket{\aux} \\
       & = (V_A E_{x,a} V_A^* \otimes V_B V_B^*)\ket{\widetilde{\psi}}\ket{\aux}
    \end{align*}

    Lastly, since $(V_A \otimes V_B)(V_A \otimes V_B)^*$ is a projection, one has 
    \begin{align*}
        (V_A \otimes V_B)(V_A \otimes V_B)^*\ket{\widetilde{\psi}}\ket{\aux} & \approx_{\epsilon} (V_A \otimes V_B)(V_A \otimes V_B)^* (V_A \otimes V_B) \ket{\psi} \\
        & = (V_A \otimes V_B) \ket{\psi} \approx_\epsilon \ket{\widetilde{\psi}}\ket{\aux},
    \end{align*}
    so that
    \begin{align*}
        (V_A E_{x,a} V_A^* \otimes V_B V_B^*)\ket{\widetilde{\psi}}\ket{\aux} & = (V_A E_{x,a} V_A^* \otimes I) (I \otimes V_BV_B^*)\ket{\widetilde{\psi}}\ket{\aux} \\
        & \approx_{2\epsilon} (V_A E_{x,a} V_A^* \otimes I) (I \otimes V_BV_B^*)(V_A \otimes V_B)(V_A \otimes V_B)^*\ket{\widetilde{\psi}}\ket{\aux}\\
        & = (V_A E_{x,a} V_A^* \otimes I) (V_AV_A^* \otimes V_BV_B^*)\ket{\widetilde{\psi}}\ket{\aux} \\
        & \approx_{2\epsilon} (V_A E_{x,a} V_A^* \otimes I_{\widetilde{\mathcal{H}}_B} \otimes I_{\mathcal{H}_B ^{\aux}})\ket{\widetilde{\psi}}\ket{\aux}
    \end{align*}
    Since the choice of $x,a$ was arbitrary, this establishes the base step of induction. 

    Now, let us assume that this holds true for monomials of length $\leq k$, and consider a monomial $\alpha$ of length $k+1$. Let $\alpha = \alpha' e_{x,a}$ for some $e_{x,a}$ and monomial $\alpha'$ of length $k$. We then have the following: 
    \begin{align*}
        (\widetilde{\pi}_A(\alpha) \otimes I_{\widetilde{\mathcal{H}}_{B}})\ket{\widetilde{\psi}}\ket{\aux} & = (\widetilde{\pi}_A(\alpha') \widetilde{E}_{x,a} \otimes I_{\widetilde{\mathcal{H}}_{B}})\ket{\widetilde{\psi}}\ket{\aux} \\
        & = (\widetilde{\pi}_A(\alpha') \otimes \widetilde{F}_{y,b})\ket{\widetilde{\psi}}\ket{\aux} \\
        & = (I_{\widetilde{\mathcal{H}}_A} \otimes \widetilde{F}_{y,b}) (\widetilde{\pi}_A(\alpha') \otimes I_{\widetilde{\mathcal{H}}_{B}})\ket{\widetilde{\psi}}\ket{\aux} \\
        &\approx_{{(n+5)k\epsilon}} (I_{\widetilde{\mathcal{H}}_A} \otimes \widetilde{F}_{y,b} \otimes I_{\mathcal{H}_B^{\aux}}) (V_A{\pi}_A(\alpha')V_A^* \otimes I_{\widetilde{\mathcal{H}}_{B}} \otimes I_{\mathcal{H}_B^{\aux}})\ket{\widetilde{\psi}}\ket{\aux}\\
        &= (V_A{\pi}_A(\alpha')V_A^* \otimes \widetilde{F}_{y,b} \otimes I_{\mathcal{H}_B^{\aux}})\ket{\widetilde{\psi}}\ket{\aux} \\
        & = (V_A{\pi}_A(\alpha')V_A^* \widetilde{E}_{x,a} \otimes I_{\widetilde{\mathcal{H}}_{B}} \otimes I_{\mathcal{H}_B^{\aux}})\ket{\widetilde{\psi}}\ket{\aux} \\
        & \approx_{(n+5)\epsilon}(V_A{\pi}_A(\alpha')V_A^* V_AE_{x,a}V_A^* \otimes I_{\widetilde{\mathcal{H}}_{B}} \otimes I_{\mathcal{H}_B^{\aux}})\ket{\widetilde{\psi}}\ket{\aux} \\
        & = (V_A \pi_A(\alpha) V_A^* \otimes I_{\widetilde{\mathcal{H}}_B} \otimes I_{\mathcal{H}_B^{\aux}})\ket{\widetilde{\psi}}\ket{\aux},
    \end{align*}
which establishes our claim. Similarly, we can show that the following holds true: 
\begin{align}
        \abs{\ (I_{\mathcal{H}_{B}} \otimes \widetilde{\pi}_B(\beta))\ket{\widetilde{\psi}}\ket{\aux} - (V_B \pi_B(\alpha) V_B^* \otimes I_{\widetilde{\mathcal{H}}_B} \otimes I_{\mathcal{H}_B^{\aux}})\ket{\widetilde{\psi}}\ket{\aux}\ } \leq (n+5)l(\beta)\epsilon
    \end{align}

Hence, we get
\begin{align*}
    \widetilde{\rho}(\alpha \otimes \beta) & = \braket{\widetilde{\psi} \mid \widetilde{\pi}_A(\alpha) \otimes \widetilde{\pi}_B(\beta)\mid \widetilde{\psi}} \\
    & = \braket{\widetilde{\psi},\aux \mid \widetilde{\pi}_A(\alpha) \otimes I_{\mathcal{H}_A^{\aux}} \otimes \widetilde{\pi}_B(\beta) \otimes I_{\mathcal{H}_B^{\aux}}\mid \widetilde{\psi},\aux}\\
    & \approx_{(n+5)l(\alpha)\epsilon} \braket{\widetilde{\psi},\aux \mid V_A \pi_A(\alpha) V_A^* \otimes \widetilde{\pi}_B(\beta) \otimes I_{\mathcal{H}_B^{\aux}} \mid \widetilde{\psi},\aux}\\
    & \approx_{(n+5)l(\beta)\epsilon} \braket{\widetilde{\psi},\aux \mid V_A \pi_A(\alpha) V_A^* \otimes V_B \pi_B(\beta) V_B^*\mid \widetilde{\psi},\aux}\\
    & = \braket{\widetilde{\psi},\aux \mid (V_A \otimes V_B) (\pi_A(\alpha) \otimes \pi_B(\beta)) (V_A \otimes V_B)^*\mid \widetilde{\psi},\aux}\\
    & \approx_{2\epsilon} \braket{\psi\mid (V_A \otimes V_B)^* (V_A \otimes V_B) (\pi_A(\alpha) \otimes \pi_B(\beta)) (V_A \otimes V_B)^*(V_A \otimes V_B) \mid \psi } \\
    & = \braket{\psi\mid \pi_A(\alpha) \otimes \pi_B(\beta)\mid \psi } \\
    & = \rho(\alpha \otimes \beta),
\end{align*} 
which finishes the proof.    
\end{proof}

\begin{theorem}\label{thm:main-theorem-3}
    Let $\widetilde{p}$ be a synchronous correlation that is an extreme point of $C_q(X, X, A, A)$. If $\widetilde{p}$ robustly self-tests a PME correlation $(\mathbb{C}^d, \mathbb{C}^d, \{\widetilde{E}_{x,a}\}, \{\widetilde{F}_{y,b}\}, \ket{\psi})$, then there is a unique (finite-dimensional state) $\widetilde{\rho}: \in S(\mathcal{A}_{XA}^{POVM} \otimes \mathcal{A}_{XA}^{POVM})$ such that $\widetilde{\rho}(e_{x,a} \otimes e_{y,b}) = \widetilde{p}(a,b \mid x,y)$ for all $x,y \in X$ and $a,b \in A$. 
\end{theorem}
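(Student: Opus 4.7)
The plan is to show that any state $\rho$ on $\mathcal{A}_{XA}^{POVM} \otimes_{\min} \mathcal{A}_{XA}^{POVM}$ implementing $\widetilde{p}$ must coincide with the state $\widetilde{\rho}$ induced by the PME strategy $\widetilde{S}$; since $\widetilde{S}$ is finite-dimensional, $\widetilde{\rho}$ will automatically be so. My strategy is to approximate an arbitrary such $\rho$ by states arising from genuine finite-dimensional strategies and then use robustness together with \cref{lem:local-eps-dil-implies-closeness-of-state} to propagate convergence to $\widetilde{\rho}$.

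First, I would apply \cref{prop:state-correlation-correspondence-qa} to obtain a sequence of finite-dimensional states $\rho_k \to \rho$ in the weak-$*$ topology. By definition each $\rho_k$ arises from some finite-dimensional tensor-product strategy $S_k$ producing a correlation $p_k \in C_q(X,X,A,A)$. Because the generators $e_{x,a} \otimes e_{y,b}$ form a finite set, weak-$*$ convergence evaluated at the generators gives pointwise convergence of probabilities, and hence $\norm{p_k - \widetilde{p}}_1 \to 0$.

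Next, I would invoke the robust self-testing hypothesis: for every $\epsilon > 0$ there is $\delta > 0$ such that any strategy for a correlation within $\delta$ of $\widetilde{p}$ in $\ell^1$ is locally $\epsilon$-dilated by $\widetilde{S}$. Taking $\epsilon = 1/k$ and passing to a subsequence, I obtain $\epsilon_k \to 0$ with $\widetilde{S}$ a local $\epsilon_k$-dilation of $S_k$ for every $k$. Finally, I would apply \cref{lem:local-eps-dil-implies-closeness-of-state} to each pair $(\widetilde{S}, S_k)$: for every pair of monomials $\alpha, \beta$ in the generators $\{e_{x,a}\}$,
\[\abs{\widetilde{\rho}(\alpha \otimes \beta) - \rho_k(\alpha \otimes \beta)} \leq \bigl((|A|+5)(l(\alpha) + l(\beta)) + 2\bigr)\,\epsilon_k \xrightarrow{k \to \infty} 0.\]
Combined with $\rho_k(\alpha \otimes \beta) \to \rho(\alpha \otimes \beta)$, this yields $\widetilde{\rho}(\alpha \otimes \beta) = \rho(\alpha \otimes \beta)$ for all monomial pairs. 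Their linear span is a unital $\ast$-subalgebra of $\mathcal{A}_{XA}^{POVM} \otimes_{\min} \mathcal{A}_{XA}^{POVM}$ containing the generators, hence norm-dense, and by continuity of states $\widetilde{\rho} = \rho$.

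The main obstacle I foresee is ensuring that the approximating states $\rho_k$ supplied by \cref{prop:state-correlation-correspondence-qa} genuinely come from bona fide finite-dimensional strategies whose correlations lie in $C_q(X,X,A,A)$, since only then does the definition of robustness apply. The compression-based construction in the proof of \cref{prop:state-correlation-correspondence-qa} needs to be inspected (or lightly perturbed) to yield honest POVMs rather than sub-POVMs; once that is in hand, the three steps above combine mechanically to complete the proof.
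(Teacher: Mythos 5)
Your proposal follows essentially the same route as the paper's own proof: approximate an arbitrary implementing state $\rho$ by finite-dimensional states $\rho_k$ via \cref{prop:state-correlation-correspondence-qa}, pass from weak-$*$ convergence at the generators to $\ell^1$-convergence of the correlations $p_k \to \widetilde{p}$, invoke robustness to obtain local $\epsilon_k$-dilations with $\epsilon_k \to 0$, and then use \cref{lem:local-eps-dil-implies-closeness-of-state} to show $\rho_k$ and $\widetilde{\rho}$ agree in the limit on all monomials, whence $\rho = \widetilde{\rho}$ by density. You also correctly identify \cref{lem:local-eps-dil-implies-closeness-of-state} as the lemma that does the work here (the paper's text cites \cref{lem:local-dilation-implies-equal-state}, the non-robust version, which appears to be a citation slip). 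On your flagged worry: the compressions $\pi_n^A \Pi_A(e_{x,a}) \pi_n^A$ in the proof of \cref{prop:state-correlation-correspondence-qa} do sum to $\pi_n^A$, i.e., to the identity on the compressed subspace, so they are genuine POVMs there; the only normalization issue is that $\ket{\psi_n}$ need not be a unit vector, which is handled by renormalizing.
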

\begin{proof}
    let $\widetilde{\rho}$ be the finite dimensional state on $\mathcal{A}_{XA}^{POVM} \otimes \mathcal{A}_{XA}^{POVM}$ induced by the model $\widetilde{S}$, and $\rho$ be any state in $S(\mathcal{A}_{XA}^{POVM} \otimes \mathcal{A}_{XA}^{POVM})$. Then, there exist quantum correlations $p^{[k]} \in C_q(X,X,A,A)$ with models $S^{[k]}$, such that $\lim_{k \to \infty} \norm{\widetilde{p} - p^{[k]}}_1  = 0$. Moreover, if $\rho^{[k]}$ are the induced finite dimensional states on $\mathcal{A}_{XA}^{POVM} \otimes \mathcal{A}_{XA}^{POVM}$ then $\rho^{[k]} \to \rho$ in the weak-$*$ topology. Since $\widetilde{p}$ is a robust self-test for $\widetilde{S}$, it follows from \cref{lem:local-dilation-implies-equal-state} that for any words $\alpha, \beta$ in $\{e_{x,a}\}$ and $\epsilon > 0$, there is a $N \in \mathbb{N}$ such that for all $k > N$, one has $\abs{\rho^{[k]}(\alpha \otimes \beta) - \widetilde{\rho}(\alpha \otimes \beta)} \leq \epsilon$. Since $\alpha, \beta$ and $\epsilon$ were arbitrary it follows that $\widetilde{\rho} = \rho$, which finishes the proof. 
\end{proof}

We note that the proofs of \cref{lem:local-eps-dil-implies-closeness-of-state} and \cref{thm:main-theorem-3} can be generalised to the case where $\widetilde{S}$ is a centrally supported strategy (as defined in \cite{paddock_operator-algebraic_2023}). In particular, since each correlation has a full-rank strategy, the above result holds true for all correlations thus proving \cref{prop:robust-self-test-implies-abstract-self-test}. 

\section{Game Self-tests}

We shall first prove the following result:

\begin{prop}\label{prop:state-correlation-correspondence-qc-synch}
    Let $\pi: \mathcal{A}_{XA}^{POVM} \otimes_{\max} \mathcal{A}_{XA}^{POVM} \to \mathcal{B(H})$ be a representation and $\ket{\psi}\in \mathcal{H}$ be a unit vector, and let $\rho$ be the state $a \mapsto \braket{\psi\mid \pi(a) \mid \psi}$. If the correlation $(\rho(e_{x,a} \otimes e_{y,b}))$ is synchronous, then the following hold true: 
    \begin{enumerate}[label = \roman*.]
        \item $\pi(e_{x,a} \otimes 1)\ket{\psi} = \pi(1 \otimes f_{x,a})\ket{\psi}$ for all $x \in X$, $a \in A$. 
        \item Let $\op: \mathcal{A}_{XA}^{POVM} \to (\mathcal{A}_{XA}^{POVM})^{\op}$ denote the isomorphism from $\mathcal{A}_{XA}^{POVM}$ in to its opposite algebra. Then, for any $a,b \in \mathcal{A}_{XA}^{POVM}$, one has 
        \begin{equation*}
            \rho(a\otimes b) = \rho(a\op(b)\otimes 1) = \rho(\op(b)a \otimes 1) = \rho(1 \otimes b\op(a)) = \rho(1 \otimes \op(a)b) = \rho(\op(a) \otimes \op(b)).
        \end{equation*}
        In particular, the state $a \mapsto \rho(a \otimes 1)$ on $\mathcal{A}_{XA}^{POVM}$ is a trace. 
    \end{enumerate}
\end{prop}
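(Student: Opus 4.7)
I would first prove part (i) by a short positivity computation, and then bootstrap (i) into the vector identity $\pi(a \otimes 1)\ket{\psi} = \pi(1 \otimes \op(a))\ket{\psi}$ for every $a \in \mathcal{A}_{XA}^{POVM}$, where $\op$ is the complex-linear anti-homomorphism which is the identity on the generators. Each equality in the chain in (ii) will then reduce to a direct substitution of this identity in $\rho(a \otimes b) = \braket{\psi \mid \pi(a \otimes 1)\pi(1 \otimes b)\mid\psi}$, together with commuting the two middle factors (permitted since $\pi$ represents the \emph{maximal} tensor product).

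For (i), set $E := \pi(e_{x,a}\otimes 1)$ and $F := \pi(1 \otimes f_{x,a})$: these are commuting positive contractions, so $E^2 \leq E$ and $F^2 \leq F$. Expanding
\begin{equation*}
\norm{(E - F)\ket{\psi}}^2 = \braket{\psi \mid E^2 \mid \psi} - 2\braket{\psi \mid EF \mid \psi} + \braket{\psi \mid F^2 \mid \psi},
\end{equation*}
the outer terms are bounded above by the marginals $\rho(e_{x,a}\otimes 1)$ and $\rho(1 \otimes f_{x,a})$ respectively. Synchronicity forces both marginals, as well as the cross term $\braket{\psi\mid EF\mid\psi}$, to coincide with the diagonal value $\rho(e_{x,a}\otimes e_{x,a})$, so the right-hand side collapses to at most zero. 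Hence $E\ket{\psi} = F\ket{\psi}$.

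For (ii), I would promote (i) to the identity $\pi(a \otimes 1)\ket\psi = \pi(1 \otimes \op(a))\ket\psi$ by induction on monomial length in the generators. Writing $a = a' e_{x,c}$ and applying (i) to $\pi(e_{x,c}\otimes 1)\ket{\psi}$ gives
\begin{equation*}
\pi(a \otimes 1)\ket\psi = \pi(a' \otimes 1)\pi(1 \otimes f_{x,c})\ket\psi = \pi(1 \otimes f_{x,c})\pi(a'\otimes 1)\ket\psi,
\end{equation*}
using the commutativity of the two copies inside the max tensor; the inductive hypothesis applied to $a'$ then yields $\pi(1\otimes f_{x,c}\op(a'))\ket\psi = \pi(1\otimes\op(a))\ket\psi$, since $\op$ is anti-multiplicative. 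Linearity and continuity extend the identity to all of $\mathcal{A}_{XA}^{POVM}$; applying it at $a^*$ together with $\op(a^*) = \op(a)^*$ gives the mirrored ``bra'' version $\braket{\psi\mid\pi(a\otimes 1) = \braket{\psi\mid\pi(1\otimes\op(a))$ as an inner-product identity. Each equality in the chain is then a substitution of these identities in $\rho(a \otimes b) = \braket{\psi \mid \pi(a \otimes 1)\pi(1 \otimes b)\mid\psi}$, with commuting the two factors as needed; in particular, the equivalent form $\pi(1\otimes b)\ket\psi = \pi(\op(b)\otimes 1)\ket\psi$ produces $\rho(a\op(b)\otimes 1)$, commuting first and then applying gives $\rho(\op(b)a\otimes 1)$, and the bra version produces the two $\rho(1\otimes\cdot)$ expressions. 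The trace property of $\tau := \rho(\cdot \otimes 1)$ falls out of the first two equalities immediately: they give $\tau(a\op(b)) = \tau(\op(b) a)$ for all $a, b$, and letting $c := \op(b)$ range freely yields $\tau(ac) = \tau(ca)$. The only real technical subtlety lies in the positivity argument in (i) - using $E^2 \leq E$ for positive contractions (rather than the easier projection case) to force the upper bound to zero; everything after is structural bookkeeping.
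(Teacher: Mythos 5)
Your argument follows essentially the same route as the paper's. For (i), both proofs expand $\norm{\pi(e_{x,a}\otimes 1)\ket\psi - \pi(1\otimes f_{x,a})\ket\psi}^2$, bound the outer terms using $e_{x,a}^2\le e_{x,a}$ (and likewise for $f_{x,a}$), then insert $\sum_b f_{x,b}=1$ and $\sum_{b'}e_{x,b'}=1$ into the marginals so that synchronicity collapses everything to $\rho(e_{x,a}\otimes f_{x,a})$; you summarize this step, the paper writes out the sums explicitly, but the mechanism is identical. For (ii), the paper likewise passes through the vector identity $\pi(a\otimes 1)\ket\psi = \pi(1\otimes\op(a))\ket\psi$, simply asserting the monomial case after (i) where you spell out the induction (a sensible addition), and then obtains the chain by the same bra/ket substitutions and the commutativity of the two tensor legs; the trace property follows from the first two equalities exactly as you observe.

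One caveat, which applies equally to the paper's proof: the final link $\rho(a\otimes b) = \rho(\op(a)\otimes\op(b))$ is not the ``structural bookkeeping'' the other four equalities are. Applying the vector identity simultaneously to the bra and the ket yields $\rho(a\otimes b) = \braket{\psi\mid\pi(1\otimes\op(a))\pi(\op(b)\otimes 1)\mid\psi} = \rho(\op(b)\otimes\op(a))$, with the two arguments \emph{swapped}. To pass from this to $\rho(\op(a)\otimes\op(b))$ one would need $\tau_A(a\op(b)) = \tau_A(\op(a)b)$, where $\tau_A = \rho(\cdot\otimes 1)$; this is not a consequence of traciality or the vector identity, and already for self-adjoint generators it amounts to $\tau_A(e_1e_2e_3)\in\mathbb{R}$, which need not hold. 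The paper itself only derives the first equality and writes ``the rest can be completed similarly,'' so this gap is not specific to your proposal — but the last equality deserves either its own argument or a correction of the statement to the swapped form $\rho(\op(b)\otimes\op(a))$, and should not be lumped together with the others as routine.
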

\begin{proof}
    Consider $\norm{\pi(e_{x,a})\ket{\psi} - \pi(f_{x,a})\ket{\psi}}$, 
    \begin{align*}
        \norm{ \pi(e_{x,a} \otimes 1)\ket{\psi} - \pi(1 \otimes f_{x,a})\ket{\psi} } & = \braket{\psi \mid \pi(e_{x,a}^2 \otimes 1)\mid \psi} + \braket{\psi \mid \pi(1 \otimes f_{x,a}^2)\mid \psi} - 2 \braket{\psi \mid \pi(e_{x,a} \otimes f_{x,a}) \mid \psi}  \\
        & \leq \braket{\psi \mid \pi(e_{x,a}\otimes1 )\mid \psi} + \braket{\psi \mid \pi(1\otimes f_{x,a})\mid \psi} - 2 \braket{\psi \mid \pi(e_{x,a} \otimes f_{x,a}) \mid \psi} \\
        & = \braket{\psi \mid \pi(e_{x,a}\otimes\textstyle \sum_{b \in A} f_{x,b})\mid \psi} + \braket{\psi \mid (\textstyle \sum_{b' \in A}\pi(e_{x,b'} \otimes f_{x,a})\mid \psi}\\ & \hspace{2in} - 2 \braket{\psi \mid \pi(e_{x,a} \otimes f_{x,a}) \mid \psi} \\
        & = \braket{\psi \mid \pi(e_{x,a} \otimes f_{x,a})\mid \psi} + \braket{\psi \mid \pi(e_{x,a} \otimes f_{x,a})\mid \psi}\\ & \hspace{2in}- 2 \braket{\psi \mid \pi(e_{x,a} \otimes f_{x,a}) \mid \psi} = 0,
    \end{align*}which establishes \textit{i.}. 
    Note that this gives us $\pi((e_{x_1, a_1}\otimes 1) \cdots (e_{x_n, a_n}\otimes 1))\ket{\psi} = \pi((1 \otimes f_{x_n, a_n})\cdots (1 \otimes f_{x_1, a_1})) \ket{\psi}$. Hence, then for any $a \in \mathcal{A}_{XA}^{POVM}$, one has $\pi(a \otimes 1) \ket{\psi} = \pi(1 \otimes\op(a)) \ket{\psi}$. Hence, 
    \begin{align*}
        \rho(a \otimes b) & = \braket{ \psi \mid \pi(a \otimes 1) \pi(1 \otimes b) \mid \psi} \\
        & = \braket{\psi \mid \pi(a \op(b) \otimes 1) \mid \psi } = \rho(a\op(b) \otimes 1)
    \end{align*}
    The rest of the proof of \textit{ii.} can be completed similarly. 
\end{proof}

One thing we should note from the proof \cref{prop:state-correlation-correspondence-qc-synch} is that if $\rho$ is a state on $\mathcal{A}_{XA}^{POVM} \otimes_{\max} \mathcal{A}_{XA}^{POVM}$, then $\rho_A, \rho_B: \mathcal{A}_{XA}^{POVM} \to \mathbb{C}$ are tracial states defined by $\rho_A(a) = \rho(a \otimes 1)$ and $\rho_B(b) = \rho(1\otimes b)$ and one has $\rho_A(a) = \rho_B(\op(a))$. In a similar fashion to \cref{prop:state-correlation-correspondence-qc-synch}, one can also show the following:

\begin{prop}\label{prop:state-trace-corr}
    Let $\rho: \mathcal{A}_{XA}^{POVM} \otimes_{\max} \mathcal{A}_{XA}^{POVM} \to \mathbb{C}$ be a state such that the correlation $(\rho(e_{x,a} \otimes e_{y,b}))$ is synchronous, then the following hold true:
    \begin{enumerate}[label = \roman*.]
        \item $\pi(e_{x,a}^2) = \pi(e_{x,a})$ and $\pi(f_{y,b}^2) = \pi(f_{y,b})$ for all $x, y \in X$ and $a,b \in A$. 
        \item If $\rho(e_{x,a} \otimes e_{y,b}) = 0$, then $\pi(e_{x,a})\pi(e_{y,b}) = \pi(f_{x,a})\pi(f_{y,b}) = 0$. 
    \end{enumerate}
\end{prop}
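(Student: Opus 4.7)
The plan is to reduce both parts to statements about the tracial state $\rho_A := \rho(\,\cdot\,\otimes 1)$ on the first copy of $\mathcal{A}_{XA}^{POVM}$ (and its mirror $\rho_B$), exploiting the identities recorded in \cref{prop:state-correlation-correspondence-qc-synch}. The key observation is that synchronicity forces $\rho_A(e_{x,a}) = \rho_A(e_{x,a}^2)$, so each generator $e_{x,a}$ is already ``projection-like'' when tested against $\rho_A$, and standard positivity manipulations then let us promote this to an operator identity in the GNS representation $\pi$ of $\rho$.

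For part (i), I would expand $\rho_A(e_{x,a}) = \sum_b \rho(e_{x,a} \otimes e_{x,b})$ using the POVM relation; synchronicity of the correlation kills every term with $b \neq a$, leaving $\rho_A(e_{x,a}) = \rho(e_{x,a} \otimes e_{x,a})$. Applying \cref{prop:state-correlation-correspondence-qc-synch}(ii) to the right-hand side yields $\rho(e_{x,a} \otimes e_{x,a}) = \rho_A(e_{x,a}^2)$. Writing $y := e_{x,a} - e_{x,a}^2 \geq 0$, the bound $y^2 \leq y$ (valid since $0 \leq y \leq 1$) together with $\rho_A(y) = 0$ gives $\rho_A(y^2) = 0$. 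Therefore $\norm{\pi(y \otimes 1)\ket{\psi}}^2 = \rho(y^2 \otimes 1) = 0$, so $\pi(y \otimes 1)\ket{\psi} = 0$. To globalise this, I would use \cref{prop:state-correlation-correspondence-qc-synch}(i) to rewrite the cyclic subspace of $\ket{\psi}$ as $\overline{\pi(1 \otimes \mathcal{A}_{XA}^{POVM})\ket{\psi}}$; since $\pi(y \otimes 1)$ commutes with every $\pi(1 \otimes d)$, it vanishes on this dense subspace and hence everywhere. The corresponding claim $\pi(f_{y,b}^2) = \pi(f_{y,b})$ is the symmetric argument with $\rho_B$.

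For part (ii), \cref{prop:state-correlation-correspondence-qc-synch}(ii) turns the hypothesis $\rho(e_{x,a} \otimes e_{y,b}) = 0$ into $\rho_A(e_{x,a} e_{y,b}) = 0$. I would then compute $\norm{\pi(e_{x,a} e_{y,b})\ket{\psi}}^2 = \rho_A(e_{y,b} e_{x,a}^2 e_{y,b}) = \rho_A(e_{x,a}^2 e_{y,b}^2)$ using traciality. Cauchy--Schwarz for $\rho_A$ applied to $y = e_{x,a} - e_{x,a}^2$ (whose square has zero mass, by the proof of part (i)) yields $\rho_A(y c) = 0$ for every $c$, hence $\rho_A(e_{x,a}^2 c) = \rho_A(e_{x,a} c)$. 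Repeating the same substitution with $e_{y,b}$ collapses $\rho_A(e_{x,a}^2 e_{y,b}^2)$ down to $\rho_A(e_{x,a} e_{y,b}) = 0$. The density-and-commutation argument from part (i) then lifts $\pi(e_{x,a} e_{y,b})\ket{\psi} = 0$ to $\pi(e_{x,a})\pi(e_{y,b}) = 0$ as operators; the identity $\pi(f_{x,a})\pi(f_{y,b}) = 0$ follows by the symmetric computation on the second tensor factor.

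The main technical subtlety in both parts is promoting ``$\pi(z)\ket{\psi} = 0$'' to ``$\pi(z) = 0$'': the cyclic vector $\ket{\psi}$ is generally not separating, so this step is not automatic. What makes it work here is that the elements $z$ we encounter always sit inside one of the two tensor factors, hence commute with the image of the opposite factor, and \cref{prop:state-correlation-correspondence-qc-synch}(i) guarantees that the orbit of $\ket{\psi}$ under the second factor alone is already dense in the representation space; this is what lets commutation do the job.
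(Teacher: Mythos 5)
Your proof is correct. The paper does not supply its own argument, stating only that the result follows ``in a similar fashion'' to \cref{prop:state-correlation-correspondence-qc-synch}; your write-up makes precise the one additional step that proof needs here, namely that $\mathcal{H}_\rho = \overline{\pi(1 \otimes \mathcal{A}_{XA}^{POVM})\ket{\psi}}$ (a consequence of \cref{prop:state-correlation-correspondence-qc-synch}(i)), which together with the commutation of the two tensor factors upgrades the vector identities $\pi(z\otimes 1)\ket{\psi}=0$ obtained from the synchronicity and trace computations into the operator identities that the proposition (and its use in \cref{thm:state-trace-corr}) actually requires.
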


Hence, we have the following correspondence between states on $\mathcal{A}_{XA}^{POVM} \otimes_{\max} \mathcal{A}_{XA}^{POVM}$ implementing perfect strategies for $\mathscr{G}$ and tracial states on $C^*(\mathscr{G})$, for any synchronous nonlocal game $\mathscr{G}$:

\begin{theorem}\label{thm:state-trace-corr}
    Let $\mathscr{G} = (X, A, V)$ be a synchronous nonlocal game. Then, there is a bijective correspondence between tracial states on $C^*(\mathscr{G})$ and states $\rho$ on $\mathcal{A}_{XA}^{POVM} \otimes_{\max} \mathcal{A}_{XA}^{POVM}$ such that $\rho(e_{x,a} \otimes e_{y,b}) = 0$, whenever $V(a,b \mid x,y) = 0$. Moreover, this correspondence restricts to the set of amenable traces on $C^*(\mathscr{G})$ and states $\rho$ on $\mathcal{A}_{XA}^{POVM} \otimes_{\min} \mathcal{A}_{XA}^{POVM}$; and finite-dimensional traces on $C^*(\mathscr{G})$ and finite-dimensional states on $\mathcal{A}_{XA}^{POVM} \otimes_{\min} \mathcal{A}_{XA}^{POVM}$.
\end{theorem}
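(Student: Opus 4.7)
\medskip

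The plan is to establish the bijection by running the GNS construction in both directions, then checking that the refinements for amenability and finite-dimensionality correspond. The key structural observation is that $C^*(\mathscr{G})^{\op} \cong C^*(\mathscr{G})$ canonically via $e_{x,a} \mapsto e_{x,a}$: the relations $\sum_a e_{x,a} = 1$ and $e_{x,a} = e_{x,a}^2 = e_{x,a}^*$ are preserved under reversal, and $e_{x,a}e_{y,b} = 0$ is equivalent to $e_{y,b}e_{x,a} = 0$ by taking adjoints of projections. This lets me move freely between representations of either copy of $\mathcal{A}_{XA}^{POVM}$ in the tensor product.

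For the direction from traces to states, let $\tau$ be a tracial state on $C^*(\mathscr{G})$ with GNS data $(\pi_\tau, \mathcal{H}_\tau, \ket{\eta_\tau})$. Traciality produces a commuting right action, i.e.\ a $*$-homomorphism $\pi_\tau^{\op}: C^*(\mathscr{G})^{\op} \to \mathcal{B}(\mathcal{H}_\tau)$ determined by $\pi_\tau^{\op}(\op(a))\pi_\tau(b)\ket{\eta_\tau} = \pi_\tau(ba)\ket{\eta_\tau}$. Composing each factor with the quotient $\mathcal{A}_{XA}^{POVM} \twoheadrightarrow C^*(\mathscr{G})$ (and the identification $C^*(\mathscr{G})^{\op} \cong C^*(\mathscr{G})$) and invoking the universal property of the maximal tensor product yields a $*$-representation $\pi$ of $\mathcal{A}_{XA}^{POVM} \otimes_{\max} \mathcal{A}_{XA}^{POVM}$ on $\mathcal{H}_\tau$, from which $\rho(a) \coloneqq \braket{\eta_\tau | \pi(a) | \eta_\tau}$ defines the desired state. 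The annihilation condition $\rho(e_{x,a} \otimes e_{y,b}) = 0$ whenever $V(a,b \mid x,y) = 0$ is immediate from the defining relations of $C^*(\mathscr{G})$.

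For the reverse direction, a state $\rho$ on $\mathcal{A}_{XA}^{POVM} \otimes_{\max} \mathcal{A}_{XA}^{POVM}$ satisfying the annihilation condition induces a synchronous correlation, so Proposition~\ref{prop:state-correlation-correspondence-qc-synch} guarantees that $\rho_A(a) \coloneqq \rho(a \otimes 1)$ is a tracial state on $\mathcal{A}_{XA}^{POVM}$, and Proposition~\ref{prop:state-trace-corr} ensures that in the GNS representation $\pi_{\rho_A}$ each $\pi_{\rho_A}(e_{x,a})$ is a projection and $\pi_{\rho_A}(e_{x,a})\pi_{\rho_A}(e_{y,b}) = 0$ whenever $V(a,b \mid x,y) = 0$. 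By the universal property of $C^*(\mathscr{G})$, the representation $\pi_{\rho_A}$ descends to $C^*(\mathscr{G})$, and the pullback $\tau(c) \coloneqq \braket{\eta_{\rho_A} | \pi_{\rho_A}(c) | \eta_{\rho_A}}$ defines a tracial state on $C^*(\mathscr{G})$. That the two constructions are mutually inverse then follows from the uniqueness (up to unitary equivalence) of the GNS representation together with \cref{prop:state-correlation-correspondence-qc-synch}\emph{ii.}, which lets me reconstruct $\rho$ from $\rho_A$ alone via $\rho(a \otimes b) = \rho_A(a\op(b))$.

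The refinements are handled by tracking which norm the combined left--right representation is continuous in. A tracial state $\tau$ on $C^*(\mathscr{G})$ is amenable exactly when the pairing $a \otimes b \mapsto \tau(a\op(b))$ extends to a state on $C^*(\mathscr{G}) \otimes_{\min} C^*(\mathscr{G})^{\op}$, which under the above correspondence is precisely the condition that the associated $\rho$ factors through $\mathcal{A}_{XA}^{POVM} \otimes_{\min} \mathcal{A}_{XA}^{POVM}$. Similarly, $\tau$ is finite-dimensional iff the GNS representation is finite-dimensional iff the image of $\pi_{\rho_A}$ is finite-dimensional iff the combined representation on $\mathcal{H}_{\rho_A}$ is finite-dimensional, which gives the bijection between finite-dimensional traces and finite-dimensional states. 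The main obstacle I anticipate is a careful check that the opposite-algebra identification is compatible with all three tensor-norm restrictions simultaneously, and that the GNS representation of $\rho$ on $\mathcal{A}_{XA}^{POVM}\otimes_{\max} \mathcal{A}_{XA}^{POVM}$ agrees with the doubled GNS representation of $\rho_A$ on $\mathcal{H}_{\rho_A}$; this should reduce to a standard cyclicity argument using that $\pi(e_{x,a} \otimes 1)\ket{\eta_\rho} = \pi(1 \otimes e_{x,a})\ket{\eta_\rho}$ from \cref{prop:state-correlation-correspondence-qc-synch}\emph{i.}
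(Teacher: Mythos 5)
Your proposal is correct and follows essentially the same route as the paper: pass from a trace $\tau$ to a state $\rho_\tau$ via the GNS representation and a commuting right action, pass back by restricting the GNS representation of $\rho$ to the first tensor factor and descending through the universal property of $C^*(\mathscr{G})$, and handle the amenable/finite-dimensional refinements by the standard characterization of amenability as min-continuity of $a\otimes b\mapsto\tau(a\op(b))$. The only presentational differences are that you construct the right action directly from traciality (and make the self-opposite isomorphism $C^*(\mathscr{G})\cong C^*(\mathscr{G})^{\op}$ explicit), whereas the paper invokes the construction from \cite{paulsen_estimating_2016}*{Theorem 5.1}; both amount to the same step.
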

\begin{proof}
    Let $\rho$ be a state on $\mathcal{A}_{XA}^{POVM} \otimes_{\max} \mathcal{A}_{XA}^{POVM}$ such that $\rho(e_{x,a} \otimes e_{y,b}) = 0$, whenever $V(a,b \mid x,y) = 0$, and consider the GNS representation $(\pi,\ket{\psi})$, where $\ket{\psi} \in \mathcal{H}$. Then, since $(\rho(e_{x,a} \otimes e_{y,b}))$ is a synchronous correlation. By \cref{prop:state-trace-corr} we have that $\pi_A: C^*(\mathscr{G}) \to \mathcal{B(H})$ given by $e_{x,a} \mapsto \pi(e_{x,a})$ is a $\ast$-homomorphism. 
    
    Furthermore, since $(\pi,\ket{\psi})$ is a cyclic representation, it follows from \cref{prop:state-correlation-correspondence-qc-synch} that $(\pi_A, \ket{\psi})$ is a cyclic representation. In particular, it is (unitarily equivalent to) the GNS-representation of the tracial state $\tau_A: \mathcal{A}_{XA}^{POVM} \to \mathbb{C}$ defined by $\tau_A(a) =  \rho(\pi_A(a) \otimes 1)$. Since $\rho(a \otimes b) = \rho(a\op(b) \otimes 1)$, it follows that $\rho \mapsto \rho_A$ is an injective map from the set of states $\rho$ on $\mathcal{A}_{XA}^{POVM} \otimes_{\max} \mathcal{A}_{XA}^{POVM}$ such that $\rho(e_{x,a} \otimes e_{y,b}) = 0$ whenever $V(a,b \mid x,y) = 0$ to the set of tracial states on $C^*(\mathscr{G})$. 

    On the other hand, if $\tau$ is a tracial state on $C^*(\mathscr{G})$, let $(\pi, \ket{\psi})$ be its GNS representation. Let's denote $\pi(e_{x,a})$ by $E_{x,a}$. It follows from the proof of \cite{paulsen_estimating_2016}*{Theorem 5.1} that there are operators $F_{x,a} \in \mathcal{B(H})$ such that $E_{x,a}$ and $F_{y,b}$ commute for all $x,y \in X$ and $a,b \in A$ and $\sum_{b\in A} F_{y,b} = 1$ for all $y \in X$. Hence, if we define $\pi': \mathcal{A}_{XA}^{POVM} \to \mathcal{B(H})$ given by $\pi'(f_{x,a}) = F_{x,a}$, we can construct a representation $\pi \otimes \pi': \mathcal{A}_{XA}^{POVM} \otimes_{\max} \mathcal{A}_{XA}^{POVM} \to \mathcal{B(H})$, and a state $\rho_{\tau}$ given by $a \to \braket{\psi\mid \pi \otimes \pi' (a) \mid\psi}$. In particular, one has $\rho_{\tau}(e_{x,a} \otimes f_{y,b}) = \braket{\psi\mid E_{x,a}F_{y,b}\mid\psi} = \braket{\psi \mid E_{x,a}E_{y,b} \mid \psi} = \tau(e_{x,a}f_{y,b})$. Hence, $\tau \mapsto \rho_{\tau}$ is a well-defined map from traceial states on $C^*(\mathscr{G})$ to the states $\rho$ on $\mathcal{A}_{XA}^{POVM} \otimes_{\max} \mathcal{A}_{XA}^{POVM}$ such that $\rho(e_{x,a} \otimes e_{y,b}) = 0$ whenever $V(a,b \mid x,y) = 0$. 

     We can similarly show that $\rho_{\tau}(a \otimes b) = \tau(a\op(b))$, from which it is easy to see that these maps are inverses of each other, which establishes the bijective correspondence. 

     Now, if $\tau$ is a trace on $C^*(\mathscr{G})$, then it follows from \cite{brown88c}*{Theorem 6.2.7} that the restriction of $\rho_{\tau}$ to $\mathcal{A}_{XA}^{POVM} \otimes_{\min} \mathcal{A}_{XA}^{POVM}$ is continuous if and only if $\tau$ is amenable. Moreover, all finite-dimensional traces are amenable. This finishes the proof.  
\end{proof}

\subsection{Nonlocal Games that Robustly Self-Test a Strategy}

As an easy consequence of \cref{prop:robust-self-test-implies-abstract-self-test}, we can write the following for any nonlocal game: 

\begin{cor}
    Let $\mathscr{G} = (X, Y, A, B, V)$ be any nonlocal game. Then, if $\mathscr{G}$ is a robust self-test, then there is a unique state $\rho$ on $\mathcal{A}_{XA}^{POVM} \otimes_{\min} \mathcal{A}_{YB}^{POVM}$ such that $\rho(e_{x,a} \otimes e_{y,b}) = 0$, whenever $V(a,b \mid x,y)= 0$. 
\end{cor}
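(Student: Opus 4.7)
The plan is to reduce the corollary to \cref{prop:robust-self-test-implies-abstract-self-test}. Suppose $\mathscr{G}$ is a robust self-test for a finite-dimensional strategy $\widetilde{S}$, let $\widetilde{p}$ be the correlation induced by $\widetilde{S}$, and let $\widetilde{\rho}$ be the corresponding finite-dimensional state on $\mathcal{A}_{XA}^{POVM} \otimes_{\min} \mathcal{A}_{YB}^{POVM}$. Since $\widetilde{S}$ is a perfect strategy for $\mathscr{G}$, $\widetilde{p}$ vanishes on forbidden outcomes, and hence so does $\widetilde{\rho}$ on $e_{x,a} \otimes e_{y,b}$ for $V(a,b\mid x,y) = 0$. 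Thus $\widetilde{\rho}$ is one state with the desired property, and it remains to prove uniqueness.

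First I would observe that the game being a robust self-test for $\widetilde{S}$ implies that the correlation $\widetilde{p}$ itself robustly self-tests $\widetilde{S}$. Indeed, given $\epsilon > 0$, choose $\delta > 0$ from the game robust self-test condition. If $p \in C_q(X,Y,A,B)$ satisfies $\|p - \widetilde{p}\|_1 \leq \delta$, then since $\widetilde{p}$ vanishes on forbidden outcomes, $\sum_{V(a,b\mid x,y) = 0} p(a,b \mid x,y) \leq \|p - \widetilde{p}\|_1 \leq \delta$, so the robust self-test condition applies to any model $S$ of $p$ and yields that $\widetilde{S}$ is a local-$\epsilon$ dilation of $S$. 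Applying \cref{prop:robust-self-test-implies-abstract-self-test}, $\widetilde{\rho}$ is the unique state on $\mathcal{A}_{XA}^{POVM} \otimes_{\min} \mathcal{A}_{YB}^{POVM}$ implementing $\widetilde{p}$.

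Next, let $\rho$ be an arbitrary state on $\mathcal{A}_{XA}^{POVM} \otimes_{\min} \mathcal{A}_{YB}^{POVM}$ vanishing on all $e_{x,a} \otimes e_{y,b}$ with $V(a,b\mid x,y) = 0$, and let $p$ be its induced correlation. By the non-synchronous analogue of \cref{prop:state-correlation-correspondence-qa}, $p \in C_{qa}(X,Y,A,B)$ and there are finite-dimensional quantum correlations $p_k$ with models $S_k$ and induced finite-dimensional states $\rho_k$ such that $\rho_k \to \rho$ in the weak-$*$ topology and $p_k \to p$. Since $p$ is perfect, $\sum_{V(a,b\mid x,y)=0} p_k(a,b\mid x,y) \to 0$, so for any fixed $\epsilon > 0$ the game robust self-test condition gives that $\widetilde{S}$ is a local-$\epsilon$ dilation of $S_k$ for all sufficiently large $k$. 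Comparing the squared norms of the two sides of \cref{eq:robust-self-test-measurement-and-state} shows that $\abs{p_k(a,b\mid x,y) - \widetilde{p}(a,b\mid x,y)} = O(\epsilon)$ uniformly in $x,y,a,b$, so that $p_k \to \widetilde{p}$ and hence $p = \widetilde{p}$. Therefore $\rho$ implements $\widetilde{p}$, and the uniqueness established in the previous paragraph forces $\rho = \widetilde{\rho}$.

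I expect the main obstacle to be the convergence bookkeeping in the last paragraph, specifically verifying that local-$\epsilon$ dilation of a model $S_k$ by $\widetilde{S}$ implies $O(\epsilon)$-closeness of the correlations. This is a direct calculation from the definition, essentially the trick already used to derive \cref{eq:useful-eq-2} in the proof of \cref{thm:main-theorem-2}, but the implicit constant must be controlled uniformly in the four indices $x,y,a,b$ to make the limit argument work.
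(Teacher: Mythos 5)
Your proof is correct and supplies the details the paper leaves implicit; the paper merely asserts the corollary as an ``easy consequence'' of \cref{prop:robust-self-test-implies-abstract-self-test} without spelling them out. Your overall route matches what the paper must intend: (a) a game robust self-test for $\widetilde{S}$ implies that the induced correlation $\widetilde{p}$ robustly self-tests $\widetilde{S}$ (your $\ell_1$-to-game-value reduction is exactly right since $\widetilde{p}$ vanishes on forbidden outcomes); (b) Proposition \ref{prop:robust-self-test-implies-abstract-self-test} gives uniqueness among states implementing $\widetilde{p}$; and (c) a separate argument that any state $\rho$ on $\mathcal{A}_{XA}^{POVM}\otimes_{\min}\mathcal{A}_{YB}^{POVM}$ vanishing on forbidden outcomes actually implements $\widetilde{p}$, obtained via the approximating finite-dimensional states from \cref{prop:state-correlation-correspondence-qa} and the game robustness hypothesis. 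Step (c) is genuinely necessary and is not literally contained in Proposition \ref{prop:robust-self-test-implies-abstract-self-test}, so you were right to add it. (Small aside: \cref{prop:state-correlation-correspondence-qa} is already stated for general $X,Y,A,B$, so no ``non-synchronous analogue'' is needed.)

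One step deserves sharpening. You invoke ``comparing the squared norms'' of the two sides of \cref{eq:robust-self-test-measurement-and-state} to conclude $\abs{p_k - \widetilde{p}} = O(\epsilon)$; this identity $\norm{(E_{x,a}\otimes F_{y,b})\ket{\psi}}^2 = p(a,b\mid x,y)$ holds only when the measurement operators are projections, and the finite-dimensional models $S_k$ produced by \cref{prop:state-correlation-correspondence-qa} (via compression) need not be projective. The conclusion is still correct: write $p_k(a,b\mid x,y) = \braket{V\psi\mid V(E_{x,a}\otimes F_{y,b})\psi}$ and $\widetilde{p}(a,b\mid x,y) = \braket{\widetilde{\psi}\psi_{\aux}\mid(\widetilde{E}_{x,a}\otimes I)\otimes(\widetilde{F}_{y,b}\otimes I)\widetilde{\psi}\psi_{\aux}}$, and apply Cauchy--Schwarz twice using \cref{eq:robust-self-test-state} and \cref{eq:robust-self-test-measurement-and-state}; since all vectors involved have norm at most $1$, this gives $\abs{p_k(a,b\mid x,y) - \widetilde{p}(a,b\mid x,y)} \leq 2\epsilon$ uniformly, with no projectivity assumption. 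With that replacement the argument is complete. You could also bypass step (b) entirely by noting that the local-$\epsilon_k$ dilation (with $\epsilon_k\to 0$) and the one-generator case of \cref{lem:local-eps-dil-implies-closeness-of-state} already force $\rho_k\to\widetilde{\rho}$ directly, giving $\rho=\widetilde{\rho}$ without first identifying $p$ with $\widetilde{p}$; but your route through Proposition \ref{prop:robust-self-test-implies-abstract-self-test} is faithful to how the paper frames the corollary.
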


For synchronous games, as a consequence of \cref{thm:main-result}, \cref{thm:main-theorem-1} and \cref{thm:state-trace-corr}, we can write the following: 

\begin{theorem}\label{thm:synch-game-self-test-cond}
    Let $\mathscr{G}$ be a synchronous game. Then, the following hold true: 
    \begin{enumerate}[label = \roman*.]
        \item $\mathscr{G}$ is a self-test if and only if $C^*(\mathscr{G})$ admits a unique finite-dimensional trace. 
        \item $\mathscr{G}$ is a robust self-test if and only if $C^*(\mathscr{G})$ admits a unique amenable trace.
        \item $\mathscr{G}$ is a commuting operator self-test if and only if $C^*(\mathscr{G})$ admits a unique trace. 
    \end{enumerate}
    In particular, if $\mathscr{G}$ is a commuting operator self-test for a finite-dimensional strategy, then it is a robust self-test.
\end{theorem}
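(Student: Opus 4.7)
The plan is to reduce each equivalence in the theorem to the bijective correspondence of \cref{thm:state-trace-corr}, which identifies traces on $C^*(\mathscr{G})$ of a given flavor (finite-dimensional, amenable, or arbitrary) with states of the corresponding flavor on $\mathcal{A}_{XA}^{POVM}\otimes\mathcal{A}_{XA}^{POVM}$ (min, min, or max, respectively) that vanish on losing tuples. A preliminary observation I would record is that uniqueness of the trace translates into the existence of a unique perfect correlation $\widetilde{p}$ for $\mathscr{G}$ in the corresponding correlation class, and that such $\widetilde{p}$ is automatically an extreme point of $C_q(X,X,A,A)$: a convex decomposition $\widetilde{p}=\lambda p_1+(1-\lambda)p_2$ with $p_i\in C_q(X,X,A,A)$ forces each $p_i$ to vanish on losing tuples (since $\widetilde{p}$ does and the $p_i$ are nonnegative), so each corresponds to a finite-dimensional state on the min tensor product, and uniqueness then forces $p_1=p_2=\widetilde{p}$.

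With extremality in hand, items (i) and (iii) follow by direct application of the correlation-level results. For (i), uniqueness of the finite-dimensional trace yields a unique finite-dimensional state implementing $\widetilde{p}$, i.e.\ condition~\ref{self-test-condition}, so by \cref{thm:main-theorem-1} the correlation $\widetilde{p}$ self-tests a PME strategy $\widetilde{S}$; since every perfect strategy for $\mathscr{G}$ produces $\widetilde{p}$, $\widetilde{S}$ locally dilates each such strategy and self-tests the game. Conversely, if $\mathscr{G}$ is a self-test for $\widetilde{S}$ then all perfect strategies produce the correlation $\widetilde{p}$ of $\widetilde{S}$, \cref{lem:local-dilation-implies-equal-state} forces their induced states to agree, and the bijection returns uniqueness of the finite-dimensional trace. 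Item (iii) is essentially identical, with states on the max tensor product replacing finite-dimensional states; we interpret a commuting-operator self-test for the game in the natural way, as in~\cite{paddock_operator-algebraic_2023}, as uniqueness of the state on $\mathcal{A}_{XA}^{POVM}\otimes_{\max}\mathcal{A}_{XA}^{POVM}$ implementing the (necessarily unique) perfect commuting-operator correlation.

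Part (ii) is the heart of the argument. The forward direction applies \cref{prop:robust-self-test-implies-abstract-self-test} to pass from robust self-testing of the correlation $\widetilde{p}$ of $\widetilde{S}$ to uniqueness of the state on the min tensor product implementing $\widetilde{p}$, which by \cref{thm:state-trace-corr} is uniqueness of an amenable trace. For the reverse, uniqueness of the amenable trace gives uniqueness of the state on the min tensor product implementing a unique perfect correlation $\widetilde{p}$; together with extremality, \cref{thm:main-result} then provides a PME $\widetilde{S}$ that $\widetilde{p}$ robustly self-tests. The main obstacle is promoting correlation-level robustness to game-level robustness: given a strategy with correlation $p\in C_q(X,X,A,A)$ of small losing weight $\sum_{V(a,b\mid x,y)=0}p(a,b\mid x,y)\leq\delta$, one must control $\norm{p-\widetilde{p}}_1$. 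I plan to do this by contradiction and compactness: if it failed, a sequence $p^{[k]}\in C_q(X,X,A,A)$ with losing weight tending to zero and $\norm{p^{[k]}-\widetilde{p}}_1\geq\epsilon$ would have a limit point $p^*\in C_{qa}(X,X,A,A)$ (which is closed) vanishing on losing tuples, hence a perfect qa correlation coming from an amenable trace by \cref{thm:state-trace-corr}; uniqueness of that trace forces $p^*=\widetilde{p}$, a contradiction. The ``In particular'' clause then drops out: a commuting-operator self-test for a finite-dimensional strategy makes $C^*(\mathscr{G})$ admit a finite-dimensional (hence amenable) trace, and uniqueness of the trace from (iii) trivially implies uniqueness of the amenable trace, so (ii) applies and $\mathscr{G}$ is a robust self-test.
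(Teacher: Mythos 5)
Your overall architecture is the right one and matches the deduction the paper has in mind: translate uniqueness of traces into uniqueness of states vanishing on losing tuples via \cref{thm:state-trace-corr}, establish extremality of the unique perfect correlation $\widetilde{p}$, and then invoke the correlation-level results \cref{thm:main-theorem-1}, \cref{thm:main-result}, and \cref{prop:robust-self-test-implies-abstract-self-test}. The preliminary extremality observation is important and correctly argued, and the compactness argument you give for promoting correlation-level robustness to game-level robustness in the reverse direction of (ii) is exactly the piece of glue that the cited theorems do not supply on their own.

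There is, however, a gap in the forward direction of (ii). You write that \cref{prop:robust-self-test-implies-abstract-self-test} gives ``uniqueness of the state on the min tensor product implementing $\widetilde{p}$, which by \cref{thm:state-trace-corr} is uniqueness of an amenable trace.'' But the bijection in \cref{thm:state-trace-corr} is between amenable traces and states on $\mathcal{A}_{XA}^{POVM}\otimes_{\min}\mathcal{A}_{XA}^{POVM}$ \emph{vanishing on losing tuples}, not states implementing the particular correlation $\widetilde{p}$. A priori a state vanishing on losing tuples could implement some other perfect qa correlation $p'\neq\widetilde{p}$, and \cref{prop:robust-self-test-implies-abstract-self-test} is silent about such states. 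To close this you need the additional step that game-level robustness forces every perfect qa correlation to equal $\widetilde{p}$: any perfect $p'\in C_{qa}^s(X,A)$ is a limit of $p_k\in C_q$ whose losing weight tends to $0$, the robustness of the game forces $\widetilde{S}$ to be a local $\epsilon_k$-dilation of the strategies for $p_k$ with $\epsilon_k\to 0$, and \cref{lem:local-eps-dil-implies-closeness-of-state} (or already the estimate in \cref{lem:borr-lem-2}) then gives $\|p_k-\widetilde{p}\|_1\to 0$, so $p'=\widetilde{p}$. This is the same flavor of approximation argument you already deploy for the reverse direction; it just needs to be deployed here as well. Once it is, the forward direction is complete. (The analogous point for (iii) is much milder, since the most natural definition of a commuting-operator game self-test is precisely ``unique state on $\mathcal{A}_{XA}^{POVM}\otimes_{\max}\mathcal{A}_{XA}^{POVM}$ vanishing on losing tuples,'' which makes (iii) essentially a restatement of \cref{thm:state-trace-corr}.)
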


As a consequence of the reduction from LCS games to synchronous games, we can also give some sufficient conditions for an LCS game to be a robust self-test. First, we state the following result:

\begin{prop}[\cite{lupini_perfect_2020}*{Proposition 4.10}, \cite{goldberg2021synchronous}*{Theorem 4.1}]
    Let $\mathscr{S}$ be a linear constraint system game modulo $d$ with a corresponding synchronous nonlocal game $\mathscr{G}_\mathscr{S}$ and solution group $\Gamma(\mathscr{S})$. The $C^*$-algebra of the game $\mathscr{G}_\mathscr{S}$, $C^*(\mathscr{G}_S)$ is $\ast$-isomorphic to $C^*(\Gamma(\mathscr{S}))/\braket{J - \omega 1}$, where $C^*(\Gamma(\mathscr{S}))$ is the full group $C^*$-algebra of $\Gamma(\mathscr{S})$ and $\omega$ is a primitive $d$th root of unity.
\end{prop}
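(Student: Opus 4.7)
The plan is to exhibit mutually inverse $\ast$-homomorphisms between the two $C^*$-algebras by invoking their universal properties. First I would unpack both presentations. The algebra $C^*(\mathscr{G}_\mathscr{S})$ is generated by projections $\{e_{i,a}\}$ indexed by equations $i$ of the system $\mathscr{S}$ and assignments $a : S_i \to \mathbb{Z}/d\mathbb{Z}$ to the variables occurring in equation $i$, subject to the defining relations of the $C^*$-algebra of the synchronous game $\mathscr{G}_\mathscr{S}$. Unpacking the predicate $V$ of $\mathscr{G}_\mathscr{S}$ shows that (i) $e_{i,a} = 0$ whenever $a$ does not satisfy equation $i$, and (ii) the variable-marginal projections $p_{k,c} := \sum_{a:\, a_k = c} e_{i,a}$ are independent of the choice of equation $i$ containing $k$, by the consistency clauses of $V$. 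Dually, $C^*(\Gamma(\mathscr{S}))/\langle J - \omega 1\rangle$ is generated by unitaries $\{g_k\}$ of order $d$, with $g_k$ and $g_l$ commuting whenever $k$ and $l$ occur in a common equation, and with $\prod_k g_k^{a_{ik}} = \omega^{b_i}\cdot 1$ for each equation $i$.

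For the forward map, I would set $u_k := \sum_{c \in \mathbb{Z}/d\mathbb{Z}} \omega^c p_{k,c} \in C^*(\mathscr{G}_\mathscr{S})$ and check that $\{u_k\}$ satisfies all the solution-group relations together with $J \mapsto \omega \cdot 1$: the order-$d$ relation is immediate from orthogonality of $\{p_{k,c}\}_c$, commutativity of $u_k$ and $u_l$ for $k, l$ in a common equation follows from the common refinement $p_{k,c}p_{l,c'} = \sum_{a:\, a_k = c,\, a_l = c'} e_{i,a}$, and the constraint relation reduces to the fact that every $a$ appearing in the sum defining $p_{k, a_k}$ for $k \in S_i$ must already satisfy $\sum_k a_{ik}a_k \equiv b_i \pmod{d}$. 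The universal property of $C^*(\Gamma(\mathscr{S}))$ then yields a $\ast$-homomorphism $\phi$ which descends to $\bar\phi : C^*(\Gamma(\mathscr{S}))/\langle J - \omega 1\rangle \to C^*(\mathscr{G}_\mathscr{S})$.

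In the reverse direction, I would use the spectral decomposition $g_k = \sum_c \omega^c q_{k,c}$, where $q_{k,c}$ is the spectral projection for eigenvalue $\omega^c$, and define $\psi(e_{i,a}) := \prod_{k \in S_i} q_{k, a_k}$; this is well-defined because $\{g_k : k \in S_i\}$ pairwise commute. Orthogonality of spectral projections and joint commutativity immediately give the projection, sum-to-one, and synchronicity relations. The universal property of $C^*(\mathscr{G}_\mathscr{S})$ then produces $\psi$ going the other way, and a routine check on generators confirms $\bar\phi \circ \psi = \mathrm{id}$ and $\psi \circ \bar\phi = \mathrm{id}$.

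The main obstacle is showing that $\psi(e_{i,a}) = 0$ whenever $a$ violates equation $i$, which is precisely the content that converts a group relation into a game relation. The key observation is that $\{g_k : k \in S_i\}$ generate an abelian $C^*$-subalgebra whose Gelfand spectrum embeds into $(\mathbb{Z}/d\mathbb{Z})^{S_i}$ via the joint eigenvalue tuple $(\omega^{a_k})_{k \in S_i}$. The relation $\prod_k g_k^{a_{ik}} = \omega^{b_i}\cdot 1$ forces this spectrum to be supported inside the affine hyperplane $\{a : \sum_k a_{ik}a_k = b_i \pmod{d}\}$, so functional calculus kills every joint spectral projection indexed by a violating tuple, yielding $\psi(e_{i,a}) = 0$ as required.
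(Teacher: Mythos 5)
The paper does not prove this proposition; it cites it verbatim as Proposition 4.10 of Lupini et al.\ and Theorem 4.1 of Goldberg. Your argument is a correct reconstruction of the standard proof in those references: pass between the two universal presentations via the discrete Fourier relationship $u_k = \sum_c \omega^c p_{k,c}$ (marginal projections of the game algebra giving order-$d$ unitaries) and its inverse via spectral projections $q_{k,c}$ of the generators $g_k$, then verify the defining relations on generators in each direction. The step you identify as the crux—deducing $\prod_{k\in S_i} q_{k,a_k} = 0$ for violating tuples $a$ from the group relation $\prod_k g_k^{a_{ik}} = \omega^{b_i}1$, via the joint spectrum of the commuting $\{g_k : k\in S_i\}$—is indeed where the group-theoretic and game-theoretic content meet, and your functional-calculus argument handles it correctly.
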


In particular, if $\Gamma$ has a unique representation where $\pi(J) \neq \omega1$ then $C^*(\mathscr{G}_S)$ has a unique representation. Moreover, if this representation is finite dimensional, then so is the representation of $C^*(\mathscr{G}_S)$. This establishes that $C^*(\mathscr{G}_S) \cong M_d(\mathbb{C})$ for some $d \in \mathbb{N}$, so that $C^*(\mathscr{G}_S)$ admits a unique tracial state that is also finite dimensional. Using similar arguments and combining them with \cref{thm:synch-game-self-test-cond}, we have the following result:

\begin{theorem}\label{thm:suff-cond-lcs-game}
    Let $\mathscr{S}$ be a linear constraint system game modulo $d$, and let $\omega$ be a primitive $d$th-root of unity. Then, the following hold true: 
    \begin{enumerate}[label =\roman*.]
        \item If $\Gamma(\mathscr{S})$ has a unique finite-dimensional irreducible representation where $J \neq \omega 1$, then $\mathscr{G}_\mathscr{S}$ is a self-test. 
        \item If $\Gamma(\mathscr{S})$ has a unique representation where $J \neq \omega 1$, that is also finite-dimensional, then $\mathscr{G}_\mathscr{S}$ is a robust self-test.
    \end{enumerate}
\end{theorem}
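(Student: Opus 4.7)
The plan is to reduce \cref{thm:suff-cond-lcs-game} to \cref{thm:synch-game-self-test-cond} via the isomorphism $C^*(\mathscr{G}_\mathscr{S}) \cong C^*(\Gamma(\mathscr{S}))/\langle J - \omega 1\rangle$ supplied by the preceding proposition. Under this identification, equivalence classes of representations of $C^*(\mathscr{G}_\mathscr{S})$ correspond bijectively to equivalence classes of representations of $\Gamma(\mathscr{S})$ on which $J$ acts as the scalar $\omega 1$, and the finite-dimensional (respectively amenable) tracial states on $C^*(\mathscr{G}_\mathscr{S})$ correspond, via the GNS construction, to such representations equipped with a cyclic tracial vector.

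For part \emph{i.}, the key observation I would use is that $J$ is central in $\Gamma(\mathscr{S})$, so by Schur's lemma it acts as a scalar on every finite-dimensional irreducible representation. Consequently, any finite-dimensional representation of $\Gamma(\mathscr{S})$ on which $J$ acts as $\omega 1$ decomposes as a direct sum of finite-dimensional irreducibles on each of which $J$ still acts as $\omega 1$. The uniqueness hypothesis thus forces every finite-dimensional representation of $C^*(\mathscr{G}_\mathscr{S})$ to factor through a single irreducible $\sigma \colon \Gamma(\mathscr{S}) \to M_d(\mathbb{C})$, and because $M_d(\mathbb{C})$ carries a unique tracial state, pulling back along $\sigma$ yields the unique finite-dimensional tracial state on $C^*(\mathscr{G}_\mathscr{S})$. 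Invoking \cref{thm:synch-game-self-test-cond}\emph{i.} then completes the argument.

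For part \emph{ii.}, the strengthened hypothesis says that $C^*(\Gamma(\mathscr{S}))/\langle J - \omega 1\rangle$ has a unique (irreducible) representation up to unitary equivalence, and that this representation is finite-dimensional. Combined with the Schur decomposition from the previous paragraph, this forces $C^*(\mathscr{G}_\mathscr{S})$ to be simple and finite-dimensional, hence isomorphic to $M_d(\mathbb{C})$. A matrix algebra has a unique tracial state, which is automatically amenable (and finite-dimensional), so \cref{thm:synch-game-self-test-cond}\emph{ii.} gives that $\mathscr{G}_\mathscr{S}$ is a robust self-test.

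The only non-routine step---and hence the main conceptual obstacle---is the passage from ``unique irreducible representation of $\Gamma(\mathscr{S})$ with $J$ acting as the required scalar'' to ``unique finite-dimensional (resp.\ amenable) trace on $C^*(\mathscr{G}_\mathscr{S})$.'' This rests on two ingredients: the centrality of $J$ in the solution group together with Schur's lemma, and the uniqueness of the tracial state on a full matrix algebra. Everything else is bookkeeping with the state--representation correspondences already established in this section.
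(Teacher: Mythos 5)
Your proof is correct and follows essentially the same route as the paper: transfer the uniqueness hypothesis on representations of $\Gamma(\mathscr{S})$ across the isomorphism $C^*(\mathscr{G}_\mathscr{S}) \cong C^*(\Gamma(\mathscr{S}))/\langle J - \omega 1\rangle$ to obtain uniqueness of the finite-dimensional (resp.\ amenable) tracial state on $C^*(\mathscr{G}_\mathscr{S})$, then invoke \cref{thm:synch-game-self-test-cond}, with your appeal to centrality of $J$ and Schur's lemma spelling out the step the paper leaves implicit. One minor point: you read the quotient as selecting representations with $J = \omega 1$, which is the interpretation forced by the ideal $\langle J - \omega 1\rangle$, whereas the theorem statement and surrounding text write ``$J \neq \omega 1$''---this appears to be a consistent slip in the paper that your argument silently corrects.
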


We note that these conditions are the same as the ones obtained for robust self-testing in \cite{coladangelo2019robustselftestinglinearconstraint}. However, our result comes at the loss of the explicit dependence of $\delta$ on $\epsilon$ in the definition of self-testing.

\subsection{Separating Robust Self-tests and Commuting Operator Self-Tests}

In this section, we show the existence of a synchronous nonlocal game such that is a robust self-test for finite dimensional strategies but it is not a self-test for commuting operator strategies. It follows from \cref{thm:synch-game-self-test-cond} that a synchronous game is a self-test is a robust self-test for finite dimensional strategies if and only if $C^*(\mathscr{G})$ admits a unique amenable trace. On the other hand, it is a robust self-test if and only if $C^*(\mathscr{G})$ admits a unique tracial state. Hence, we need to construct a nonlocal game, whose $C^*$-algebra admits a unique amenable trace, but admits multiple tracial states. 

For nonlocal games $\mathscr{G}_1$ and $\mathscr{G}_2$, the $\mathscr{G}_1 \vee \mathscr{G}_2$ game was introduced in \cite{mancinska_counterexamples_2023}, which was used to construct several counter examples in self-testing. The $\mathscr{G}_1 \vee \mathscr{G}_2$ allows the players to win if they can win at least one of $\mathscr{G}_1$ and $\mathscr{G}_2$. We shall show that if $\mathscr{G}_1$ is a synchronous game that is a robust self-test and $\mathscr{G}_2$ is a synchronous game that has a perfect qc-strategy but no perfect qa-strategy, then $\mathscr{G}_1 \vee \mathscr{G}_2$ is a robust self-test, but is not a commuting operator self-test. We define this game formally below (only for the case of synchronous games,  we refer the reader to \cite{mancinska_constant-sized_2024} for further details regarding the general case):

\begin{definition}
    Let $\mathscr{G}_1 = (X_1, A_1, V_1)$ and and $\mathscr{G}_2 = (X_2, A_2, V_2)$ be two synchronous games. The $\mathscr{G}_1 \vee \mathscr{G}_2 = (X, A, V)$ game is a synchronous game with $X = X_1 \times X_2$, $A = A_1 \sqcup A_2$ and $V$ defined as follows: 
    \begin{align}
        \begin{aligned}
            V(a,b \mid (x_1, y_1), (x_2, y_2)) = \begin{cases}
                V(a,b \mid x_1, y_1) & \text{ if } a,b \in A_1 \\
                V(a,b \mid x_2, y_2) & \text{ if } a,b \in A_2 \\
                0 & \text{ otherwise.} 
            \end{cases}
        \end{aligned}
    \end{align}
\end{definition}

In other words, the verifier sends a pair of questions $(x_1, x_2)$, $(y_1, y_2)$ from the games $\mathscr{G}_1, \mathscr{G}_2$ to both Alice and Bob respectively. They win the game if they choose to play the same game and answer correctly for the chosen game. It is easy to see that since $\mathscr{G}_1$ and $\mathscr{G}_2$ are synchronous games, so is $\mathscr{G}_1 \vee \mathscr{G}_2$. We now prove the existence of a synchronous game satisfying the hypothesis of \cref{thm:counter-example}. First, we shall need a few lemmas:

\begin{lemma}\label{lem:trace-on-or-game}
    Let $\mathscr{G}_1, \ \mathscr{G}_2$ be synchronous nonlocal games. Then, if there is a trace $\rho$ on $C^*(\mathscr{G}_1 \vee \mathscr{G}_2)$ such that $\rho(e_{(x_1,x_2),a}) > 0$ for some $a \in A_1$, then there is a tracial state $\rho_1$ on $C^*(\mathscr{G}_1)$. Moreover, if $\rho$ is amenable or finite-dimensional, then so is $\rho_1$.  
\end{lemma}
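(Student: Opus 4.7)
The plan is to isolate a central projection $p \in C^*(\mathscr{G}_1\vee\mathscr{G}_2)$ that corresponds to the ``play $\mathscr{G}_1$'' event, observe that the compression $pC^*(\mathscr{G}_1\vee\mathscr{G}_2)$ receives a canonical unital $\ast$-homomorphism from $C^*(\mathscr{G}_1)$, and pull $\rho$ back through it. The key observation is that, since $V(a,b\mid(x_1,x_2),(y_1,y_2))=0$ whenever $a\in A_1$ and $b\in A_2$, the defining relations of $C^*(\mathscr{G}_1\vee\mathscr{G}_2)$ force $e_{(x_1,x_2),a}\,e_{(y_1,y_2),b}=0$ for every such pair. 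Setting $p_{(x_1,x_2)}:=\sum_{a\in A_1}e_{(x_1,x_2),a}$, this orthogonality gives $p_{(x_1,x_2)}\le p_{(y_1,y_2)}$, and by symmetry these projections coincide. The common value $p$ absorbs the generators labelled by $A_1$ and annihilates those labelled by $A_2$, so it commutes with every $e_{(x_1,x_2),a}$; hence $p$ lies in the center of $C^*(\mathscr{G}_1\vee\mathscr{G}_2)$.

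Next, fixing any $x_2^0\in X_2$, I will observe that inside the unital $C^*$-algebra $pC^*(\mathscr{G}_1\vee\mathscr{G}_2)$ (with unit $p$) the projections $\{e_{(x_1,x_2^0),a}\}_{a\in A_1}$ sum to $p$ for each $x_1$ and satisfy $e_{(x_1,x_2^0),a}\,e_{(y_1,x_2^0),b}=0$ whenever $V_1(a,b\mid x_1,y_1)=0$, since on such inputs the predicate of the ``or''-game agrees with $V_1$. The universal property of $C^*(\mathscr{G}_1)$ then yields a unique unital $\ast$-homomorphism
\[
\pi_1\colon C^*(\mathscr{G}_1)\longrightarrow pC^*(\mathscr{G}_1\vee\mathscr{G}_2)
\]
sending each generator of $C^*(\mathscr{G}_1)$ associated to question $x_1$ and answer $a$ to $e_{(x_1,x_2^0),a}$. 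The hypothesis gives $\rho(p)\ge\rho(e_{(x_1,x_2),a})>0$, so
\[
\rho_1(x):=\frac{\rho(\pi_1(x))}{\rho(p)}
\]
is a well-defined linear functional on $C^*(\mathscr{G}_1)$; unitality will be immediate, and positivity together with the trace property will be inherited from $\rho$ through $\pi_1$.

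It then remains to transfer amenability and finite-dimensionality from $\rho$ to $\rho_1$. In both cases, the GNS representation of $\rho_1$ should be realized by compressing the GNS representation $(\pi_\rho,\eta_\rho)$ of $\rho$ by the projection $\pi_\rho(p)$ and passing to the cyclic subspace generated by the renormalized vector $\pi_\rho(p)\eta_\rho/\sqrt{\rho(p)}$. If $\rho$ is finite-dimensional, this cyclic subspace is automatically finite-dimensional, forcing $\rho_1$ to be finite-dimensional as well. For amenability, the GNS von Neumann algebra of $\rho_1$ embeds as a finite von Neumann subalgebra of the corner $\pi_\rho(p)\,\pi_\rho(C^*(\mathscr{G}_1\vee\mathscr{G}_2))''\,\pi_\rho(p)$; corners of finite hyperfinite von Neumann algebras remain hyperfinite, and combining the existence of trace-preserving normal conditional expectations onto finite von Neumann subalgebras with Connes' theorem identifying hyperfiniteness and injectivity, any finite von Neumann subalgebra of a hyperfinite algebra is again hyperfinite, giving amenability of $\rho_1$. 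The hardest part, I expect, will be making the compression-and-cyclic-subspace identification cleanly rigorous; once that is done, the remaining steps reduce to standard results from \cite{brown88c}.
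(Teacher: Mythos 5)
Your argument takes essentially the same route as the paper's: isolate the projection $p=\sum_{a\in A_1}e_{(x_1,x_2),a}$, verify it is independent of $(x_1,x_2)$ (your computation and the paper's are the same), use the universal property of $C^*(\mathscr{G}_1)$ to obtain a $*$-homomorphism $\pi_1$ into the corner, and set $\rho_1=\rho\circ\pi_1/\rho(p)$. Your explicit observation that $p$ is central, and your detailed treatment of the amenability and finite-dimensionality transfers, usefully flesh out steps the paper leaves implicit (it only says ``the rest of the assertions follow similarly''); the one minor imprecision is that the GNS von Neumann algebra of $\rho_1$ is not literally an embedded subalgebra of the corner $\pi_\rho(p)\,\pi_\rho(C^*(\mathscr{G}_1\vee\mathscr{G}_2))''\,\pi_\rho(p)$, but rather a central cut-down (quotient) of the unital sub-von Neumann algebra $\bigl(\pi_\rho\circ\pi_1(C^*(\mathscr{G}_1))\bigr)''$ of that corner, which is still hyperfinite by the conditional-expectation argument you cite, so the conclusion stands.
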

\begin{proof}
    Consider the set of operators $E_{x,a_1} = e_{(x,x_2),a_1}$. It is evident that each $E_{x,a_1}$ is a projection and that $E_{x,a_1}E_{y,a_1'} = 0$ whenever $V_1(a_1,a_1' \mid x,y) = 0$. We claim that $\sum_{a_1} E_{x,a_1}$ is independent of $x$. Indeed, one has the following: 
    \begin{align*}
        \sum_{a \in A_1} E_{(x,x_2), a} & = \left(\sum_{a_1 \in A_1} E_{(x,x_2), a_1}\right)\left(\sum_{a_1 \in A_1} E_{(x',x_2), a_1} + \sum_{a_2 \in A_2} E_{(x',x_2), a_2}\right) \\
        & = \left(\sum_{a_1 \in A_1} E_{(x,x_2), a_1}\right)\left(\sum_{a_1 \in A_1} E_{(x',x_2), a_1}\right) \\
        & = \left(\sum_{a_1 \in A_1} E_{(x,x_2), a_1} + \sum_{a_2 \in A_2} E_{(x,x_2), a_2}\right)\left(\sum_{a_1 \in A_1} E_{(x',x_2), a_1}\right) \\
        & = \sum_{a_1 \in A_1} E_{(x',x_2), a_1}
    \end{align*}

    Hence, one can construct a $*$-homomorphism, say $\pi$, from $C^*(\mathscr{G}_1)$ to the $C^*$-algebra generated by the operators $\{E_{x,a_1}\}_{x \in X_1, a \in A_1}$. Then, the restriction of the state $\rho$ to the $C^*$-algebra generated by the operators $\{E_{x,a_1}\}_{x \in X_1, a \in A_1}$ composed with $\pi$ gives a tracial positive linear functional $\rho'$ on $C^*(\mathscr{G}_1)$. We now need to show that this functional is not trivial. This is indeed true, since $\rho(e_{(x_1,x_2),a}) > 0$. Hence, we can construct a tracial state $\rho_1$ on $C^*(\mathscr{G}_1)$ by normalising $\rho'$. The rest of other assertions follow similarly. 
\end{proof}

\begin{lemma}\label{lem:algebra-of-or-game}
Let $\mathscr{G}_1$ and $\mathscr{G}_2$ be synchronous games, and consider the game $\mathscr{G}_1 \vee \mathscr{G}_2$. For any $x_2 \in X_2$, define $E_{x_1,a_1} = e_{(x_1,x_2), a_1}$ for $x_1 \in X_1$ and $a_2 \in A_2$. Then, $E_{x_1, a_1}$ is independent of the choice $x_2$.  
\end{lemma}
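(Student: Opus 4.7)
The plan is to show directly that for any fixed $x_1 \in X_1$ and $a_1 \in A_1$, the projection $e_{(x_1, x_2), a_1}$ does not depend on the choice of $x_2 \in X_2$. Write $p = e_{(x_1, x_2), a_1}$ and $q = e_{(x_1, x_2'), a_1}$ for two arbitrary choices $x_2, x_2' \in X_2$; the goal is to prove $p = q$.

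The first step is to use the completeness relation $\sum_{b \in A} e_{(x_1, x_2'), b} = 1$ and expand
\begin{equation*}
    p \;=\; p \cdot 1 \;=\; p \sum_{b_1 \in A_1} e_{(x_1, x_2'), b_1} \;+\; p \sum_{b_2 \in A_2} e_{(x_1, x_2'), b_2}.
\end{equation*}
Each term $p \cdot e_{(x_1, x_2'), b_2}$ with $b_2 \in A_2$ vanishes, since the predicate of $\mathscr{G}_1 \vee \mathscr{G}_2$ forces $V(a_1, b_2 \mid \cdot, \cdot) = 0$ whenever the two answers belong to different sets. For $b_1 \in A_1 \setminus \{a_1\}$, the predicate reduces to $V_1(a_1, b_1 \mid x_1, x_1)$, which is $0$ by synchronicity of $\mathscr{G}_1$, so $p \cdot e_{(x_1, x_2'), b_1} = 0$ as well. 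Only the $b_1 = a_1$ term survives, giving $p = pq$.

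The second step is to apply the exact same argument with the roles of $x_2$ and $x_2'$ swapped, which yields $q = qp$. Taking adjoints of $p = pq$ and using that $p, q$ are self-adjoint projections gives $p = (pq)^* = q^* p^* = qp$. Combining this with $q = qp$ yields $p = q$, finishing the proof.

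I do not anticipate any serious obstacle here: the argument is a short algebraic manipulation that is essentially a variant of the computation already carried out in \cref{lem:trace-on-or-game}, except that instead of summing over $a_1 \in A_1$ we localize on a single answer $a_1$ and exploit the synchronicity of $\mathscr{G}_1$ at the question pair $(x_1, x_1)$ to kill the off-diagonal terms.
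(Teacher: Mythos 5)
Your argument is correct and matches the paper's computation in its essential step: expand $p = e_{(x_1,x_2),a_1}$ against the completeness relation $\sum_b e_{(x_1,x_2'),b} = 1$, kill the $A_2$ terms via the cross-answer clause of the predicate and the off-diagonal $A_1$ terms via synchronicity of $\mathscr{G}_1$ at $(x_1,x_1)$, obtaining $p = pq$. The only difference is in the closing move: the paper runs the same expansion a second time from the left to turn $pq$ into $q$, whereas you take adjoints and invoke symmetry to get $p = qp = q$; both are valid and equally short.
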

\begin{proof}
    Since $\mathscr{G}_1$ is synchronous, we can write the following for any $x_2, x_2' \in X_2$:
    \begin{align*}
    e_{(x_1, x_2), a_1} & = e_{(x_1,x_2), a_1} \left(\sum_{a_1'\in A_1} e_{(x_1, x_2'),a_1} + \sum_{a_2 \in A_2} e_{(x_1, x_2'), a_2}\right) \\
    & =  e_{(x_1,x_2), a_1} \left(\sum_{a_1'\in A_1} e_{(x_1, x_2'),a_1}\right) \\
    & = e_{(x_1,x_2), a_1} e_{(x_1, x_2'),a_1} \\
    & = \left(\sum_{a_1' \in A_1} e_{(x_1,x_2), a_1'}\right)e_{(x_1, x_2'),a_1} \\
    & = \left(\sum_{a_1' \in A_1} e_{(x_1,x_2), a_1'} + \sum_{a_2' \in A_2} e_{(x_1,x_2), a_2'}\right)e_{(x_1, x_2'),a_1} = e_{(x_1, x_2'),a_1}
    \end{align*}
\end{proof}

\begin{theorem}
    Let $\mathscr{G}_1$ and $\mathscr{G}_2$ be two synchronous games such that $\mathscr{G}_1$ is a synchronous game that is a robust self-test and $\mathscr{G}_2$ is a synchronous game that has a perfect qc-strategy but no perfect qa-strategy, then $C^*(\mathscr{G}_1 \vee \mathscr{G}_2)$ admits multiple tracial states exactly one of which is amenable. 
\end{theorem}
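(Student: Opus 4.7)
The plan is to translate the statement, via Theorem \ref{thm:synch-game-self-test-cond}, into a claim about traces on $C^*(\mathscr{G}_1 \vee \mathscr{G}_2)$: we must exhibit at least two tracial states and prove that exactly one of them is amenable. From the hypotheses, $C^*(\mathscr{G}_1)$ has a unique amenable trace $\tau_1$, and $C^*(\mathscr{G}_2)$ has a tracial state $\tau_2$ but no amenable trace.

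The first step is to construct natural $\ast$-homomorphisms $\phi_i \colon C^*(\mathscr{G}_1 \vee \mathscr{G}_2) \to C^*(\mathscr{G}_i)$ by sending $e_{(x_1,x_2),a}$ to $e_{x_i,a}$ when $a \in A_i$ and to $0$ otherwise. Well-definedness is a direct check against the defining relations of $\mathscr{G}_1 \vee \mathscr{G}_2$: the sum-to-one relation reduces to that of $\mathscr{G}_i$, same-type zero-relations match those in $\mathscr{G}_i$, and mixed-type zero-relations are automatic because one factor maps to $0$. Setting $\rho_i := \tau_i \circ \phi_i$ then gives traces on $C^*(\mathscr{G}_1 \vee \mathscr{G}_2)$. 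They are distinct: $\rho_1(e_{(x_1,x_2),a_1}) = \tau_1(e_{x_1,a_1})$ is strictly positive for some $a_1 \in A_1$ (these values sum to $1$), while $\rho_2(e_{(x_1,x_2),a_1}) = 0$ for all such $a_1$. The trace $\rho_1$ is amenable because pullback along a unital $\ast$-homomorphism preserves amenability via the Brown--Ozawa characterisation through continuity of the induced state on $\mathcal{A} \otimes_{\min} \mathcal{A}^{\op}$, while $\rho_2$ cannot be amenable: the $A_2$-symmetric version of Lemma \ref{lem:trace-on-or-game} would then yield an amenable trace on $C^*(\mathscr{G}_2)$, contradicting the hypothesis that $\mathscr{G}_2$ has no qa-strategy.

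For uniqueness of the amenable trace, let $\rho$ be any amenable trace on $C^*(\mathscr{G}_1 \vee \mathscr{G}_2)$. The same $A_2$-symmetric instance of Lemma \ref{lem:trace-on-or-game} forces $\rho(e_{(x_1,x_2),a_2}) = 0$ for every $a_2 \in A_2$. Since each $e_{(x_1,x_2),a_2}$ is then a projection with $\rho$-value $0$, the Cauchy--Schwarz bound $|\rho(ab)|^2 \leq \rho(aa^*)\rho(b^*b)$ shows that $\rho$ annihilates every monomial in the generators containing an $A_2$-letter; moreover every monomial mixing $A_1$- and $A_2$-letters is already $0$ in $C^*(\mathscr{G}_1 \vee \mathscr{G}_2)$ because mixed-type pairs are zero-relations. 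Thus $\rho$ is determined by its values on monomials in the $A_1$-generators, and by Lemma \ref{lem:trace-on-or-game} these values correspond to an amenable trace on $C^*(\mathscr{G}_1)$; by uniqueness of $\tau_1$ this trace equals $\tau_1$, forcing $\rho = \tau_1 \circ \phi_1 = \rho_1$.

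The main technical subtlety is the careful handling of amenability, which is needed in two places: one must know both that $\tau \circ \phi$ inherits amenability from $\tau$ along a unital $\ast$-homomorphism $\phi$, and that the pullback trace produced in Lemma \ref{lem:trace-on-or-game} genuinely preserves amenability in the ``moreover'' clause. Both facts rely on the characterisation of amenability via continuity on the min-tensor norm (Brown--Ozawa Theorem 6.2.7); once they are in place, what remains is essentially a partition of monomials into $A_1$- and $A_2$-types and the observation that cross-products vanish in $C^*(\mathscr{G}_1 \vee \mathscr{G}_2)$.
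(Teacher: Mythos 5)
Your proof follows essentially the same route as the paper's: exhibit the amenable trace $\rho_1 = \tau_1 \circ \phi_1$ coming from $\mathscr{G}_1$ and a non-amenable trace $\rho_2 = \tau_2 \circ \phi_2$ coming from a tracial state $\tau_2$ on $C^*(\mathscr{G}_2)$, and for uniqueness show that any amenable trace must vanish on the $A_2$-part (via Lemma~\ref{lem:trace-on-or-game} and the hypothesis that $\mathscr{G}_2$ has no qa-strategy) and pull back to the unique amenable trace $\tau_1$ on $C^*(\mathscr{G}_1)$. The paper constructs the second trace somewhat indirectly by building a representation of $C^*(\mathscr{G}_1 \vee \mathscr{G}_2)$ out of a GNS representation of $\tau_2$; your explicit $\phi_2$ is the same map seen more transparently. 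The functoriality-of-$\otimes_{\min}$ argument for amenability of $\rho_1$, the distinctness of $\rho_1$ and $\rho_2$, and the non-amenability of $\rho_2$ by the $A_2$-symmetric instance of Lemma~\ref{lem:trace-on-or-game} all check out.

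There is, however, one gap in the uniqueness step. After establishing that an amenable trace $\rho$ annihilates every monomial containing an $A_2$-letter, you conclude that $\rho$ is determined by its values on pure $A_1$-monomials and that Lemma~\ref{lem:trace-on-or-game} identifies these values with $\tau_1$, forcing $\rho = \tau_1 \circ \phi_1$. But the homomorphism $\pi \colon C^*(\mathscr{G}_1) \to C^*(\mathscr{G}_1 \vee \mathscr{G}_2)$ built in the proof of Lemma~\ref{lem:trace-on-or-game} fixes one $x_2 \in X_2$, so $\rho \circ \pi = \tau_1$ pins $\rho$ down only on the subalgebra $\mathcal{A}$ generated by $\{e_{(x_1,x_2),a_1}\}_{x_1 \in X_1, a_1 \in A_1}$ with that single $x_2$. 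A pure $A_1$-monomial can a priori mix generators $e_{(x_1,x_2),a_1}$ and $e_{(x_1',x_2'),a_1'}$ with $x_2 \neq x_2'$, and nothing you have said identifies $\rho$ on such a word with $\rho_1$. The missing ingredient is Lemma~\ref{lem:algebra-of-or-game}, which shows $e_{(x_1,x_2),a_1} = e_{(x_1,x_2'),a_1}$ in $C^*(\mathscr{G}_1 \vee \mathscr{G}_2)$ for all $x_2, x_2' \in X_2$, so every pure $A_1$-monomial already lies in $\mathcal{A}$. Once you invoke that lemma, the argument closes and matches the paper's.
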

\begin{proof}
    Let $\rho$ be any amenable trace on $C^*(\mathscr{G}_1 \vee \mathscr{G}_2)$, and let $\widetilde{\rho}$ be the unique amenable on $C^*(\mathscr{G}_1)$. Since $\mathscr{G}_2$ has no perfect qa-strategy, $C^*(\mathscr{G}_2)$ admits no amenable traces, by \cref{lem:trace-on-or-game}, one has $\rho(e_{(x_1,x_2),a_2}) = 0$ for all $x_1 \in X_1$, $x_2 \in X_2$ and $a_2 \in A_2$. In particular, $\rho$ is zero on any monomials containing generators of the form $e_{(x_1,x_2),a_2}$ for any $x_1 \in X_1$, $x_2 \in X_2$ and $a_2 \in A_2$. 
    
    Hence, $\rho$ is only non-zero on monomials in the generators $e_{(x_1,x_2),a_1}$ for any $x_1 \in X_1$, $x_2 \in X_2$ and $a_1 \in A_1$, so that the support of $\rho$ is contained in the $C^*$-algebra generated by $\{e_{(x_1,x_2),a_1}\}$ such that $x_1 \in X_1$, $x_2 \in X_2$ and $a_1 \in A_1$. It follows from \cref{lem:algebra-of-or-game} that $e_{(x_1,x_2),a_1} = e_{(x_1, x_2'), a_1}$ for any $x_2, x_2' \in X_2$, so that we can take the support of $\rho$ to be a subalgebra of the $C^*$-algebra generated by $\{e_{(x_1, x_2), a_1}\}$ for a fixed $x_2$. Let us denote this $C^*$-algebra by $\mathcal{A}$. Then, there is a fixed surjective $\ast$-homomorphism $\pi$ from $C^*(\mathscr{G}_1)$ to $\mathcal{A}$, mapping $e_{x_1,a_1} \mapsto e_{(x_1,x_2),a_1}$.  
    
    In particular, $\rho \circ \pi$ is an amenable trace on $C^*(\mathscr{G}_1)$, so that $\rho \circ \pi = \widetilde{\rho}$. If $\rho_1$ and $\rho_2$ are two distinct amenable traces on $C^*(\mathscr{G}_1 \vee \mathscr{G_2})$, then there exist some $a \in \mathcal{A}$ such that $\rho_1(a) \neq \rho_2(a)$. However, since $\pi$ is surjective, this would imply that there exist some $a' \in C^*(\mathscr{G}_1)$ such that $\rho_1(\pi(a')) \neq \rho_1(\pi(a'))$, which is a contradction. Hence, it follows that $C^*(\mathscr{G}_1 \vee \mathscr{G}_2)$ admits a unique amenable trace. 

    It remains to show that $C^*(\mathscr{G}_1 \vee \mathscr{G}_2)$ admits multiple traces. This is easy to see. Let $\rho$ be be a state on $C^*(\mathscr{G})$ and let $(\pi_{\rho}, \ket{\eta_\rho})$ for $\ket{\eta_\rho} \in \mathcal{H}_{\rho}$, denote its GNS representation. Now, define a representation $\pi': C^*(\mathscr{G}_1 \vee \mathscr{G}_2) \to \mathcal{H}_{\rho}$ by $\pi'(e_{(x_1, x_2), a_2}) = \pi_{\rho}(e_{x_2, a_2})$ and $\pi'(a) = 0$ for all other generators of $C^*(\mathscr{G}_1 \vee \mathscr{G}_2)$. It is now easy to see that $a \mapsto \braket{\eta_\rho\mid \pi'(a) \mid \eta_\rho}$ defines a state on $C^*(\mathscr{G}_1 \vee \mathscr{G}_2)$ that is different from the amenable trace we considered earlier.  
\end{proof}

\section{Discussion}

An example of a game that is a self-test, but is not a robust self-test is known from \cite{mancinska_counterexamples_2023}. However, the analogous question for correlations still remains open. Hence, we ask : 

\begin{quest}
    Is there a correlation $p$ that is a self-test, but is not a robust self-test?
\end{quest}

In light of \cref{thm:counter-example}, one can also ask: 
\begin{quest}
    Is there a correlation $p$ that is a robust self-test, but is not a commuting operator self-test?
\end{quest}

As a result of \cref{thm:main-result} and \cref{thm:synch-game-self-test-cond}, these questions can be reduced to questions about states on $C^*$-algebras, thus providing a potential way of solving these questions. 

Another natural question to ask is if the converse of \cref{prop:robust-self-test-implies-abstract-self-test} holds true, i.e., does abstract state self-testing over all states on $\mathcal{A}_{XA}^{POVM} \otimes_{\min} \mathcal{A}_{XA}^{POVM}$ or over all states on $\mathcal{A}_{XA}^{POVM} \otimes_{\max} \mathcal{A}_{XA}^{POVM}$ imply robust self-testing? 

\bibliography{main}

\end{document}